\DeclareMathOperator*{\argmin}{arg\,min}
\DeclarePairedDelimiter\floor{\lfloor}{\rfloor}
\newcommand{\norm}[1]{\left\lVert#1\right\rVert}
\newtheorem{theorem}{Theorem}
\newtheorem{corollary}{Corollary}
\newtheorem{definition}{Definition}
\newtheorem{lemma}{Lemma}
\newtheorem{proposition}{Proposition}
\newtheorem{remark}{Remark}
\newcommand*\diff{\mathop{}\!\mathrm{d}}
\begin{document}

\title[Ramanujan meets time-frequency analysis]{When Ramanujan meets time-frequency analysis in complicated time series analysis}

\author{Ziyu Chen}
\address{Department of Mathematics, Duke University, Durham, NC, USA}
\email{ziyu@math.duke.edu}

\author{Hau-Tieng Wu}
\address{Department of Mathematics and Department of Statistical Science, Duke University, Durham, NC, USA. Mathematics Division, National Center for Theoretical Sciences, Taipei, Taiwan}
\email{hauwu@math.duke.edu}

\maketitle

\begin{abstract}
To handle time series with complicated oscillatory structure, we propose a novel time-frequency (TF) analysis tool that fuses the short time Fourier transform (STFT) and periodic transform (PT).
Since many time series oscillate with time-varying frequency, amplitude and non-sinusoidal oscillatory pattern, a direct application of PT or STFT might not be suitable. However, we show that by combining them in a proper way, we obtain a powerful TF analysis tool.
We first combine the Ramanujan sums and $l_1$ penalization to implement the PT. We call the algorithm Ramanujan PT (RPT). The RPT is of its own interest for other applications, like analyzing short signal composed of components with integer periods, but that is not the focus of this paper. 
Second, the RPT is applied to modify the STFT and generate a novel TF representation of the complicated time series that faithfully reflect the instantaneous frequency information of each oscillatory components. 
We coin the proposed TF analysis the Ramanujan de-shape (RDS) and vectorized RDS (vRDS).
In addition to showing some preliminary analysis results on complicated biomedical signals, we provide theoretical analysis about RPT. Specifically, we show that the RPT is robust to three commonly encountered noises, including envelop fluctuation, jitter and additive noise. 
\newline
\newline
\textbf{Keywords}: periodicity transform, Ramanujan sums, $l^1$ regularization, time-frequency analysis, de-shape, Ramanujan de-shape
\end{abstract}

\section{Introduction}\label{Sect:Introduction}

Oscillatory signals are ubiquitous in our life. For example, signals in seismology, finance and medicine are usually oscillatory. They are often ``non-stationary''\footnote{Technically speaking, a random process that is not stationary in the wide sense is non-stationary. In our setup, the meaning of non-stationary is more qualitative and colloquial. A rigorous quantification of our model will be shown later.}, composed of multiple oscillatory components, and each oscillatory component carries time-varying features, like time-varying frequency, time-varying amplitude, or even time-varying oscillatory pattern. In practice, these time-varying quantities encode abundant information of the underlying system. A common interest in signal processing is extracting those time-varying quantities from the given signal. With these quantities, researchers can infer the underlying system behavior, or forecast the upcoming events.

To appreciate the challenge of achieving the above-mentioned signal processing interests, see the trans-abdominal maternal electrocardiogram (ta-mECG) signal shown in Figure \ref{fig:1}.
In this example, the signal is clearly oscillatory, and we briefly summarize its characteristics. 
\begin{enumerate}
\item [(C1)] It is composed of two oscillations; one is the maternal ECG (mECG), and the other one is the fetal ECG (fECG). Moreover, there is a slowly varying trend, which is usually called the baseline wandering.

\item [(C2)] Although it is not easy to identify from the signal, according to physiological knowledge, the cycle-to-cycle period changes from time to time, both in mECG and fECG. This variation is usually known as the {\em heart rate variability} (HRV) \cite{electrophysiology1996heart}. 

\item [(C3)] The oscillation is far from sinusoidal, whose ``shape'' encodes the heart's electrophysiological information, like arrhythmia. Moreover, due to the nonlinear relationship between the cycle-to-cycle period and the ventricular electrophysiological dynamics, the cycle morphology changes from cycle to cycle as well \cite{malik2002relation}. This variation can be quantified as the QT interval variability \cite{baumert2016qt}. 

\item [(C4)] The amplitude of each cycle changes due to the respiration-induced impedance variation \cite{schmidt2017ecg}. 
\end{enumerate}
These four characteristics are time-varying, and quantifications of these characteristics allow researchers or clinicians to understand more health-related information.
There are many other signals sharing similar complicated structures, not only from medicine. We need a systematic signal processing tool to extract the information we have interest.
In this paper, we propose a novel time-frequency (TF) analysis tool aiming to extract how fast the signal oscillates at each moment.

\begin{figure}[!htbp]
\includegraphics[trim=50 210 50 220,clip,width=1\textwidth]{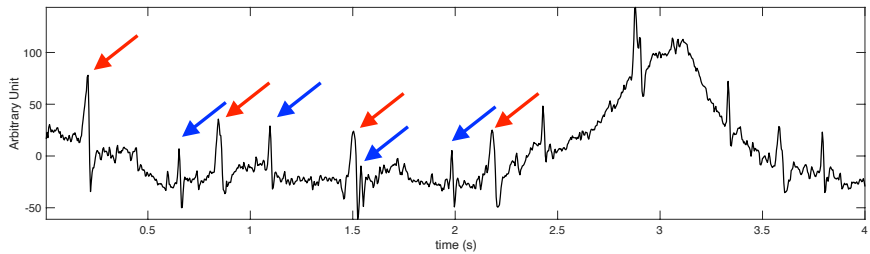}
\caption{The ta-mECG signal of 4 seconds long. The signal is sampled at 250Hz. The spikes indicated by red arrows are mECG, and those spikes indicated by blue arrows are fECG.}
\label{fig:1}
\end{figure}

TF analysis extracts the temporal and spectral information from the signal simultaneously. The short-time Fourier transform (STFT) and continuous wavelet transform (CWT) are probably the most well known TF analysis tools. Usually, a TF analysis tool converts the given signal, a function defined on $\mathbb{R}$, which parametrizes time, into a TF representation (TFR), a function defined on $\mathbb{R}^2$, which parametrizes time and frequency. The user can then read the oscillatory information from the TFR; for example, how fast and strong the signal oscillates at a given time, or if there is a singularity. Usually, the modulation of the TFR is represented as an image for users' visual inspection; for example, the squared modulation of STFT is the widely applied spectrogram, and the squared modulation of CWT is the scalogram.

In this paper, we focus on the common challenge all these TF tools face when the signal does not oscillate sinusoidally, as is illustrated in (C3) above. Since the oscillation is not sinusoidal, in the frequency domain the oscillatory pattern is ``automatically'' decomposed into multiple oscillatory components via the Fourier series, and this ``automatic'' decomposition generates a lot of trouble when we interpret the output TFR. 
Let us take a look at this fact mathematically. Consider a function that oscillates regularly with a non-sinusoidal pattern that repeats every $1/\xi_0$ seconds, where $\xi_0>0$ can be viewed as the frequency of the oscillation, and the strength is $A>0$. We represent the non-sinusoidal oscillatory pattern by a $1$-periodic function $s$, and to simplify the discussion, we assume it has sufficient regularity so that it can be pointwisely expanded by its Fourier series; that is, $s(t)=\sum\limits_{n=0}^{\infty}a_n\cos(2\pi n t + \alpha_n )$, where $a_n\geq 0$, and $\alpha_n$ is the phase and $\alpha_0 = 0$. In this setup, the oscillatory signal can be written as
\[
f(t) = A s(\xi_0 t) = \sum_{n=0}^{\infty}Aa_n\cos(2\pi n \xi_0 t + \alpha_n ),
\]
where each term $Aa_n\cos(2\pi n \xi_0 t + \alpha_n )$ can be viewed as an oscillatory component, $n\geq 1$. By a simple calculation, we get the STFT with a window function $g$:
\begin{equation}\label{Equation STFT first example}
V_f^{(g)}(t,\xi)=Aa_0 e^{-2\pi i t \xi}\hat{g}(\xi)+\frac{A}{2}\sum_{n\in \mathbb{Z},n\neq 0} a_n e^{i\text{sign}(n)\alpha_n} \hat{g}(\xi- n\xi_0)e^{2\pi i t(\xi-n\xi_0)}\,.
\end{equation}
Clearly, we will see multiple components in the spectrogram, the squared magnitude of $V_f^{(g)}(t,\xi)$.
Figure \ref{fig:shape} gives an example of the spectrogram of a signal $f(t) = s(\phi(t))$, where $s$ is a non-sinusoidal function and $\phi(t)$ is a smooth and monotonically increasing function. We can clearly observe multiple components. If our interest is extracting how fast or how large the signal oscillates at each time, this TFR might be misleading and difficult to interpret.

\begin{figure}[!htbp]
	\includegraphics[trim=60 20 70 20,clip,width=1\textwidth]{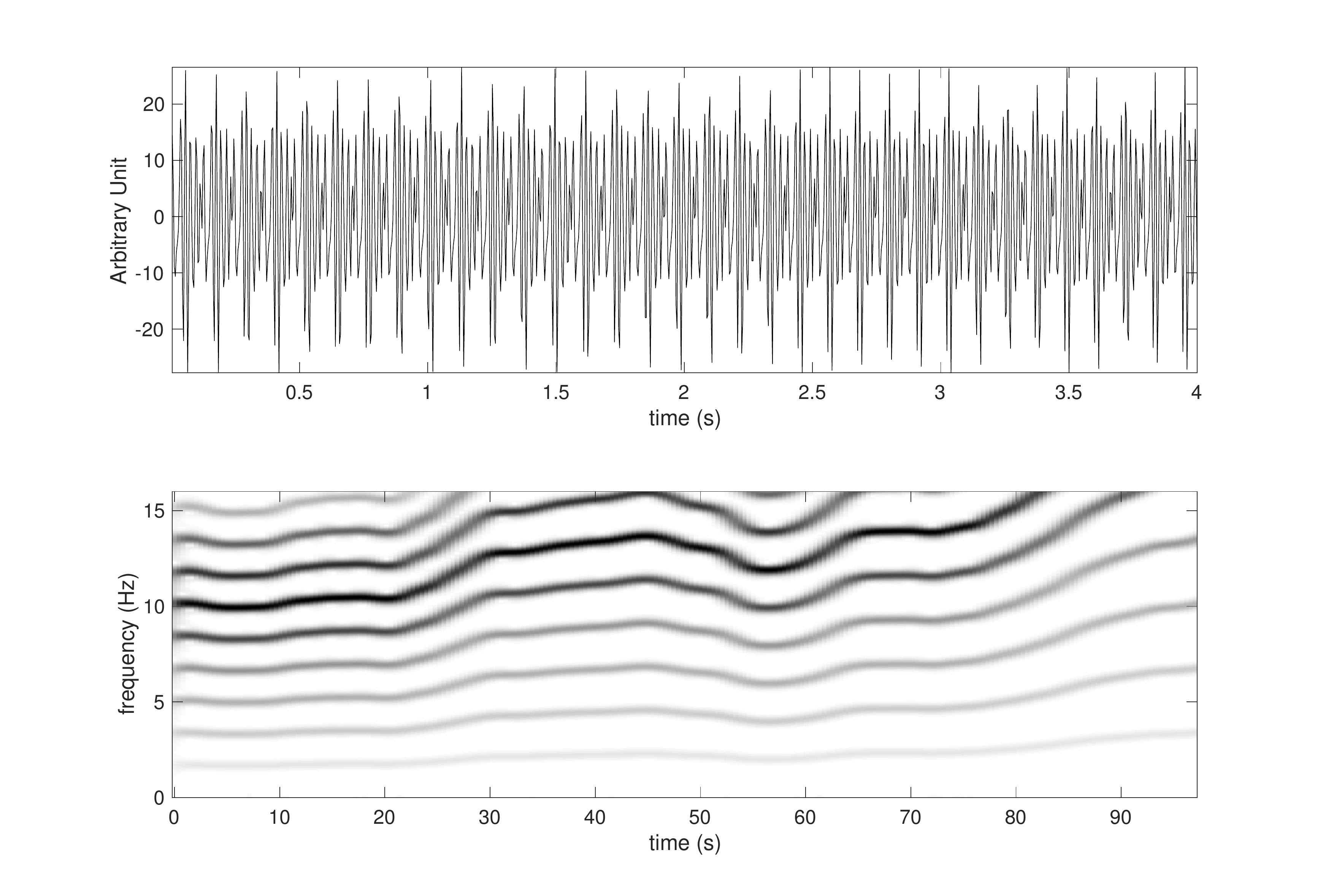}
	\caption{Top: plot of $f(t) = s(\phi(t))$ up to 4 seconds; bottom: the spectrogram of $f(t)$, where $s(t) = \cos(2\pi t) + 2\cos(4\pi t+2) + 3\cos(6\pi t+1) + 4\cos(8\pi t) + 7\cos(10\pi t) - 10\cos(12\pi t) + 8\cos(14\pi t) + 6\cos(16\pi t-1.5) + 3\cos(18\pi t+0.8) + \cos(20\pi t-0.2)$.}
	\label{fig:shape}
\end{figure}

To appreciate how this trouble impacts the interpretation for real data, see Figure \ref{fig:2} for the spectrogram of the ta-mECG shown in Figure \ref{fig:1}. It is clear that there are various ``curves'' in the TF domain. Since we have ground truth information for this signal (the direct contact fECG is available in this database), we know that those curves indicated by blue arrows are associated with the mECG, and those light curves indicated by red arrows are associated with the fECG. These curves come from the non-sinusoidal oscillatory patterns of mECG and fECG.
However, if we are interested in getting the maternal instantaneous heart rate (IHR) and the fetal IHR for various clinical applications, like maternal stress detection \cite{lobmaier2019fetal}, it is difficult to directly extract this information from the TFR. Note that IHR quantifies how fast the signal oscillates at each moment.
Although not shown, we mention that the scalogram and other TFRs face the same challenge. A summary and comparison of different techniques of handling the single-channel ta-mECG is out of the scope of this paper, and we refer readers with interest to \cite{su2019recovery} and citations therein.

A natural question to ask is if it is possible to come out with a TF analysis tool that is immune to the non-sinusoidal oscillatory patterns? In \cite{Deshape}, the {\em de-shape} algorithm was introduced to handle this challenge. It modifies the well known cepstrum idea, called {\em inverse cepstrum}, to eliminate the impact of non-sinusoidal oscillatory pattern in the TFR. While the de-shape algorithm has been applied to study various biomedical signals, like the ta-mECG \cite{deshapeapp}, the respiratory signal \cite{recycling}, and the sawtooth artifacts in the pulse transit time \cite{lin2019unexpected}, it has several limitations. 
For example, when the oscillatory pattern is also time varying, the inverse cepstrum might not perform well; due to the interaction of the inverse cepstrum and the non-sinusoidal oscillation, the TFR from the de-shape algorithm might not correctly encode the oscillatory strength information. We will elaborate its limitations in more detail later in Section \ref{Section: dsAlgorithm Intro}.

To resolve these limitations, we propose a novel TF analysis algorithm inspired by the periodic transform (PT) that has been actively studied in the signal processing society \cite{sethares1999,PPV1,PPV2,PPV3,unified,iMUSIC}. Based on the nice properties of the Ramanujan sum and Ramanujan subspace \cite{PPV1,PPV2,PPV3,unified}, we propose a $l_1$ minimization approach for the PT under the Ramanujan subspace, which we call the Ramanujan periodic transform (RPT). Based on the RPT, and a critical observation of the periodic structure in the spectrogram, we propose a novel TF analysis algorithm called the {\em Ramanujan de-shape} (RDS), and its vectorization version called the {\em vectorized Ramanujan de-shape} (vRDS). We also provide theoretical support for the RPT, particularly three robustness results. Specifically, we show that the RPT is robust to envelope fluctuation, jitter, and additive noise. These results explain the performance of RDS and  vRDS. We also demonstrate its performance by showing some preliminary results on real biomedical signals.

We will use following notations throughout this paper. Denote $\mathbb{N}^+=\{1,2,\ldots\}$ be the set of positive natural numbers, and $\mathbb{N}=\mathbb{N}^+\cup \{0\}$. $(p,q)$ denotes the {\em greatest common divisor} of two integers $p$ and $q$ and $lcm$ stands for {\em least common multiple}. For a real number $x$, $\floor*{x}$ denotes the largest integer no greater than $x$. For two integers $p$ and $q$, $p|q$ denotes that $p$ is a divisor of $q$. Given two sequences $f(n)$ and $g(n)\geq 0$, we say $f(n)=O(g(n))$ if there exists a constant $c_1>0$ such that $|f(n)| \leq c_1 g(n)$; $f(n)=\Omega(g(n))$ if there exists $c_2>0$ such that $|f(n)| \geq c_2 g(n)$; $f(n)=\Theta(g(n))$ if there exists $c_1$ and $c_2$ such that $c_1|g(n)|\leq |f(n)|\leq c_2|g(n)|$.

The paper is organized in the following way. In Section \ref{Section mathematical background}, we summarize necessary mathematical background, including the mathematical model we consider for the complicated time series and the basic framework for the PT. In Section \ref{Section Proposed time-frequency analysis algorithm}, we briefly review the de-shape algorithm and then introduce our RDS and  vRDS algorithms. We illustrate the effectiveness of RDS and  vRDS by several numerical examples in Section \ref{Section Numerical results}. We leave the proofs of the theoretical foundations of RDS in Section \ref{Section Theoretical results}. 
Section \ref{Section Discussion} summarizes the paper.
\section{Mathematical background}\label{Section mathematical background}

In this section, we summarize the model we choose to model high-frequency oscillatory time series. Then we summarize an existing algorithm aiming to handle this kind of signal, and describe its limitation as a motivation for the proposed algorithm in this study.

\subsection{Adaptive harmonic model}

In \cite{SST}, motivated by capturing time-varying amplitude and frequency of a given time series, Daubechies et al. consider the {\em intrinsic mode type} (IMT) function:
\begin{equation}\label{AHM} 
f(t) = A(t)\cos(2\pi\varphi(t)),
\end{equation}
where $A(t)>0$ is a smooth function that represents the time-varying amplitude of the signal, and $\varphi(t)$ is a smooth monotonically increasing function that describes the phase of the signal. We call $A(t)$ the \textit{amplitude modulation} (AM) and $\varphi'(t)$ the \textit{instantaneous frequency} (IF). In practice, $\frac{1}{\varphi'(t)}$ can be understood as the {\em instantaneous period} (IP) of the oscillation at time $t$. We need some conditions for $A$ and $\varphi$, otherwise we cannot identify $A$ and $\varphi$ in general. A common condition is assuming that $A$ and $\varphi'$ changes {\em slowly} in the following sense. Fix a small $\epsilon>0$, we assume $|A'(t)|\leq \epsilon \varphi'(t)$ and $|\varphi''(t)|\leq \epsilon \varphi'(t)$ for all $t\in \mathbb{R}$. Under this condition, the identifiability of $A$, $\varphi$ and $\varphi'$ of the IMT function has been proved in \cite{Chen_Cheng_Wu:2014}.
A signal satisfies the \textit{adaptive harmonic model (AHM)} if it is composed of one or multiple IMT functions, 
\begin{equation}\label{AHM2}
f(t) = \sum_{l=1}^Lf_l(t)+\Phi(t),
\end{equation}
where $L\in \mathbb{N}^{+}$, $f_l=A_l(t)\cos(2\pi\varphi_l(t))$ is an IMT function, with the assumption that $\varphi_l'(t)>\varphi_{l-1}'(t)$ and the separation of $\varphi_l'(t)$ and $\varphi_{l-1}'(t)$ is uniformly bounded from below, and $\Phi(t)$ is a mean zero random process that models noise or other stochastic quantity that we have interest. Note that $\Phi$ does not need to be stationary, and we may even consider the piecewise locally stationary random process \cite{zhou2013heteroscedasticity} to more realistically model noise. See \cite{Chen_Cheng_Wu:2014} for more discussion when the noise is nonstationary. For readers with interest in the identifiability issue of the AHM, see \cite{Chen_Cheng_Wu:2014} for details and proof.
The AHM has been proved useful to model various types of time series from different fields, including instantaneous heart rate \cite{SSTAPP2}, vibration signal \cite{SSTAPP3}, and seismic signal \cite{SSTAPP6}. 

\subsection{Adaptive non-harmonic model}
The AHM has been considered in various applied problems. However, there are many oscillatory signals that cannot be {\em satisfactorily} modeled by the AHM, since their oscillatory pattern is non-sinusoidal. The ta-mECG shown in Figure \ref{fig:1} is a typical example that cannot be satisfactorily modeled by the AHM. 
The main observation that the oscillatory pattern is non-sinusoidal motivates us to replace the cosine function in the IMT function by a 1-periodic function to capture this non-sinusoidal oscillatory pattern \cite{Wu2013}:
\begin{equation}\label{ANHM} 
f(t) = A(t)s(\varphi(t)),
\end{equation}
where $A(t)$ and $\varphi(t)$ are the same as those of (\ref{AHM}), and $s(t)$ is a real $1$-periodic function. Here, $s(t)$ is called the \textit{wave-shape function}. Usually we assume that the wave-shape function has zero mean and unitary $L^2$ norm to avoid the identifiability issue. In \cite{Wu2013}, the first Fourier coefficient of $s$ is assumed to be $\hat{s}(1) \neq 0$, and this condition is later relaxed in \cite{Yang2}.
We mention that for a reasonable $s$, $f$ could be expanded by Fourier series as:
\begin{equation}\label{eq3} 
f(t) = \sum_{n=0}^{\infty}A(t)a_n\cos(2\pi n \varphi(t)+ \alpha_n),
\end{equation}
where $a_n \geq 0$ are associated with the Fourier coefficients of $s$ and $\alpha_n \in [0,2\pi)$ ($\alpha_0 = 0$). Note that the equality in \eqref{eq3} in general has to be understood in the distribution sense, and this equality is in the pointwise sense if $s$ is smooth enough. 
Therefore, instead of viewing $f$ in (\ref{ANHM}) as an oscillatory signal with single oscillatory component with non-sinusoidal oscillation, we could also view $f$ as an oscillatory signal with multiple components with sinusoidal oscillation pattern as in (\ref{AHM}). Under this interpretation, we call the first oscillatory component $A(t)a_1 \cos(2\pi \varphi(t) + \alpha_1)$ the \textit{fundamental component} and $A(t)a_n\cos(2\pi n \varphi(t)+ \alpha_n)$, $n\geq 2$ the \textit{$n$-th multiple} (or harmonic) of the 
fundamental component. We call the IF $\varphi'(t)$ of the fundamental component the {\em fundamental IF} of $f$ , and clearly, the IF of the $n$th multiple is $n$-times the fundamental frequency. 
A signal satisfies the \textit{adaptive non-harmonic model (ANHM)} if it can be written as
\begin{equation}\label{ANHM2}
f(t)=\sum_{l=1}^Lf_l(t)+\Phi,
\end{equation}
where $L\in \mathbb{N}^{+}$, $f_l=A_l(t)s_l(\varphi_l(t))$ satisfies \eqref{eq3}, and $\Phi$ is the same as that in \eqref{AHM2}. This model is also considered in \cite{HouShi}. To model the ta-mECG shown in Figure \ref{fig:1}, we have $L=2$, and $s_l$ captures the typical ``P-QRS-T pattern'' of the ECG signal. With this model, the signal processing challenge can be itemized to the following -- 
\begin{enumerate}
\item [(Q1)] how to estimate the maternal IHR $\varphi_1'(t)$ and fetal IHR $\varphi_2'(t)$ from $f(t)$? 

\item [(Q2)] how to estimate the maternal AM $A_1(t)$ and fetal AM $A_2(t)$ from $f(t)$? 

\item [(Q3)] how to estimate the maternal wave-shape dynamics $s_1$ and fetal wave-shape dynamics $s_2$ from $f(t)$? 

\item [(Q4)] may we even decompose the ta-mECG ($f(t)$) to the mECG ($f_1(t)$) and the mECG ($f_2(t)$)? This question is called the {\em single channel blind source separation} (scBSS) problem since we only have one channel.
\end{enumerate}
Note that (Q1)-(Q4) are typical questions we may ask for any other oscillatory time series, while the interpretation of these quantities depends on the background knowledge and purposes.
In this study, we focus on answering question (Q1), which serves as the initial point to answer questions (Q2), (Q3) and (Q4) \cite{su2019recovery,lin2019wave}.

\begin{remark}
We have several remarks.
\begin{enumerate}
\item We mentioned that the ta-mECG may be {\em satisfactorily} modeled by the AHM, like the expansion shown in \eqref{eq3}. The main reason about the {\em satisfaction} depends on the background knowledge. In many problems like the ECG analysis, how the signal oscillates provides abundant information about the system. Specifically, it is the oscillatory pattern of an ECG signal that a physician diagnose arrhythmia or other health issues. Therefore, for these signals, we should model how fast the signal oscillates and the oscillatory pattern separately, instead of using the AHM. 

\item The model in (\ref{ANHM}) has been further generalized to capture the ``time-varying oscillatory pattern'' in \cite{Deshape,Yang2,lin2019wave}. See \cite{Deshape,lin2019wave} for more discussion about its physiological motivation and generalization. Among these generalized models, the one proposed in \cite{lin2019wave} is designed to further quantify the wave-shape dynamics mentioned in (Q3), and it has been applied to study the ultra-long time physiological time series \cite{wang2019novel}. Since taking these generalizations into account will not make a conceptual difference, we focus on the simple ANHM model in this paper.

\item In practice we have trend in the signal that needs to be modeled \cite{Chen_Cheng_Wu:2014}. In this study, to simplify the discussion and focus on the main idea beyond the algorithm for (Q1), we focus on the ANHM considered in (\ref{ANHM2}) and assume that the trend can be removed by traditional filtering techniques. 
\end{enumerate}
\end{remark}

\subsection{Periodicity transform: old and new}

Before handling signals described in the previous section, 
we detour to review a relevant signal processing tool, the periodic transform (PT) \cite{sethares1999}, and propose a new implementation of the PT. In a nutshell, the PT is a time domain technique aiming to quantifying the oscillatory behavior of a given signal. It helps us obtain the periods of each oscillatory components. It is particularly useful when the periodicity is integer or can be well approximated by an integer \cite{iMUSIC}, and when we have a short recording \cite{PPV3} so that frequency domain or TF domain techniques are limited. 
It has been applied to study various signals, ranging from music \cite{music2001}, video \cite{video}, to DNA sequence analysis \cite{DNA1, DNA2, DNA3}.

The reason we detour to discuss PT comes from the critical observation in Figure \ref{fig:shape} in the Introduction. Recall that there is a repetition pattern in the frequency axis of spectrogram shown in Figure \ref{fig:shape}, where the period of the repetition pattern equals the fundamental frequency. This motivates us to consider the PT that aims to obtain the hidden periodicity, and hence the fundamental frequency. Specifically, as we will show in the next section, we will analyze the spectrogram at time $t$, $|V_f^{(h)}(t,\cdot)|$ shown in \eqref{Equation STFT first example}, by the PT. 

Intuitively, if we can find a good dictionary encoding the periodicity information, then by decomposing the signal via the associated space in some special way, we can determine if the signal contains an oscillatory component and decide its periodicity. Such intuition is now known as the \textit{Basis Pursuit} \cite{bp}. 
To realize this intuition, we need to find the space encoding the periodicity information. First, recall that a vector $v\in \mathbb{R}^n$ is called $p$-periodic, where $p\leq n$, if $p$ is the smallest positive integer so that $v(j+p)=v(j)$ for any $j=1,\ldots,n-p$. An intuitive space that encodes the information of period $p$ can be defined in the following way.

\begin{definition}[Periodic space \cite{sethares1999}]
Fix $n\in \mathbb{N}^{+}$ and take the period $p\leq n$. The {\em periodic subspace} of period $p$ of $\mathbb{R}^n$, denoted as
$\mathcal{P}_p$, consists of $p$-periodic vectors, denoted as $\delta_{p}^{s}$, $s = 0, 1, 2, \dots, p-1$, and defined by 
\[
\delta_{p}^{s}(j) = \begin{cases}
1, & \text{$(j-s)$ mod $p = 0$};\\
0, & \text{otherwise}.
\end{cases}
\] 
We call $\delta_{p}^{s}$ the $p$-periodic basis vector.
\end{definition}
Note that this basis of $p$-periodic subspace $\mathcal{P}_p$ is redundant in that $\mathcal{P}_{m_1p} \cap \mathcal{P}_{m_2p} = \mathcal{P}_{p}$, if $m_1$ and $m_2$ are mutually prime. This could make the estimation of periodicity fail since there is a strong correlation and overlap between subspaces $\mathcal{P}_{m_1p}$ and $\mathcal{P}_{m_2 p}$. In \cite[IV.A]{PPV3}, this basis is called the {\em natural basis}, which is a special case of {\em nested periodic subspaces}. 

To tackle this problem, Vaidyanathan \cite{PPV1,PPV2} considered the {\em Ramanujan sums} and the {\em $p$-periodic Ramanujan subspace} so that a $p$-periodic sequence can be uniquely decomposed into periodic components in the $p$-periodic Ramanujan subspace and those Ramanujan subspaces with periods being factors of $p$.  
We mention that Ramanujan sums has been applied to various problems, ranging from low frequency noise analysis \cite{Planat2002}, $1/f$ noise analysis \cite{Planat2009}, electrocardiogram signal \cite{mainardi2008analysis}, amino acid sequences \cite{mainardi2007application}, etc. Moreover, the Ramanujan subspace is also a special case of nested periodic subspaces, which contains several nice properties \cite{PPV3} compared with other nested periodic subspaces. For more theoretical details about nested periodic subspaces and its relationship with several other existing work, we refer readers with interest to \cite{unified}. We now briefly review the theory of Ramanujan sums and Ramanujan subspaces. More details about Ramanujan sums and Ramanujan subspaces and historical notes can be found in \cite{PPV1, PPV2, PPV3} and the citation therein. 

\begin{definition}[Euler totient function]
The {\em Euler totient function} $\phi(n)$, where $n\in\mathbb{N}^{+}$, is defined as the number of positive integers $m\in\mathbb{N}^{+}$, $m\leq n $, such that $m$ is coprime to $n$; that is, $(m,n) = 1$. The {\em totient summation function} is defined as 
\[
\Phi(n) := \sum_{i=1}^{n}\phi(i).
\]
\end{definition}

\begin{definition}[Ramanujan sum]
For $p\in \mathbb{N}^{+}$, the $p$th Ramanujan sum is defined as 
\[
c_{p}(n) = \sum_{\substack{{k=1}\\ {(k,p)=1}}}^{p} e^{\frac{2\pi i k n}{p}}, 
\]
where $n\in \mathbb{Z}$.
\end{definition}
Note that there are $\phi(p)$ terms in the summation of $c_{p}(n)$  for all $n\in \mathbb{Z}$. Also, it is clear from the definition that $c_{p}(n) = c_{p}(n+p)$ for all $n\in \mathbb{Z}$. We list some useful properties of Ramanujan sums here: 

\begin{proposition}[Some properties of the Ramanujan sums] \label{Proposition properties RS}
The Ramanujan sums satisfy the following properties:
\begin{enumerate}
\item \cite[Corollary 1]{PPV1} $c_p(n)$ is an integer and $c_p(n)\leq\phi(p)$ for any $p$ and $n$ ; 

\item \cite[Equation (9)]{PPV1} $c_p(n) = c_p(-n)$ for any $p$ and $n$ ; 

\item (Autocorrelation) \cite[Theorem 7]{PPV1} $\sum\limits_{n=0}^{p-1}c_p(n)c_p(n-l) = p c_p(l)$ for any $l\in \mathbb{Z}$;

\item (Orthogonality) \cite[Equation (15)]{PPV1} $\sum\limits_{n=0}^{lcm(p,q)-1}c_p(n)c_q(n-l) = 0$  for any $l\in \mathbb{Z}$ when $p\neq q$.
\end{enumerate}
\end{proposition}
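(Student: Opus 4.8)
\emph{Proof plan.} The plan is to derive all four properties directly from the definition $c_{p}(n)=\sum_{(k,p)=1}\zeta_{p}^{kn}$, where $\zeta_{p}=e^{2\pi i/p}$, using only elementary manipulations of roots of unity together with the additive-character orthogonality relation $\sum_{n=0}^{N-1}e^{2\pi i mn/N}=N\cdot\mathbf{1}_{N\mid m}$. Property (2) comes from reindexing the sum by the involution $k\mapsto p-k$ of the units of $\mathbb{Z}/p\mathbb{Z}$: since $\zeta_{p}^{(p-k)n}=\zeta_{p}^{-kn}$, this rewrites $c_{p}(n)$ as $c_{p}(-n)$, and incidentally shows $c_{p}(n)\in\mathbb{R}$. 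The bound in (1) is the triangle inequality, $c_{p}(n)\le|c_{p}(n)|\le\#\{k:1\le k\le p,\ (k,p)=1\}=\phi(p)$. For integrality in (1) I would run the Galois argument: every $\sigma_{t}\in\mathrm{Gal}(\mathbb{Q}(\zeta_{p})/\mathbb{Q})$ is determined by $\sigma_{t}(\zeta_{p})=\zeta_{p}^{t}$ with $(t,p)=1$, and since $k\mapsto tk$ permutes $\{k:(k,p)=1\}$ modulo $p$ we get $\sigma_{t}(c_{p}(n))=c_{p}(n)$; hence $c_{p}(n)\in\mathbb{Q}$, and being a sum of algebraic integers it lies in $\mathbb{Z}$. (Alternatively, one first derives the von Sterneck identity $c_{p}(n)=\sum_{d\mid(p,n)}d\,\mu(p/d)$ by Möbius-inverting $\sum_{d\mid p}c_{d}(n)=p\,\mathbf{1}_{p\mid n}$, which makes integrality manifest.)

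For the autocorrelation identity (3), I would substitute the definition into both factors, interchange the finite sums, and isolate the character sum in $n$:
\[
\sum_{n=0}^{p-1}c_{p}(n)c_{p}(n-l)=\sum_{(a,p)=1}\sum_{(b,p)=1}\zeta_{p}^{-bl}\sum_{n=0}^{p-1}\zeta_{p}^{(a+b)n}=p\sum_{(a,p)=1}\ \sum_{\substack{(b,p)=1\\ p\mid a+b}}\zeta_{p}^{-bl}\,.
\]
For each admissible $a$ there is exactly one such $b$, namely $b\equiv-a\pmod p$ (which is again coprime to $p$), and then $\zeta_{p}^{-bl}=\zeta_{p}^{al}$; summing over $a$ reproduces $c_{p}(l)$, so the right-hand side equals $p\,c_{p}(l)$.

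For the orthogonality identity (4), the same expansion over a full period $L=\mathrm{lcm}(p,q)$ produces an inner sum $\sum_{n=0}^{L-1}e^{2\pi i n(aq+bp)/(pq)}$. Writing $g=(p,q)$, $p=gp_{1}$, $q=gq_{1}$, one checks that $(aq+bp)L/(pq)=aq_{1}+bp_{1}\in\mathbb{Z}$, so this inner sum equals $L$ when $pq\mid aq+bp$ and $0$ otherwise. It then remains to show that the surviving condition $pq\mid aq+bp$ is incompatible with $(a,p)=1$ and $(b,q)=1$ whenever $p\neq q$: dividing through, $pq\mid aq+bp$ forces $gp_{1}q_{1}\mid aq_{1}+bp_{1}$, hence $p_{1}\mid a$ and $q_{1}\mid b$; but $(a,p)=1$ gives $(a,p_{1})=1$, so $p_{1}=1$, and likewise $q_{1}=1$, i.e.\ $p=g=q$, a contradiction. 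Hence no term survives and the sum vanishes, for every $l$.

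The only genuinely delicate point is this last divisibility bookkeeping in (4): one has to use the coprimality constraints on \emph{both} $a$ and $b$ to conclude $p=q$, rather than merely $p\mid q$ or $q\mid p$. Everything else reduces to character orthogonality and re-indexing over units; the one other thing to keep track of in (3)–(4) is that the shifted argument $c_{p}(n-l)$ must be read off the $p$-periodicity (resp.\ $q$-periodicity) of the Ramanujan sum, which is automatically built in once the defining exponential sum is substituted.
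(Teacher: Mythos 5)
Your proof is correct and complete. Note that the paper itself does not prove this proposition at all: each item is delegated to a citation in [PPV1] (Corollary 1, Equation (9), Theorem 7, Equation (15)), so there is no in-paper argument to compare against. What you supply is essentially the standard derivation that those references contain: character orthogonality $\sum_{n=0}^{N-1}e^{2\pi i mn/N}=N\cdot\mathbf{1}_{N\mid m}$ plus re-indexing over the units of $\mathbb{Z}/p\mathbb{Z}$ handles (2), (3) and (4), and integrality in (1) follows either from your Galois-invariance argument or from the von Sterneck formula $c_p(n)=\sum_{d\mid(p,n)}d\,\mu(p/d)$ (the latter is the route usually taken in the signal-processing literature, and your Galois alternative is a clean substitute). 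The two delicate points both check out: in (3), for each unit $a$ the unique $b\equiv -a\pmod p$ is again a unit, so the surviving term contributes $\zeta_p^{al}$ and the sum reassembles into $p\,c_p(l)$; and in (4), your divisibility bookkeeping is right --- $gp_1q_1\mid aq_1+bp_1$ together with $(a,p_1)=(b,q_1)=(p_1,q_1)=1$ forces $p_1=q_1=1$, i.e.\ $p=q$, so no term survives when $p\neq q$. One minor remark: the inequality in (1) as stated is $c_p(n)\le\phi(p)$ for the real number $c_p(n)$, and you correctly route this through the realness established in (2) before applying the triangle inequality.
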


\begin{definition}
Define the $k$-th circular shift sequence of $c_p(n)$ as $c_p^{(k)}(n)$:
\[
c_{p}^{(k)}(n) := c_{p}(n-k),
\]
where $n, k\in \mathbb{Z}$.
\end{definition}

To detect a $p$-periodic component in a given signal, we consider the following $p\times p$ circulant matrix:
\begin{align*}
B_{p} 
 := &\begin{bmatrix}
c_{p}(0) & c_{p}(p-1) & c_{p}(p-2) & \cdots & c_{p}(1)\\
c_{p}(1) & c_{p}(0) & c_{p}(p-1) & \cdots & c_{p}(2)\\
c_{p}(2) & c_{p}(1) & c_{p}(0) & \cdots & c_{p}(3)\\
\vdots & \vdots & \vdots & \ddots & \vdots\\
c_{p}(p-1) & c_{p}(p-2) & c_{p}(p-3) & \cdots & c_{p}(0)
\end{bmatrix} \\
=&\begin{bmatrix}
c_p^{(0)}(0) & c_p^{(1)}(0) & c_p^{(2)}(0) & \cdots & c_p^{(p-1)}(0) \\
c_p^{(0)}(1) & c_p^{(1)}(1) & c_p^{(2)}(1) & \cdots & c_p^{(p-1)}(1) \\
\vdots & \vdots & \vdots & \ddots & \vdots \\
c_p^{(0)}(p-1) & c_p^{(1)}(p-1) & c_p^{(2)}(p-1) & \cdots & c_p^{(p-1)}(p-1) \\
\end{bmatrix}.
\end{align*}
Clearly $B_p$ is a symmetric matrix due to Proposition \ref{Proposition properties RS} (2).

\begin{proposition}[Properties of circulant matrix \cite{PPV1}]\label{prop2}
The circulant matrix $B_p\in\mathbb{Z}^{p\times p}$ has some nice properties:
\begin{enumerate}
\item  $\text{rank}(B_p) = \phi(p)$ 

\item any $\phi(p)$ consecutive columns of $B_p$ are linearly independent. 

\item the period of any vector in the column space of $B_p$ is exactly $p$ and cannot be smaller. 

\item The column space of $B_p$ is the same as that of the subspace spanned by $\phi(p)$ Fourier columns: 
\[  
\Big\{\begin{bmatrix}
1 \\
\omega_p^{k_i} \\
\omega_p^{2k_i} \\
\vdots \\
\omega_p^{(p-1)k_i}
\end{bmatrix}
\Big\}_{i},
\quad  (k_i,p)=1,\quad 1\leq k_i\leq p,
\]
where $\omega_p = e^{\frac{-2\pi i}{p}}$.
\end{enumerate}
\end{proposition}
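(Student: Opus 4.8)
The plan is to reduce all four items to the single identity $B_p=\sum_i \mathbf{v}_{k_i}\mathbf{v}_{k_i}^{*}$, where $\mathbf{v}_k=(1,\omega_p^{k},\dots,\omega_p^{(p-1)k})^{\top}$ and the sum runs over the $\phi(p)$ residues $1\le k_i\le p$ with $(k_i,p)=1$, so that $B_p=p\,\Pi$ for $\Pi$ the orthogonal projection of $\mathbb{C}^p$ onto $W:=\mathrm{span}\{\mathbf{v}_{k_i}\}_i$. To establish this I would first rewrite the defining sum $c_p(n)=\sum_{(k,p)=1}e^{2\pi ikn/p}$ as $\sum_{(k,p)=1}\omega_p^{kn}$ (reindex $k\mapsto p-k$, using $\omega_p=e^{-2\pi i/p}$), so the generating sequence of the circulant $B_p$ is $c_p(n)=\sum_i \mathbf{v}_{k_i}(n)$. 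Since the Fourier vectors are orthogonal with $\|\mathbf{v}_k\|^2=p$, the matrix $\sum_i \mathbf{v}_{k_i}\mathbf{v}_{k_i}^{*}$ has $(n,l)$ entry $\sum_i \omega_p^{(n-l)k_i}=c_p(n-l)=(B_p)_{nl}$, which is the claimed identity. As a sanity check, Proposition \ref{Proposition properties RS}(3) is precisely $B_p^2=pB_p$ and Proposition \ref{Proposition properties RS}(2) gives $B_p=B_p^{\top}$, so $B_p/p$ is visibly an orthogonal projection even before its range is identified, and by Proposition \ref{Proposition properties RS}(1) the entries are real integers, matching $B_p\in\mathbb{Z}^{p\times p}$.

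Items (1) and (4) are then immediate: $\mathrm{col}(B_p)=\mathrm{col}(\Pi)=W$, and the $\mathbf{v}_{k_i}$ are linearly independent (distinct frequencies, or orthogonality), so $\dim W=\phi(p)$ and $\mathrm{rank}(B_p)=\phi(p)$. For (3) I would take a nonzero $v\in\mathrm{col}(B_p)=W$, write its periodic extension as $v(n)=\sum_i \beta_i \omega_p^{nk_i}$ with some $\beta_i\neq0$, and note that the minimal period $d$ of a $p$-periodic sequence always divides $p$. If $d<p$, then $v(n+d)-v(n)=\sum_i \beta_i(\omega_p^{dk_i}-1)\omega_p^{nk_i}=0$ for every $n$, and linear independence of the distinct Fourier modes forces $\beta_i(\omega_p^{dk_i}-1)=0$ for all $i$; choosing $i$ with $\beta_i\neq0$ gives $p\mid dk_i$, hence $p\mid d$ since $(k_i,p)=1$, contradicting $d<p$. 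Here I read ``period of a vector'' in the sense of the Definition of $p$-periodic vectors above, applied to the periodic extension.

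For (2) I would express the columns of $B_p$ in the basis $\{\mathbf{v}_{k_i}\}_i$ of $W$: the $l$-th column is $\mathbf{c}_p^{(l)}$ with $\mathbf{c}_p^{(l)}(n)=c_p(n-l)=\sum_i \omega_p^{-lk_i}\mathbf{v}_{k_i}(n)$, so its coordinate vector is $(\omega_p^{-lk_i})_i$. Stacking $\phi(p)$ consecutive columns $l=s,s+1,\dots,s+\phi(p)-1$ (indices mod $p$, which affects $\omega_p^{-lk_i}$ only through $l\bmod p$) gives a $\phi(p)\times\phi(p)$ matrix whose $i$-th row is $\omega_p^{-sk_i}(1,\omega_p^{-k_i},\dots,\omega_p^{-(\phi(p)-1)k_i})$; pulling the nonzero scalar $\omega_p^{-sk_i}$ out of row $i$ leaves a Vandermonde matrix with nodes $\omega_p^{-k_i}$, pairwise distinct for distinct residues $k_i$. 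A square Vandermonde matrix with distinct nodes is nonsingular, so the $\phi(p)$ columns are linearly independent.

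The main obstacle I anticipate is pinning down the identity $B_p=p\,\Pi$ cleanly; once the entries of $\sum_i \mathbf{v}_{k_i}\mathbf{v}_{k_i}^{*}$ are recognized as the Ramanujan sums, the four assertions are routine (range and rank of a projection, a Vandermonde nonsingularity, and a short frequency-support argument). The only points demanding care are purely bookkeeping: matching the convention $\omega_p=e^{-2\pi i/p}$ so that ``the Fourier columns with $(k_i,p)=1$'' name the same subspace independently of the sign in the exponent, and keeping the circular-shift-to-scalar action on each $\mathbf{v}_{k_i}$ consistent with it.
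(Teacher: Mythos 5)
Your proof is correct, and it is built on exactly the identity the paper itself records immediately after the proposition, namely $B_p=V_pV_p^{*}$ with $V_p$ the $p\times\phi(p)$ submatrix of the DFT matrix formed by the columns indexed by $(k_i,p)=1$; the paper gives no proof of its own (it cites [PPV1]), and your derivation of that identity from $c_p(n)=\sum_{(k,p)=1}\omega_p^{kn}$, followed by the projection/rank argument for (1) and (4), the Vandermonde argument for (2), and the frequency-support argument for (3), is the standard route taken in the cited source. The only point worth flagging is the one you already noted: the paper's definition of the period of a finite vector is slightly weaker than periodicity of the $p$-periodic extension, so (3) should indeed be read, as you do, in terms of the extension.
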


Proposition \ref{prop2} reminds us the usual definition of the discrete Fourier transform (DFT) matrix, which we remind the readers below. 

\begin{definition} The $p\times p$ DFT matrix is defined as 
\[
W_p =
\begin{bmatrix}
1 & 1 & 1 & \cdots & 1\\
1 & \omega_p^{1} & \omega_p^{2} & \cdots & \omega_p^{p-1}\\
1 & \omega_p^{2} & \omega_p^{2\times 2} & \cdots &   \omega_p^{2(p-1)}\\
1 & \vdots & \vdots & \ddots & \vdots\\
1 & \omega_p^{(p-1)} & \omega_p^{(p-1)\times 2} & \cdots & \omega_p^{(p-1)(p-1)}
\end{bmatrix}.
\]
	
\end{definition}

\begin{definition}
We need the following notations.
\begin{enumerate}
\item Denote the matrix formed by the first $\phi(p)$ columns of $B_p$ by $C_p$. 

\item Fix $N\in \mathbb{N}^{+}$. Let $C_{p,N}$ be an $N\times\phi(p)$ matrix defined as
\[
C_{p,N} = 
\begin{bmatrix}
C_p \\
\vdots \\
C_p \\
R
\end{bmatrix},
\]
where $R$ is the first $N-p\floor*{\frac{N}{p}}$ rows of $C_p$ if $N -p \floor*{\frac{N}{p}}\neq 0$, and $R$ is not needed if $N -p \floor*{\frac{N}{p}}= 0$. 

\item Denote the columns of $C_{p,N}$ by $c_{p,N}^{(0)}, c_{p,N}^{(1)}, \cdots, c_{p,N}^{(\phi(p)-1)}$ accordingly.
\end{enumerate} 
\end{definition}

Note that by a direct calculation, we have $B_p = V_p V_p^{*}$, where $V_p$ is a $p\times\phi(p)$ submatrix of $W_p$ whose columns are those $\phi(p)$ Fourier columns shown in Proposition \ref{prop2} (4) \cite{PPV1}. By (2) of Proposition \ref{prop2}, $C_p$ has full column rank and has the same column space as $V_p$. Moreover, $C_p^T C_p\in \mathbb{Z}^{\phi(p)\times\phi(p)}$ is positive definite.
$C_{p,N}$ is designed to capture the $p$-periodic component inside a signal of length $N$ by periodically extending $C_p$ to length $N$ with $C_p$ repeated for $\floor*{\frac{N}{p}}$ times padded with $R$ if needed.
Note that $C_{p,N}^T C_{p,N}$ is always positive definite when $N\geq p$.

\begin{definition}[Ramanujan subspace \cite{PPV1,PPV3}]
	The {\em $p$-periodic Ramanujan subspace} of length $N$, denoted by $\mathcal{R}_{p,N}$, is defined to be the column space $C_{p,N}$; that is, the span of $\{c_{p,N}^{(0)}, c_{p,N}^{(1)}, \cdots, c_{p,N}^{(\phi(p)-1)}\}$.
\end{definition} 

By construction, it is clear that the $p$-periodic Ramanujan subspace $\mathcal{R}_{p,N}$ is formed by fewer basis vectors than $\mathcal{P}_{p}$. We have the following important property of the Ramanujan subspace by (4) in Proposition \ref{prop2}.
\begin{proposition}[\cite{PPV1, PPV2}]
$\mathcal{R}_{p,N}$ and $\mathcal{R}_{q,N}$ are ``asymptotically orthogonal'' for $p\neq q$ in the following sense: for any sequence $x(n)\in\mathcal{R}_{p,N}$ and $y(n)\in\mathcal{R}_{q,N}$ we unifomly have
\[
\lim_{N\to \infty}\frac{1}{N}\sum_{n=0}^{N-1}x(n)y(n)=0.
\]
\end{proposition}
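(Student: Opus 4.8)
The plan is to reduce the statement to the exact orthogonality identity for Ramanujan sums (Proposition \ref{Proposition properties RS}(4)) by exploiting periodicity, and then handle the normalization uniformly using the positive definiteness of the limiting Gram matrices. Throughout I interpret the ``uniform'' limit as uniformity over $x$ and $y$ ranging in the unit spheres of $\mathcal{R}_{p,N}$ and $\mathcal{R}_{q,N}$ for the normalized norm $\norm{x}_N^2:=\frac1N\sum_{n=0}^{N-1}x(n)^2$; some normalization is needed, since rescaling $x$ by $\sqrt N$ would otherwise destroy the claim.

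First I would note that every $x\in\mathcal{R}_{p,N}$ is the restriction to $\{0,1,\dots,N-1\}$ of a genuinely $p$-periodic sequence. Indeed, by the construction of $C_{p,N}$ its $j$-th column has $n$-th entry $c_p(n-j)$, so $x(n)=\sum_{j=0}^{\phi(p)-1}a_j\,c_p(n-j)$ for a coefficient vector $a\in\mathbb{R}^{\phi(p)}$, and this same formula extends $x$ to a $p$-periodic sequence $\tilde x$ on $\mathbb{Z}$; likewise $y$ extends to a $q$-periodic $\tilde y(n)=\sum_{j'=0}^{\phi(q)-1}b_{j'}\,c_q(n-j')$. Then $\tilde x\tilde y$ has period $m:=\mathrm{lcm}(p,q)$ and
\[
\sum_{n=0}^{m-1}\tilde x(n)\tilde y(n)=\sum_{j,j'}a_j b_{j'}\sum_{n=0}^{m-1}c_p(n-j)\,c_q(n-j').
\]
Reindexing the inner sum by $n\mapsto n+j$ over the full period (valid because $p\mid m$ and $q\mid m$, so the summand is $m$-periodic) turns it into $\sum_{n=0}^{m-1}c_p(n)\,c_q(n-(j'-j))$, which vanishes by Proposition \ref{Proposition properties RS}(4) since $p\neq q$. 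Hence $\sum_{n=0}^{m-1}\tilde x(n)\tilde y(n)=0$.

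Next I would write $N=mL+r$ with $L=\floor{N/m}$ and $0\le r<m$. Since $\sum_{n=0}^{N-1}x(n)y(n)=\sum_{n=0}^{N-1}\tilde x(n)\tilde y(n)=L\cdot 0+\sum_{n=mL}^{N-1}\tilde x(n)\tilde y(n)$, only the at most $m$ leftover terms survive. Using $\abs{c_p(n)}\le\phi(p)$ from Proposition \ref{Proposition properties RS}(1) and Cauchy--Schwarz, each leftover term obeys $\abs{\tilde x(n)\tilde y(n)}\le\phi(p)^{3/2}\phi(q)^{3/2}\norm{a}_2\norm{b}_2$, so
\[
\Bigl|\tfrac1N\textstyle\sum_{n=0}^{N-1}x(n)y(n)\Bigr|\le \frac{m\,\phi(p)^{3/2}\phi(q)^{3/2}}{N}\,\norm{a}_2\norm{b}_2 .
\]
It remains to bound $\norm{a}_2,\norm{b}_2$ uniformly under $\norm{x}_N=\norm{y}_N=1$. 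Here $\norm{x}_N^2=a^\top\bigl(\tfrac1N C_{p,N}^\top C_{p,N}\bigr)a$, and since $C_{p,N}$ consists of $\floor{N/p}$ copies of $C_p$ plus a bounded remainder block, $\tfrac1N C_{p,N}^\top C_{p,N}\to\tfrac1p C_p^\top C_p$, which is positive definite (as recorded before the definition of $\mathcal{R}_{p,N}$). Thus there exist $N_0$ and $c>0$, depending only on $p$, with $\norm{x}_N^2\ge c\norm{a}_2^2$ for $N\ge N_0$, and analogously for $q$; hence $\norm{a}_2,\norm{b}_2$ are uniformly bounded, the right-hand side above is $O(1/N)$ with constant depending only on $p,q$, and the uniform limit follows.

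I expect the only genuinely delicate point to be the reindexing step inside the period: one must verify that shifting the summation index within a single full period $\{0,\dots,m-1\}$ leaves the sum unchanged (true precisely because $c_p(\cdot-j)$ and $c_q(\cdot-j')$ have periods dividing $m$), and that Proposition \ref{Proposition properties RS}(4), stated for $\sum_{n=0}^{\mathrm{lcm}(p,q)-1}c_p(n)c_q(n-l)$, applies verbatim with $l=j'-j$. The rest --- the leftover-term estimate and the uniform lower bound on the Gram matrices via Proposition \ref{prop2} --- are routine uniform $O(1/N)$ bounds with constants depending only on $p$ and $q$.
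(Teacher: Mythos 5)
Your proof is correct. Note that the paper itself states this proposition as a citation from the Vaidyanathan references and gives no proof, so there is nothing in-paper to compare against; but your argument is precisely the mechanism those references (and the paper's subsequent corollary) rely on, namely the exact orthogonality $\sum_{n=0}^{\mathrm{lcm}(p,q)-1}c_p(n)c_q(n-l)=0$ from Proposition \ref{Proposition properties RS}(4), applied over $\floor{N/m}$ full periods of length $m=\mathrm{lcm}(p,q)$, with the at most $m$ leftover terms contributing $O(1/N)$ after normalization. Your two ``delicate'' points both check out: the shift $n\mapsto n+j$ is harmless because the summand is $m$-periodic, and the uniform bound on the coefficient vectors follows from $\tfrac1N C_{p,N}^\top C_{p,N}\succeq \tfrac{\floor{N/p}}{N}C_p^\top C_p$ together with the positive definiteness of $C_p^\top C_p$ recorded in the paper; your explicit discussion of the normalization needed to make ``uniformly'' meaningful is a welcome clarification of a statement the paper leaves vague.
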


To investigate a $p$-periodic signal, it is sufficient to look at the case when $N=p$,
\begin{definition}[\cite{PPV1,PPV3}]\label{Definition Fp}
\[ 
F_{p} := 
\begin{bmatrix}
C_{p_1,p} & C_{p_2,p} & \cdots & C_{p_n,p}
\end{bmatrix},
\]
where $p_i$'s are divisors of $p$ with $1 = p_1<p_2<\cdots<p_n = p$.
\end{definition}

By a well known property that $\sum\limits_{p_i | p} \phi(p_i) = p$, we know that $F_p$ is a $p\times p$ matrix. 
 
\begin{proposition}[\cite{PPV1,PPV3}]\label{prop4}
We have that $\text{rank}(F_p) = p$.
Moreover, $F_p$ has the same column space as the $p\times p$ DFT matrix. 
\end{proposition}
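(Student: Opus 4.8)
\medskip
The plan is to pin down the ($\mathbb{C}$-)column space of each block $C_{p_i,p}$ as an explicit set of DFT columns of $W_p$, and then to show that, as $p_i$ ranges over the divisors of $p$, these sets partition the full collection of $p$ columns of $W_p$. Since $W_p$ is invertible, once the column span of $F_p$ is shown to contain a spanning set of $\mathbb{C}^p$ this simultaneously gives $\text{rank}(F_p)=p$ and the equality of column spaces.

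First I would fix a divisor $p_i\mid p$. Because $N=p$ is a multiple of $p_i$, the remainder block $R$ in the definition of $C_{p_i,p}$ is empty, so $C_{p_i,p}$ is exactly the $(p/p_i)$-fold vertical stacking of $C_{p_i}$. By Proposition~\ref{prop2}(4) applied with $p$ replaced by $p_i$, the column space of $C_{p_i}$ is spanned by the $\phi(p_i)$ Fourier vectors $\big[1,\omega_{p_i}^{k},\dots,\omega_{p_i}^{(p_i-1)k}\big]^{\top}$ with $(k,p_i)=1$, $1\le k\le p_i$. A short computation then shows that stacking such a vector $p/p_i$ times yields the vector whose $j$-th entry ($0\le j\le p-1$) equals $\omega_{p_i}^{kj}=\omega_p^{(p/p_i)kj}$, i.e.\ precisely the $m$-th column of $W_p$ with $m=(p/p_i)k$. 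Since the remark preceding the statement gives that $C_{p_i,p}^{T}C_{p_i,p}$ is positive definite (as $p\ge p_i$), the block $C_{p_i,p}$ has full column rank $\phi(p_i)$, so these $\phi(p_i)$ DFT columns are in fact a basis of its column space.

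Next comes the index bookkeeping: $m=(p/p_i)k$ with $(k,p_i)=1$ holds exactly when $(m,p)=p/p_i$ (reading indices mod $p$, with the convention $(0,p)=p$), so the block $C_{p_i,p}$ accounts for precisely the DFT columns of $W_p$ whose frequency index has greatest common divisor $p/p_i$ with $p$. As $p_i$ runs over the divisors of $p$, so does $p/p_i$, and the ``gcd-classes'' $\{m:(m,p)=d\}$, $d\mid p$, are pairwise disjoint with union $\{0,1,\dots,p-1\}$; the cardinalities add up consistently because $\sum_{p_i\mid p}\phi(p_i)=p$. Assembling the blocks, the column space of $F_p$ is the sum of the column spaces of the $C_{p_i,p}$, which is the span of all $p$ columns of $W_p$, namely $\mathbb{C}^p$. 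Hence $\text{rank}(F_p)=p$ and $F_p$ has the same column space as the DFT matrix $W_p$.

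I expect the only delicate point to be the third step: verifying that the index sets $\{(p/p_i)k:(k,p_i)=1\}$ are exactly the gcd-classes modulo $p$ and that they tile $\{0,\dots,p-1\}$. This is where the identity $\sum_{p_i\mid p}\phi(p_i)=p$ (already invoked to see that $F_p$ is square) does the combinatorial work. Everything else reduces to the periodic-extension computation above together with the already-established facts about $C_p$, $B_p$, and their relation to the Fourier columns.
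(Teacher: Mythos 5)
Your argument is correct. The one thing to note is that the paper itself offers no proof of this proposition: it is stated as a citation to the references on Ramanujan subspaces, so there is no ``paper's proof'' to match yours against. Your write-up supplies the standard argument from that literature and is complete: since $p_i\mid p$ the remainder block $R$ is absent, $C_{p_i,p}$ is an exact $(p/p_i)$-fold stacking of $C_{p_i}$; writing $C_{p_i}=V_{p_i}M$ with $M$ invertible (which the paper has already established via Proposition~\ref{prop2}(2) and (4)) shows the column space of $C_{p_i,p}$ equals the span of the stacked Fourier vectors, and the identity $\omega_{p_i}=\omega_p^{p/p_i}$ identifies these with the DFT columns of $W_p$ of index $m=(p/p_i)k$, $(k,p_i)=1$. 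The bookkeeping step you flag as delicate is indeed the crux, but it is routine: $\bigl((p/p_i)k,(p/p_i)p_i\bigr)=(p/p_i)(k,p_i)=p/p_i$ shows these index sets are exactly the gcd-classes $\{m:(m,p)=p/p_i\}$, which are disjoint and exhaust $\{0,\dots,p-1\}$ as $p_i$ runs over the divisors of $p$, with cardinalities summing to $p$ by $\sum_{p_i\mid p}\phi(p_i)=p$. Hence the columns of $F_p$ span all $p$ (independent) columns of $W_p$, giving $\mathrm{rank}(F_p)=p$ and equality of column spaces over $\mathbb{C}$ (and the rank statement over $\mathbb{R}$ follows since $F_p$ is a real matrix). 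This is a genuinely useful addition, as it makes the paper's background self-contained.
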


Thus, any signal with period $p$ can be uniquely expressed by a linear combination of columns of $F_p$. In other words, any $p$-periodic signal can be uniquely decomposed into Ramanujan subspaces corresponding to period $p$ and its divisors. Therefore, it is likely to observe periods $p_i$ that are divisors of $p$, when we decompose a $p$-periodic signal. Moreover, by the orthogonality property of the Ramanujan sums, we immediately have:
\begin{corollary}
Following the notation in Definition \ref{Definition Fp}, when $N$ is a multiple of $p$, the column spaces associated with $C_{p_1,N}$, $C_{p_2,N}$, $\dots$, $C_{p_n,N}$ are mutually orthogonal.
\end{corollary}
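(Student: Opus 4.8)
The plan is to reduce the claim to the orthogonality identity for Ramanujan sums, Proposition~\ref{Proposition properties RS}(4), after unwinding the structure of the columns of $C_{p_i,N}$ when $N$ is a multiple of $p$. First I would observe that since each $p_i$ divides $p$ and $p$ divides $N$, each $p_i$ divides $N$, so in the block decomposition defining $C_{p_i,N}$ the remainder block $R$ is absent and $C_{p_i,N}$ is simply $C_{p_i}$ stacked $N/p_i$ times. Because the columns of $C_{p_i}$ are the restrictions of the $p_i$-periodic shifted Ramanujan sequences $c_{p_i}^{(k)}$, this means the $k$-th column of $C_{p_i,N}$ is the vector $\big(c_{p_i}(n-k)\big)_{n=0}^{N-1}$ for $k = 0,1,\dots,\phi(p_i)-1$. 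Since orthogonality of two subspaces is equivalent to orthogonality of spanning sets, it suffices to show that for $i\neq j$ every column of $C_{p_i,N}$ is orthogonal to every column of $C_{p_j,N}$.

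Next I would fix $i\neq j$, a column index $k$ of $C_{p_i,N}$ and a column index $l$ of $C_{p_j,N}$, and consider the inner product
\[
\sum_{n=0}^{N-1} c_{p_i}(n-k)\, c_{p_j}(n-l).
\]
The summand $n\mapsto c_{p_i}(n-k)c_{p_j}(n-l)$ is periodic with period $L:=\mathrm{lcm}(p_i,p_j)$, since $c_{p_i}(\cdot-k)$ has period $p_i\mid L$ and $c_{p_j}(\cdot-l)$ has period $p_j\mid L$. Because $p_i\mid p$ and $p_j\mid p$ we have $L\mid p$, and since $p\mid N$ we get $L\mid N$; hence the sum over $n=0,\dots,N-1$ splits into $N/L$ identical sums over a single period, so it equals $\tfrac{N}{L}\sum_{n=0}^{L-1} c_{p_i}(n-k)c_{p_j}(n-l)$.

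Finally I would reindex the one-period sum: summing the $L$-periodic summand over the $L$ consecutive integers $k,k+1,\dots,k+L-1$ instead of $0,\dots,L-1$ gives $\sum_{n=0}^{L-1} c_{p_i}(n)\,c_{p_j}\big(n-(l-k)\big)$, which is exactly $\sum_{n=0}^{\mathrm{lcm}(p_i,p_j)-1} c_{p_i}(n)\,c_{p_j}(n-l')$ with $l'=l-k$, and this vanishes by Proposition~\ref{Proposition properties RS}(4) because $p_i\neq p_j$. Therefore every cross inner product is zero, which proves the corollary. I do not expect a genuine obstacle here; the only point requiring care is the bookkeeping of the block structure of $C_{p_i,N}$ together with the divisibility chain $L\mid p\mid N$ that licenses collapsing the sum to a single period, after which everything is a direct appeal to the already-established orthogonality identity for Ramanujan sums.
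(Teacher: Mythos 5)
Your proposal is correct and takes the same route the paper intends: the paper derives this corollary directly from the orthogonality identity of Proposition~\ref{Proposition properties RS}(4), and your argument simply fills in the (correct) bookkeeping — the divisibility chain $\mathrm{lcm}(p_i,p_j)\mid p\mid N$, the collapse to a single period, and the shift reindexing. No gaps.
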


\subsubsection{Existing algorithms for PT}

There have been many algorithms proposed for the PT. Those algorithms can be roughly classified into two categories. The first one is projecting the signal into the prescribed dictionary, and the second one includes those greedy-based algorithm \cite{sethares1999}, such as Small to Large algorithm, M-best algorithm, Best Correlation algorithm. In \cite{sethares1999}, the natural periodic subspaces $\mathcal{P}_p$ were chosen. The same algorithm can also be adapted to the setup of Ramanujan subspaces $\mathcal{R}_{p,N}$. 

For the comparison purpose in the numerical section, we summarize these algorithms in terms of $\mathcal{R}_{p,N}$. 
Take a signal $x\in \mathbb{R}^N$ and a predetermined longest periods $P_{max}$.
Given a predetermined threshold $T$, the Small to Large algorithm iteratively checks the existence of periods, starting from the shortest period $p=1$, and finish up to the maximal period $P_{max}$. During the iteration, if there exists $p\geq 1$ so that the orthogonal projection of $x$ onto $\mathcal{R}_{p,N}$ has magnitude greater than $T$ times the magnitude of $x$, we remove that component from $x$, get the residue $r$, and continue the algorithm with $r$. Note that $p=1$ is related to the constant trend. 
To run the M-best algorithm, we need a prior estimate of the number of periods, $M$, that constitute $x$. The algorithm maintains a list of the $M$ best periods and the corresponding basis elements. When a new (sub)period is detected that removes more energy from the signal than one currently on the list, the new one replaces the old, and the algorithm iterates until convergence.
The Best Correlation algorithm also needs a predetermined number of periods $M$. This algorithm projects $x$ onto all $\mathcal{R}_{1,N}, \dots, \mathcal{R}_{P_{max},N}$, essentially measuring the correlation between $x$ and the individual periodic basis elements. The period $p$ with the largest correlation is then used for the projection. The selection of $M$ is critical for the M-best algorithm and Best Correlation algorithm.

Vaidyanathan et al. also proposed to use the $l^2$ norm minimization to replace the $l^1$ norm in (\ref{Optimization setup 1}) \cite{dictionary, PPV3, unified}. In \cite{iMUSIC}, the {\em integer MUltiple SIgnal Classification} (iMUSIC) algorithm is proposed, which follows the idea of minimizing the $l^2$ norm of the correlation vector of the periodic component with the noise eigenspace of the sample autocorrelation matrix of $x$ \cite{iMUSIC}. However, these methods cannot guarantee a sparse estimation and spurious periods are possible. This might impede identifying the underlying periods.

\subsubsection{Our proposed algorithm for PT}

We now describe our approach of PT. Our approach is based on the following observation and assumption. 
Since many signals of practical interest consist of few periodic components, the {\em ideal} output of the PT encoding the underlying periodicity of the signal should be sparse. 
This sparsity requirement motivates us to implement the PT as a $l_1$-norm minimization problem. This sparsity idea has been mentioned in \cite{dictionary, PPV3, unified,SuLi}, while our implementation is different from what have been proposed. Take a signal $y\in\mathbb{R}^{N}$. Fix a predetermined upper bound of estimated periods, denoted as $P_{max}\in \mathbb{N}$. Consider
\begin{equation}\label{Optimization setup 1}
\min\norm{Dx}_1 \quad s.t. \quad y=Ax,
\end{equation}
where $A$ is an $N\times \Phi(P_{max})$ dictionary defined as
\[
A := \begin{bmatrix}
C_{1,N} & C_{2,N} & \cdots & C_{P_{max},N}
\end{bmatrix},
\]
$D$ is a diagonal penalty matrix whose $i$-th diagonal entry is $\zeta(P_i)$, where $P_i$ is the period associated with the $i$-th column of $A$, and $\zeta$ is the chosen {\em positive} function describing how the optimization is penalized. We call $\zeta$ the {\em periodicity penalization function}.
In \cite[(32)]{PPV3}, due to the inclination of PT in favor of long period via the dictionary approach, the quadratic penalization, $\zeta(x)=x^2$, is suggested.
In other words, $D$ suppresses high-period components since the components with longer periods are penalized more. 
In practice, if we have any prior knowledge about the period range, we can design a suitable periodicity penalization function. 
How to incorporate the prior knowledge depend on the application, and it is out of the scope of this work. 
Note (\ref{Optimization setup 1}) is similar to Basis Pursuit proposed in \cite{bp}.

Next, we shall relax the condition $y=Ax$, particularly when the signal is corrupted by noise. So, instead of considering \eqref{Optimization setup 1}, we focus on the following program
\begin{equation}\label{lasso}
 \bar{x} \in \argmin\limits_{x\in \mathbb{R}^{\Phi(P_{max})}} \frac{1}{2}\norm{y - Bx }_2^2 + \lambda \norm{ x }_1,
\end{equation}                                                                                                                                         
where $B := AD^{-1}$ and $\lambda>0$ is a tuning parameter. This is the $l_1$ penalized linear regression, also known as \textit{Basis Pursuit Denoising} (BPDN) or Lasso, which was originally proposed in \cite{tibshirani1996, bp}. This type of regression has been extensively studied, for example, in \cite{zhaoyu, wainwright, tropp, tibshirani2013lasso}. We call the proposed algorithm for PT the {\em Ramanujan PT} (RPT). Note that a similar program involving the Ramanujan sums has also been experimented in \cite{SuLi} for music analysis with satisfactory results but with a more redundant dictionary and a different penalty matrix. 
%

Recall that $z\in \mathbb{R}^{\Phi(P_{max})}$ is a subgradient of the $l_1$-norm $\norm{\cdot}_1$ at $x\in \mathbb{R}^{\Phi(P_{max})}$, $i.e.$, $z\in\partial\norm{x}_1$, if 
\[
  z_i=\left\{\begin{array}{ll}
               1, & x_i >0\,;\\
               -1, & x_i <0\,;\\
               \in [-1, 1], & x_i =0\,,
            \end{array}\right.
\]
where $i=1,\ldots,\Phi(P_{max})$.
A subgradient for the function (\ref{lasso}) being minimized is therefore $-B^{T}(y-Bx)+\lambda z$ for some $z\in\partial\norm{x}_1$. Moreover,  $\bar{x}\in \mathbb{R}^{\Phi(P_{max})}$ is a solution to (\ref{lasso}) if and only if there exists a $\bar{z}\in\partial\norm{\bar{x}}_1$, such that
\begin{equation}\label{characterization}
B^T B\bar{x} - B^T y + \lambda \bar{z} = 0.
\end{equation}
We know from \cite[Lemma 1]{tibshirani2013lasso} that if $x_1$ and $x_2$ are both solutions to (\ref{lasso}), then $Bx_1 = Bx_2$. Therefore, if $x_{1,i} = 0$ for $i\in G \subseteq \{1,2, \dots, \Phi(P_{max})\}$ and there exists a $z_1 \in \partial \norm{x_1}_1$ such that $|z_{1,i}|<1$ for $i\in G$, then $z_1 \in \partial \norm{x_2}_1$ by (\ref{characterization}), and we have $x_{2,i} = 0$ for all $i\in G$.

Since each Ramanujan subspace of period $p$ has $\phi(p)$ basis vectors, we have the following definitions.
The first definition is from \cite[equation (22)]{PPV3},
\begin{definition}[Energy of period]
The energy of period of a vector $x\in \mathbb{R}^{\Phi(P_{max})}$, denoted as $EOP_x\in\mathbb{R}^{P_{max}}$, is defined as 
\[
EOP_x(p) = \sum_{i=\Phi(p-1)+1}^{\Phi(p)}x_{i}^{2}\,.  
\]
\end{definition}
Essentially, $EOP_x(p)$ summarizes how strong the signal $x$ oscillates with the period $p$, since the $(\Phi(p-1)+1)$-th to the $\Phi(p)$-th bases captures the component oscillating with the period $p$. In this paper, we also propose the following quantities for the upcoming algorithm and analysis.
\begin{definition}[Intrinsic period]
The intrinsic period (IP) of a vector $x\in \mathbb{R}^{\Phi(P_{max})}$ is a set defined as
\[
IP_x = \{ p: EOP_x(p) > 0\}. 
\]
\end{definition}
We define the IP so that the solution $x$ to (\ref{lasso}) tells us that the signal $y$ contains an oscillatory component with $p$-periodicity if  $p\in IP_x$. We further define their corresponding indices in the dictionary
\begin{definition}[Support of periodicity]
The support of periodicity (SOP) of a vector $x\in \mathbb{R}^{\Phi(P_{max})}$ is defined as
\[
SOP_x = \bigcup\limits_{p\in IP_x } \{ i: \Phi(p-1)+1 \leq i \leq\Phi(p)\}.
\]
\end{definition}

Finally, we mention that when we have multiple realizations of the signal we have interest, we can easily generalize \eqref{lasso} to take all realizations into account. Suppose we have $k$ signals, $y_1,\ldots,y_k\in \mathbb{R}^N$ with the same periodic components but different noise realizations, we can estimate the periodicity by
\[
\bar{x}\in \argmin\limits_{x\in \mathbb{R}^{\Phi(P_{max})}} \frac{1}{2}\norm{Y- Bx\mathbf{1}^T }_{F}^2 + \lambda \norm{ x }_1\,,
\]
where $\mathbf{1}$ is a $k$-dim vector with all entries $1$ and $Y=[y_1,\ldots,y_k]\in \mathbb{R}^{N\times k}$. We call this generalized algorithm vectorized RPT (vRPT).

\section{Proposed time-frequency analysis algorithm}\label{Section Proposed time-frequency analysis algorithm}

In this section, we summarize the existing de-shape algorithm and discuss its limitation. Then we introduce our proposed RDS and  vRDS algorithms by taking the PT into account.

\subsection{A critical observation of spectrogram}\label{subsection:critical observation of STFT}

We start from recalling the STFT. Given a window function $h$, such as a Gaussian function centered at the origin, the STFT of $f$ is defined as:
\[
V_f^{(h)}(t,\xi) = \int_{\mathbb{R}}f(\tau)h(\tau-t)e^{-2\pi i \xi\tau} \diff\tau,
\]
where $t\in \mathbb{R}$ and $\xi\in\mathbb{R}$ indicate time and frequency respectively. We call $V_f^{(h)}:\mathbb{R}^2\to\mathbb{C}$ a TFR of $f$. Usually, people call $|V_f^{(h)}(t,\xi)|^2$ the {\em spectrogram} of $f$, which is another TFR. The TFR depicts the spectral dynamics of the signal as time $t$ evolves. 

If we apply the STFT to analyze an oscillatory signal with non-sinusoidal oscillatory pattern modeled by \eqref{ANHM}, a critical observation is that there exists an oscillatory pattern in the frequency domain. A specific example is shown in \eqref{Equation STFT first example}, and see Figure \ref{fig:shape} for an example of a simulated signal, and Figure \ref{fig:2} for an example of the fetal ECG signal shown in Figure \ref{fig:1}. 
Recall that this pattern comes from the Fourier series expansion of the non-sinusoidal wave-shape function shown in \eqref{eq3}.
As is discussed in the Introduction, it is not easy to extract useful information from the TFR, even when we know the ground truth, not to say if we are blind to the ground truth.

\begin{figure}[!htbp]
	\includegraphics[trim=20 0 40 30,clip,width=0.49\textwidth]{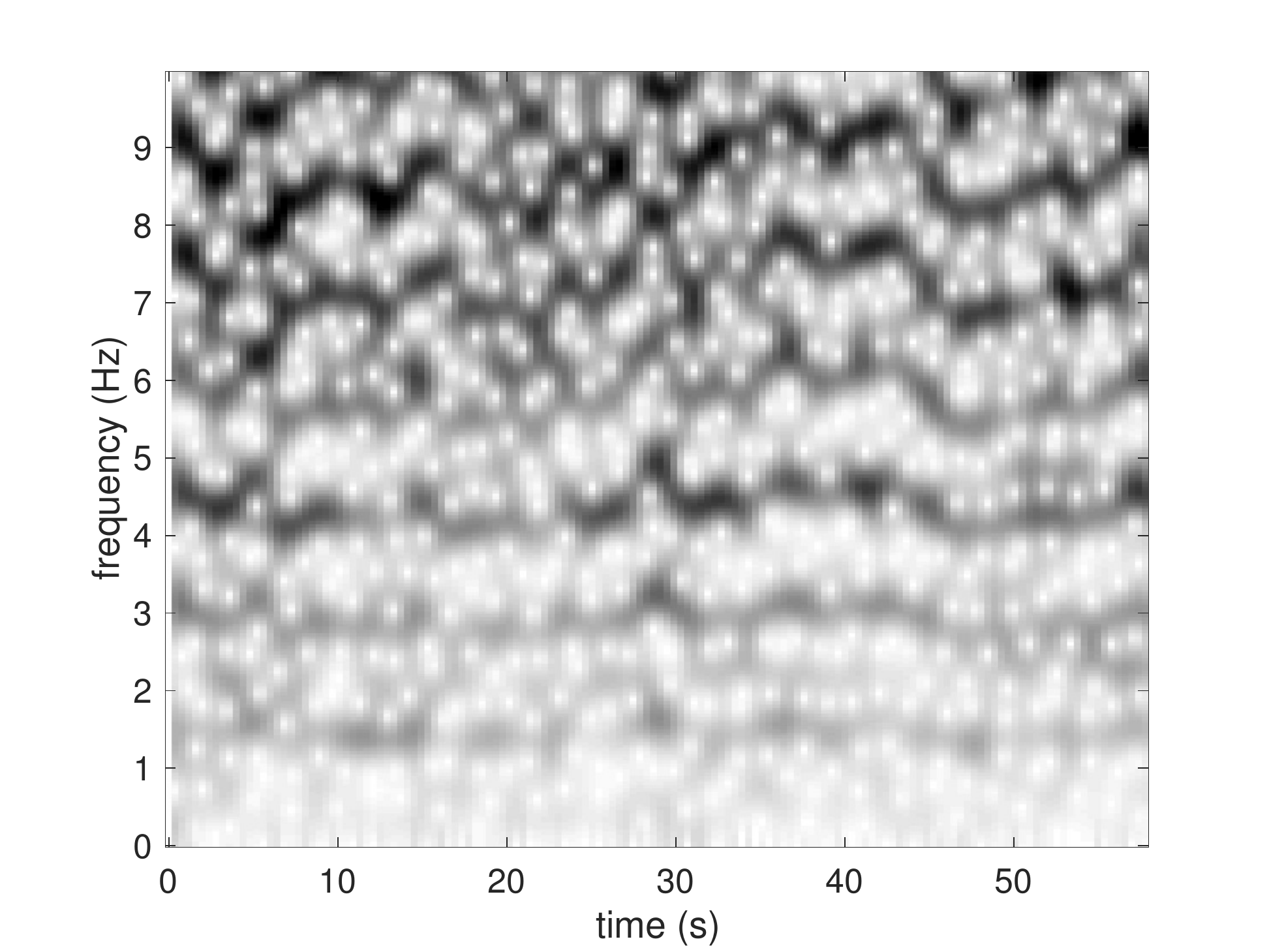}
	\includegraphics[trim=20 0 40 30,clip,width=0.49\textwidth]{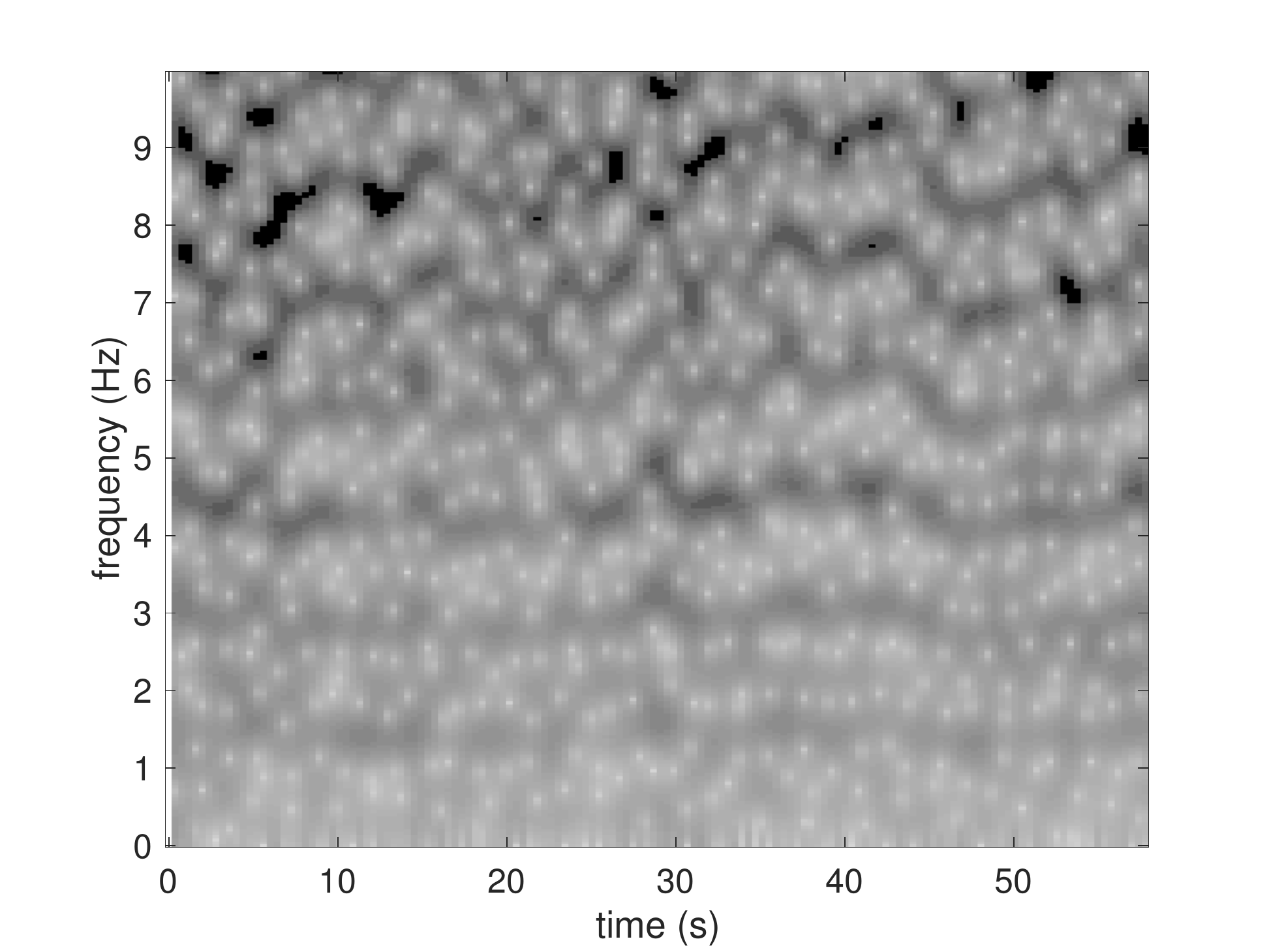}\\
	\begin{minipage}{19.5 cm}
	\hspace{-90pt}\includegraphics[trim=20 0 40 30,clip,width=0.32\textwidth]{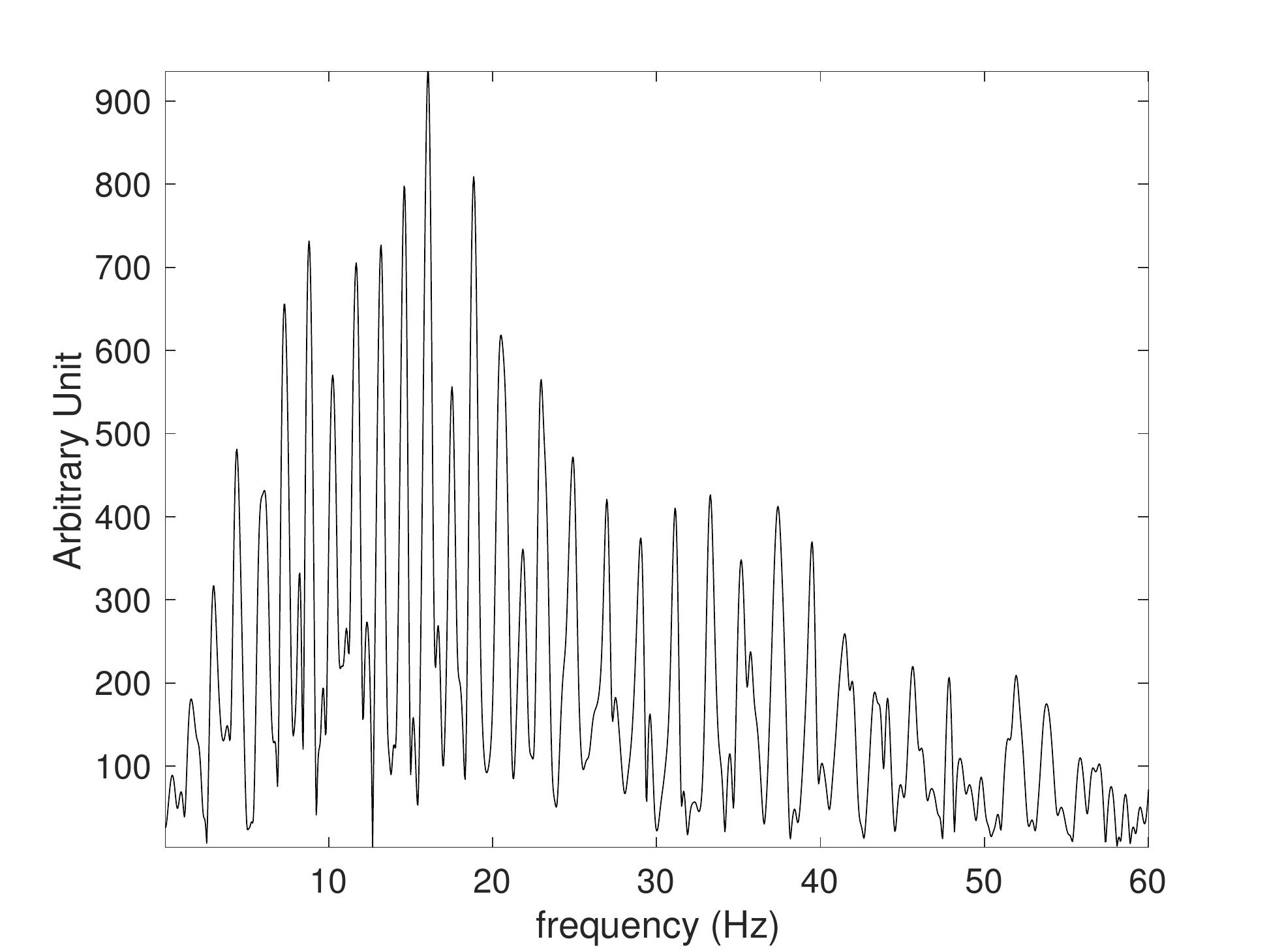}
	\includegraphics[trim=20 0 40 30,clip,width=0.32\textwidth]{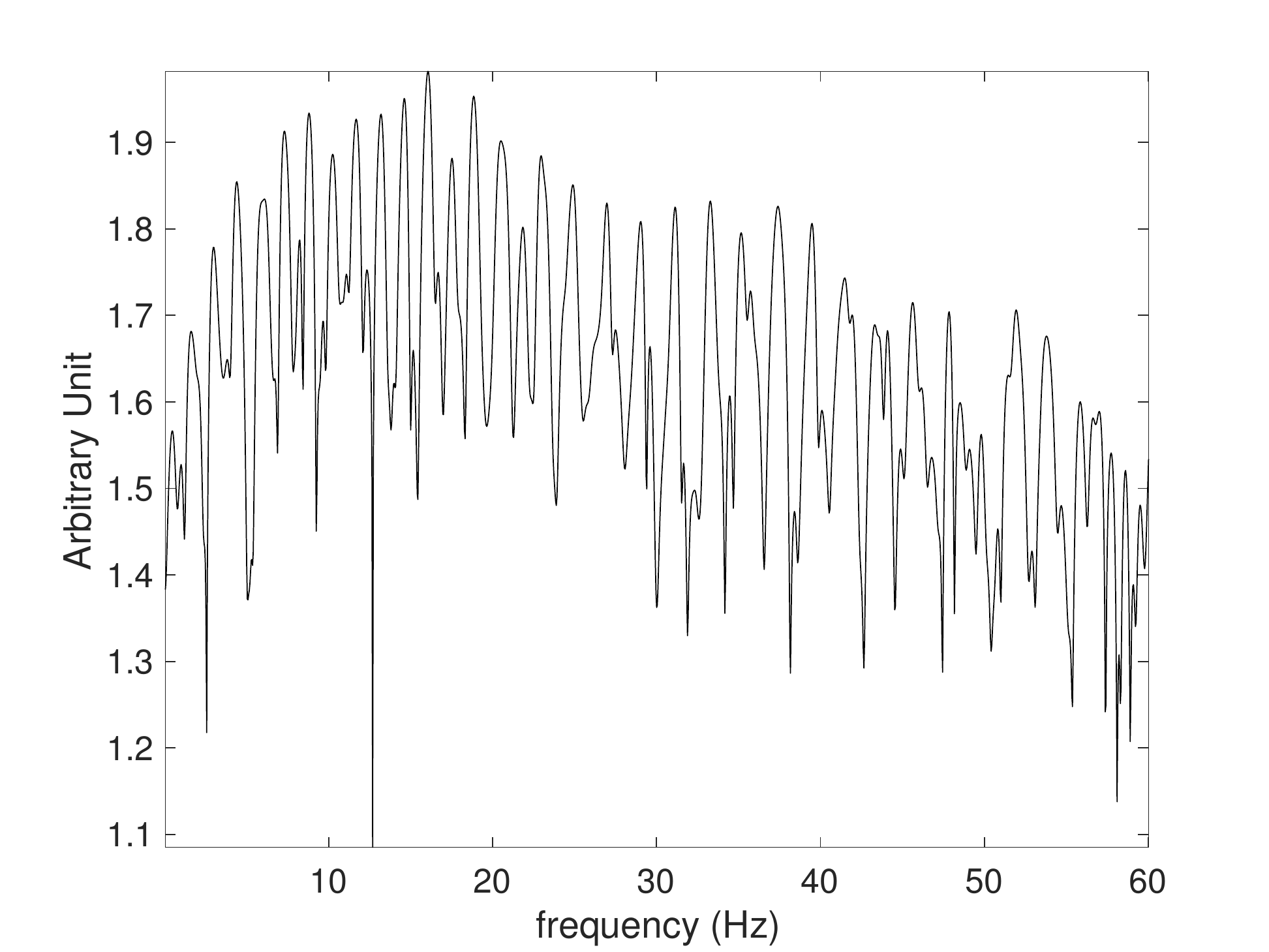}
	\includegraphics[trim=20 0 40 30,clip,width=0.32\textwidth]{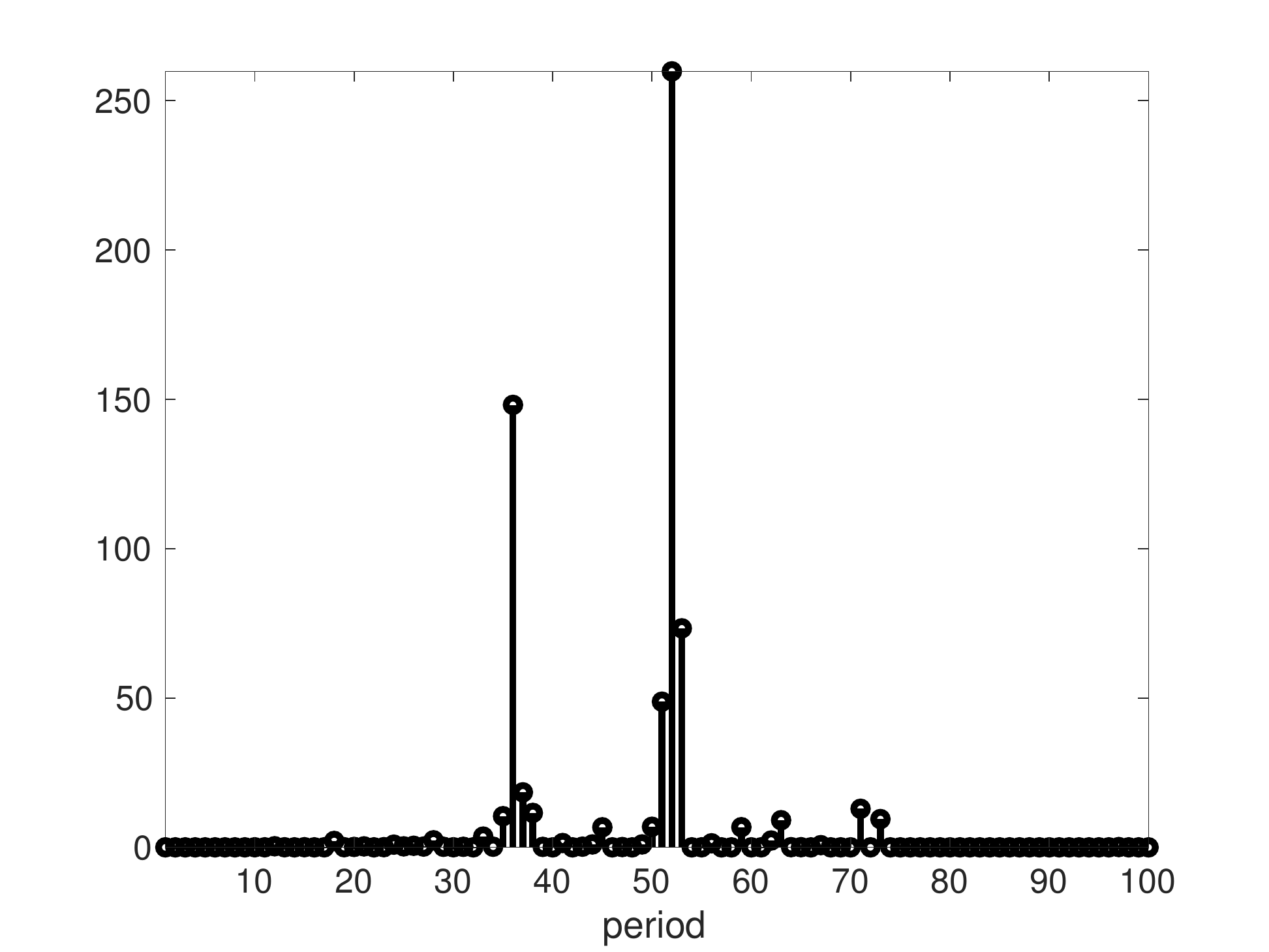}
	\end{minipage}
	\caption{Top left: the magnitude of the STFT of the ta-mECG signal $f$ shown in Figure \ref{fig:1}, $|V_f^{(h)}(t,\xi)|$. The blue arrows indicate the fundamental frequency and its multiples of the mECG, while the red arrows indicate the fundamental frequency and its multiples of the fECG. It is clear that the information associated with the fECG is weak and hard to identify. Moreover, the spectral information of the mECG and fECG is mixed up, which makes it more challenging to proceed. Top right: $|V_f^{(h)}(t,\xi)|^{0.1}$. The line associated the fECG indicated by read arrows are more visible, but still challenging. Middle left: $|V_f^{(h)}(t_0,\cdot)|$, where $t_0 = 24.24$s. Middle middle: $|V_f^{(h)}(t_0,\cdot)|^{0.1}$. Clearly, the envelope is ``flattened'' after taking a fractional power, as is discussed in the main article. Bottom right: result of the proposed RPT on $|V_f^{(h)}(t_0,\cdot)|^{0.1}$, where we see two dominant periods predicted by RPT. Since the frequency bin is $0.04$Hz, these two periods are associated with $35\times 0.04=1.4$ Hz and $52\times 0.04=2.08$ Hz.} 
	\label{fig:2}
\end{figure}

\subsection{Existing approach -- De-shape algorithm}\label{Section: dsAlgorithm Intro}

To handle the challenge shown above, Lin etc. \cite{Deshape} proposed the de-shape algorithm to eliminate the impact of non-sinusoidal oscillation of $f$ satisfying (\ref{ANHM}). 
In a nutshell, by modifying the {\em cepstrum} idea \cite{cepreview} and incorporating it into the STFT, the de-shape algorithm modifies the STFT so that the resulting TFR looks like if the input signal oscillates with a sinusoidal oscillation. Below, we recall the cepstrum and summarize the de-shape algorithm.

Cepstrum is a widely-used technique that has been applied to various signal processing problems, such as pitch detection, deconvolution and speech recognition since its invention in 1963 \cite{cepstrum}. The cepstrum $f^C$ for a proper signal $f$ is defined as 
\[
 f^C(q) = \int_{\mathbb{R}} \log|\hat{f}(\xi)| e^{2\pi i q\xi} \diff\xi,
\]
where $\hat{f}(\xi)$ is the Fourier transform of $f$ and $q\in \mathbb{R}$ is called \textit{quefrency}, whose unit is the same as the unit of the original signal $f$. We refer interested readers to \cite{cepreview} for a review of cepstrum.

Clearly, like Fourier transform, cepstrum is a global operator, and the local dynamics cannot be directly captured by the cepstrum. Thus, in \cite{Deshape}, the cepstrum is generalized to the \textit{short-time cepstral transform (STCT)}, which is defined as 
\[
C_f^{(h)}(t,q) = \int_{\mathbb{R}}\log|V_f^{(h)}(t,\xi)| e^{2\pi i q\xi} \diff\xi\,.
\]
Since taking logarithm transforms the multiplication operation to the addition operation, taking logarithm decouples the amplitude modulation from the oscillation in $\log|V_f^{(h)}(t,\cdot)|$. Here, the amplitude modulation in $|V_f^{(h)}(t,\cdot)|$ comes from the Fourier series coefficients. 
To elaborate this important fact, take \eqref{Equation STFT first example} as an example. 
Assume $\texttt{supp}\hat{g}$ in \eqref{Equation STFT first example} is sufficiently small so that $\texttt{supp}\hat{g}(\xi- n\xi_0)\cap\texttt{supp}\hat{g}(\xi- m\xi_0)=\emptyset$ when $n\neq m$. Thus, 
\[
|V_f^{(g)}(t,\xi)|=Aa_0|\hat{g}(\xi)|+\frac{A}{2}\sum_{n\in \mathbb{Z},n\neq 0} |a_n| |\hat{g}(\xi- n\xi_0)|\,.
\] 
This can be understood as
\[
|V_f^{(g)}(t,\xi)|\approx E(\xi)\Big[\sum_{n\in \mathbb{Z}}|\hat{g}(\xi- n\xi_0)|\Big],
\]
where $E$ is a ``low frequency'' positive function so that $E(n\xi_0)=\frac{A|a_n|}{2}$ for $n\neq 0$ and $E(0)=A|a_0|$. We thus call $E(\xi)$ the {\em spectral envelope} of the wave-shape function. A rigorous treatment of this argument can be found in \cite{Deshape}, particularly when $f$ satisfies the ANHM.
As a result, $\log|V_f^{(h)}(t,\xi)|$ becomes a summation of the logarithm of the spectral envelope $E(\xi)$, which oscillates slowly, and the logarithm of $\sum_{n\in \mathbb{Z}}|\hat{g}(\xi- n\xi_0)|$, which oscillates fast. In other words, $\log E(\xi)$ is in the low-quefrency domain and $\log \big[\sum_{n\in \mathbb{Z}}|\hat{g}(\xi- n\xi_0)|\big]$ is in the high-quefrency domain. However, since taking the natural logarithm might be unstable numerically, the $\gamma$-power of $|V_f^{(h)}(t,\cdot)|$ is introduced to approximate the logarithm, where $\gamma>0$ is a small constant, so that we can compute it in a more numerically stable way \cite{Deshape, cepstrum1, cepstrum2, cepstrum3}; that is, we consider
\[
C_f^{(h,\gamma)}(t,q) = \int_{\mathbb{R}}|V_f^{(h)}(t,\xi)|^\gamma e^{2\pi i q\xi} \diff\xi\,,
\]
where $\gamma>0$ is a small constant chosen by the user.
As a result, if we ignore the low-quefrency content, we would obtain the fundamental IP (the inverse of the fundamental IF) and its multiples in $C_f^{(h,\gamma)}(t,\cdot)$. Note that in general the Fourier transform of $|V_f^{(h)}(t,\xi)|^\gamma$ should be understood in the distribution sense. See \cite{Deshape} for mathematical details.

Next, in order to extract the fundamental IF, we construct the \textit{inverse short-time cepstral transform (iSTCT)} defined as
\[ 
U_f^{(h,\gamma)}(t,\xi) = C_f^{(h,\gamma)}(t,1/ \xi),
\]
where the unit of $\xi$ is in the frequency domain. The main motivation of this conversion comes from the relationship between period and frequency -- the inverse of period is frequency. Hence, the inverse of the $k$-th multiple of the fundamental IP is the fundamental IF divided by $k$. Specifically, at time $t$, if there are peaks around $\xi_0$, $2\xi_0$, $3\xi_0$, etc, in $|V_f^{(h)}(t,\cdot)|^\gamma$, then we would observe peaks around $1/ \xi_0$, $2/ \xi_0$, $3/ \xi_0$, etc, in $C_f^{(h,\gamma)}(t,\cdot)$, and hence $ \xi_0$, $\xi_0/2$, $\xi_0/3$, etc, in $U_f^{(h,\gamma)}(t,\cdot)$.
%
Consequently, the information shared by the STFT and the iSTCT is the fundamental IF. Thus, we can use $U_f^{(h,\gamma)}(t,\xi)$ as a mask for the spectrogram. Motivated by this fact, the de-shape STFT is defined as
\begin{equation}\label{Definition deshape STFT}
W_f^{(h,\gamma)}(t,\xi) = V_f^{(h)}(t,\xi)U_f^{(h,\gamma)}(t,\xi),
\end{equation}
and the final TFR containing only the  fundamental frequency is given by $|W_f^{(h,\gamma)}(t,\xi)|^2$. Note that in general \eqref{Definition deshape STFT} should be understood in the distribution sense. For mathematical details, we refer readers with interest to \cite{Deshape}.

Details of numerically implementation of de-shape STFT can be found in \cite{recycling,deshapeapp}. 
Applications of the de-shape algorithm can be found, for example, in \cite{cicone2017nonlinear,recycling, deshapeapp,lin2019unexpected,li2019non}. 
We mention that based on the STFT of $f$, one may further apply the sychrosqueezing transform \cite{SST} to sharpen the TFR $W_f^{(h,\gamma)}(t,\xi)$ or even concentration of frequency and time (ConceFT) \cite{ConceFT} to alleviate the noise impact. We skip these steps to simplify the discussion.

\begin{figure}[!htbp]
	\includegraphics[trim=20 0 50 0,clip,width=.49\textwidth]{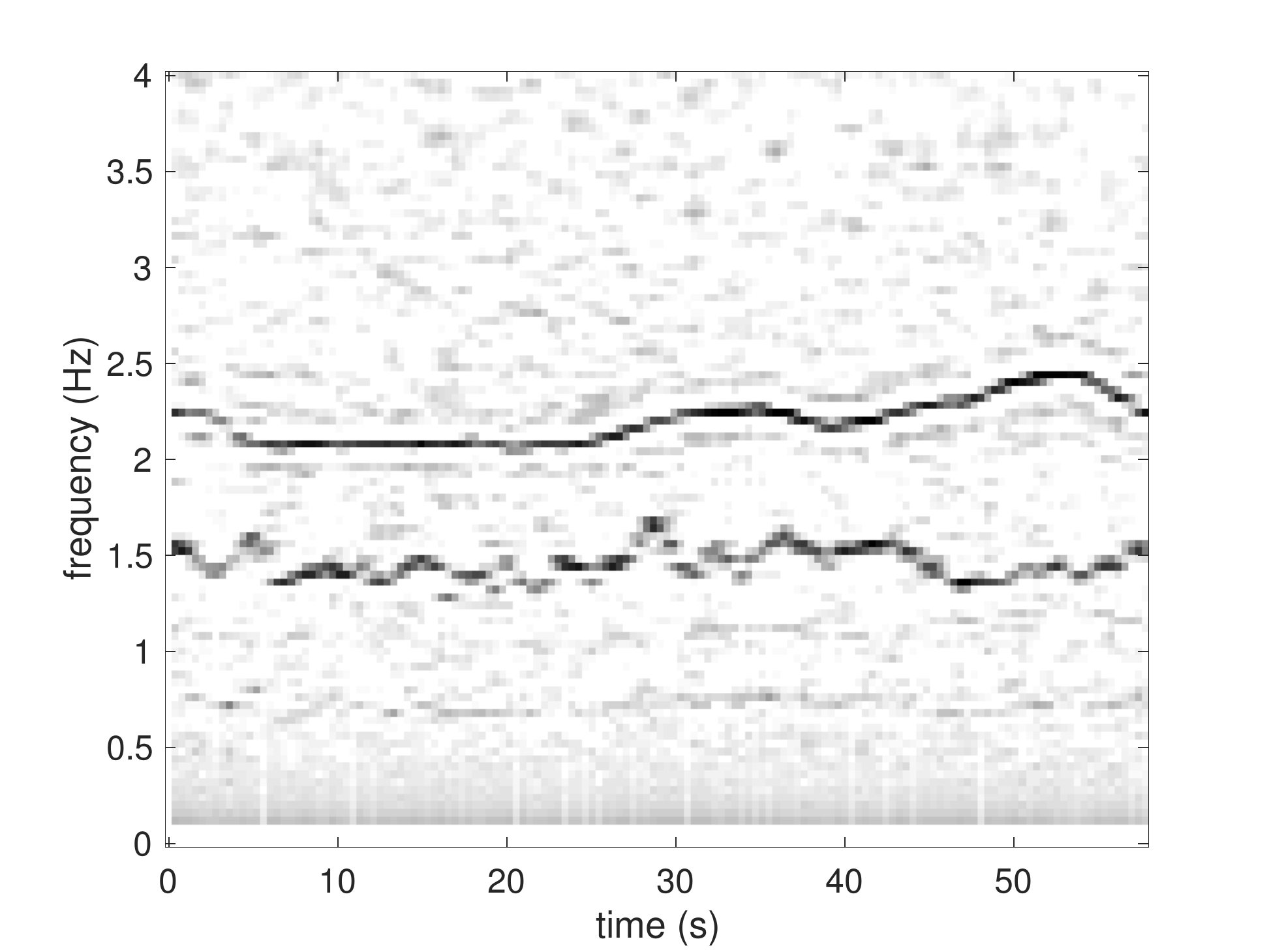}
	\includegraphics[trim=20 0 50 0,clip,width=.49\textwidth]{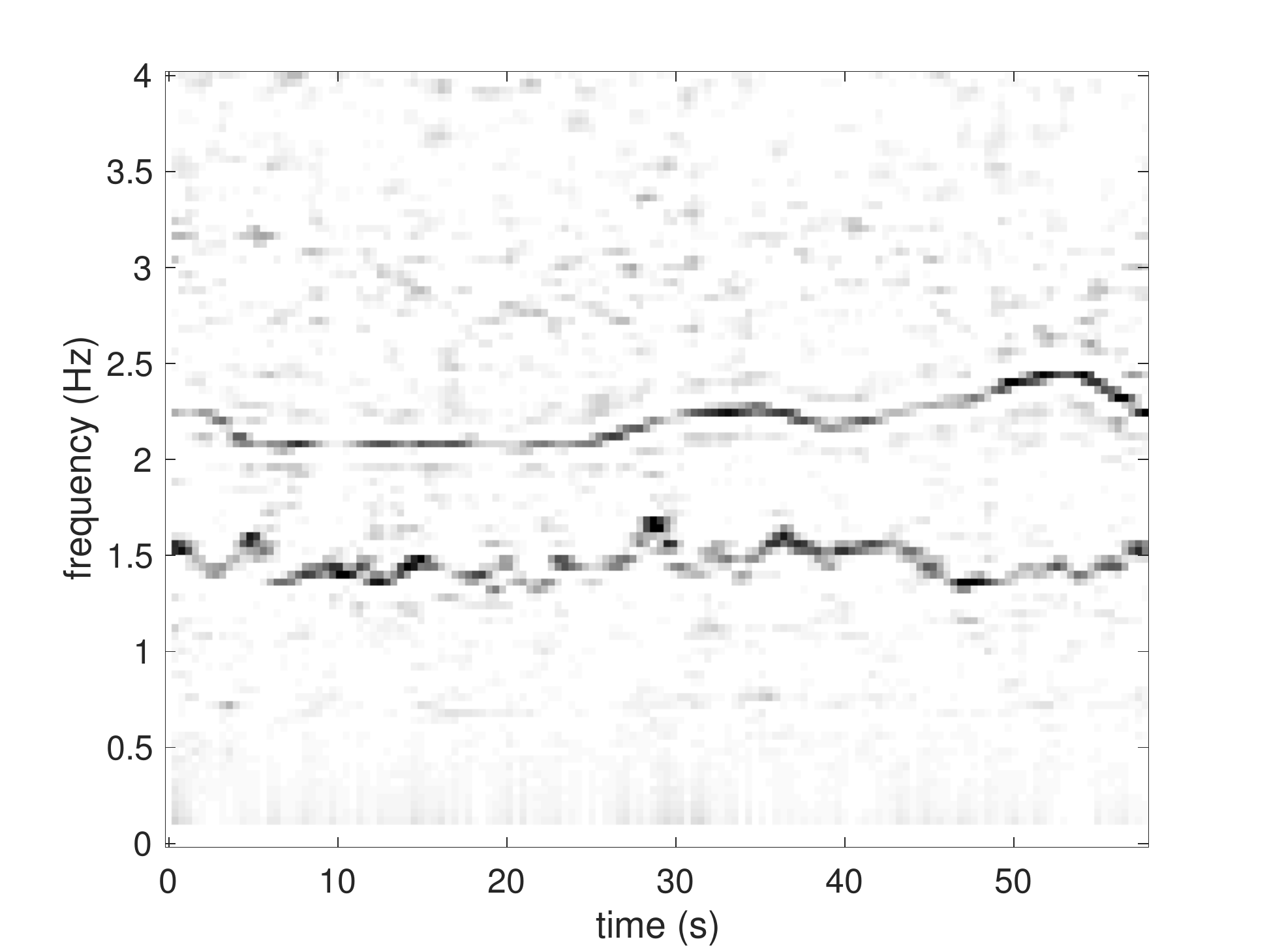}
	\caption{Left: the iSTCT of the ta-mECG signal $f$, $|U_f^{(h)}(t,\xi)|$. Right: the de-shape STFT of $f$, $|W_f^{(h)}(t,\xi)|^2$. Note that there are some light lines below 1.5Hz in the iSTCT, which are masked out in the de-shape STFT. It is clear that there are two curves in the de-shape STFT, one is around 1.5Hz, and one is around 2Hz. The curve around 1.5Hz is the maternal IHR, and the curve around 2Hz is the fetal IHR. To enhance the visibility, we only show the frequency up to 4 Hz.}
	\label{fig:3}
\end{figure}

While the de-shape algorithm has been applied to several problems \cite{deshapeapp,recycling,lin2019unexpected}, however, it has its own limitations. 
\begin{enumerate}
\item [(L1)] The first trouble is the noise. By taking the $\gamma$ power of $|V_f^{(h)}(t,\xi)|$, we might amplify the unwanted noise in a nonlinear way. 
Take the fetal ECG signal in Figure \ref{fig:1} as an example. Its iSTCT and de-shape STFT are displayed in Figure \ref{fig:3}. While we can observe two dark curves in the de-shape STFT (the curve above 2Hz is the fetal IHR, and the other one below 2Hz is the maternal IHR), there are several artifacts around them. 

\item [(L2)] Second, the inversion operation in the iSTCT would flip the noisy short-quefrency content into the high-frequency area. However, since the spectral envelope associated with the wave-shape function can be complicated, even when there is no noise, we generate complicated short-quefrency content in the STCT. Thus, a careful handle of the short-quefrency content is critical in the de-shape STFT framework. 

\item [(L3)] Third, the de-shape STFT might not faithfully reflect the strength of the non-sinusoidal oscillation. Indeed, since the de-shape STFT comes from a direct entrywise multiplication of the STFT and its iSTCT \eqref{Definition deshape STFT}, only the strength of the fundamental component will be preserved. For example, suppose we have two oscillatory signals of the same energy, $f_1(t)=As_1(\xi_0t)$ and $f_2(t)=As_2(\xi_0t)$, where $A>0$ and $\|s_1\|_2=\|s_2\|_2=1$, but $|\hat{s}_1(1)|\ll |\hat{s}_2(1)|$; that is, the fundamental component of $s_1$ is much weaker than that of $s_1$. In this case, the STFT of $f_1$ around $\xi_0$ will be much weaker than that $f_2$, and hence the de-shape STFT. 
\end{enumerate}
The above three issues might generate troubles when we interpret the results for scientific research. We thus propose a novel TF analysis tool motivated by these practical issues.

\subsection{Proposed algorithm -- Ramanujan de-shape }

To handle the above-mentioned limitations (L1)-(L3), we consider the RPT. Note that in general PT is a global time-domain approach, we may not be able to capture the time-varying oscillatory pattern. Thus, in order to capture the time-varying period, or the IF, of our analyzed dataset, we apply the PT in the frequency domain. We coined the algorithm {\em Ramanujan de-shape} (RDS) algorithm if we apply the RPT, or vectorized RDS (vRDS) if we apply the vRPT.

We now describe the RDS and vRDS algorithms. From now on, we follow the convention and assume that the indices of vectors and matrices start at $1$.
Suppose the continuous signal $f$ is sampled at uniform interval $\Delta_t>0$, and we denote its discretization as $\mathbf{f}\in\mathbb{R}^N$, where $\mathbf{f}(n) = f(n\Delta_t)$, where $1 \leq n\leq N$; that is, we ``record'' the signal for $N\Delta_t$ long.  
Choose a discrete window function $\mathbf{h}\in\mathbb{R}^{2K+1}$, where $K\in \mathbb{N}$. For example, $\mathbf{h}$ can be a discretization of a Gaussian, so that $\mathbf{h}(K+1)$ is the center of the Gaussian. 
Fix $M\in\mathbb{N}$, and let $M+1$ be the number of bins in the frequency axis of our targeting TFR. We choose STFT to generate our targeting TFR. Then, the STFT of $\mathbf{f}$ would be a matrix $\mathbf{V_f}\in\mathbb{C}^{(M+1)\times N}$, whose entries are
\[
\mathbf{V_f}(m,n) = \sum_{k=1}^{2K+1}\mathbf{f}(n+k-K-1)\mathbf{h}(k)e^{\frac{-2\pi i (k-1) (m-1)}{2M}},
\]
where we define $\mathbf{f}(n)=0$ when $n<1$ or $n>N$, $n=1,2,\dots,N$ is the time index, and $m=1,2,\dots,M+1$ is the frequency index. 
For a fixed $n$, we apply the program (\ref{lasso}) to each column of the matrix $\mathbf{V_f}(m,n)$, that is, the frequency axis, to obtain a matrix $\mathbf{X_f}\in\mathbb{R}^{\Phi(P_{max})\times N}$, where 
\[
\mathbf{X_f}(\cdot,n) \in \argmin\limits_{x\in \mathbb{R}^{\Phi(P_{max})}} \frac{1}{2}\norm{|\mathbf{V_f}(\cdot,n)|^{\gamma} - Bx }_2^2 + \lambda \norm{ x }_1, \quad \gamma>0.
\]
The parameter $\gamma$ should be chosen small enough like that in STCT in order to flatten the spectral envelope. The dictionary $B$ is the same as in (\ref{lasso}) with the chosen periodicity penalization function $\zeta$ and length $M+1$. We may only utilize the STFT with the corresponding frequency no more than a given maximal frequency $f_{max}$; that is, we only keep the first $1 + \floor{\frac{f_{max}}{2M\Delta_t }}$ entries of $|\mathbf{V_f}(\cdot,n)|^{\gamma}$ and the dictionary $B$ has $1 + \floor{\frac{f_{max}}{2M\Delta_t }}$ rows in the above minimization. We then transform $\mathbf{X_f}$ to a matrix $\mathbf{E_f}\in\mathbb{R}^{P_{max}\times N}$, where 
\begin{equation}\label{RDS-2}
\mathbf{E_f}(j,n) = EOP_{\mathbf{X_f}(\cdot,n)}(j)\,,
\end{equation}
where $j=1,\ldots, P_{max}$.
Then, define $\mathbf{P_f}(m,n)\in\mathbb{C}^{(M+1)\times N}$ as 
\begin{equation}\label{RDS-3}
\mathbf{P_f}(m,n) =\begin{cases}
\mathbf{E_f}(m,n) & 1\leq m\leq P_{max}; \\
0 & P_{max}<m\leq M+1.
\end{cases}
\end{equation}
If we have a priori knowledge about the periods and the periodicity penalization function is designed properly, we may assign $\mathbf{P_f}(m,n)$ to be the final TFR; that is,
\begin{equation}\label{RDS-4}
\mathbf{R_f}(m,n) = \mathbf{P_f}(m,n).
\end{equation}
In general, when we do not have the prior knowledge about the period,
the final TFR would be the entry-wise product
\begin{equation}\label{RDS-5}
\mathbf{R_f}(m,n) = |\mathbf{V_f}(m,n)|^{\gamma'} \mathbf{P_f}(m,n)\,,
\end{equation}
where $\gamma'>0$ is a parameter that the user can choose. Indeed, since there is a nontrivial relationship between Ramanujan subspaces of periods $p$ and $p'$ when $(p,p')\neq 1$, it is possible that the RPT outputs spurious spikes in the divisors of the fundamental IF we have interest. Thus, we count on the STFT to ``mask'' those spurious spikes. This is possible since in the spectrogram, we have the fundamental IF and their multiples. 
The pseudo-code for RDS is shown in Algorithm \ref{Algorithm RDS}. 

\begin{algorithm}[hbt!]\label{Algorithm pseudocode for RDS}
	\caption{Ramanujan de-shape (RDS)}
    \begin{flushleft}
    	\textbf{INPUT:} sampled signal $\mathbf{f}(n)\in\mathbb{R}^N$, window function $\mathbf{h}(n)$, $P_{max}$, $f_{max}$, $\lambda$, $\gamma$, $\gamma'$, and the periodicity penalization function $\zeta$; \\
    	\textbf{OUTPUT:} RDS $\mathbf{R_f}(m,n)$.
    \end{flushleft}
	\begin{algorithmic}[1]\label{Algorithm RDS}
		\STATE Compute the STFT $\mathbf{V_f}(m,n)$;
		\STATE Set $\mathbf{T_f}\in\mathbb{C}^{(\floor{\frac{f_{max}}{2M\Delta_t }}+1)\times N}$, where $\mathbf{T_f}(m,n) = \mathbf{V_f}(m,n)$ for $1\leq m\leq \floor{\frac{f_{max}}{2M\Delta_t }}+1$ and $1\leq n \leq N$, and zero otherwise;
		\STATE for $n$ from $1$ to $N$, solve for 
		\[
		\mathbf{X_f}(\cdot,n) \in \argmin\limits_{x\in \mathbb{R}^{\Phi(P_{max})}} \frac{1}{2}\norm{|\mathbf{T_f}(\cdot,n)|^{\gamma} - Bx }_2^2 + \lambda \norm{ x }_1;
		\]
		\STATE Obtain $\mathbf{E_f}(m,n)$ by $\mathbf{E_f}(j,n) = EOP_{\mathbf{X_f}(\cdot,n)}(j)$;
		\STATE Extend $\mathbf{E_f}(m,n)$ to $\mathbf{P_f}(m,n)$;
		\STATE $\mathbf{R_f}(m,n) = |\mathbf{V_f}(m,n)|^{\gamma'} \mathbf{P_f}(m,n)$ or $\mathbf{R_f}(m,n) = \mathbf{P_f}(m,n)$, depending on the prior knowledge.
	\end{algorithmic}
\end{algorithm}

The computational complexity of Algorithm \ref{Algorithm RDS} is $O(N(F(\floor{\frac{f_{max}}{2M\Delta_t }}+1)+M\log M))$, where $F(\floor{\frac{f_{max}}{2M\Delta_t }}+1)$ is the complexity of running the minimization at step 3 in the algorithm. Thus, if $M$ is fixed and hence $F(\floor{\frac{f_{max}}{2M\Delta_t }}+1)$ is fixed, overall the computational complexity is $O(N)$.

In order to obtain a smoother TFR, instead of applying RPT at each time slot, we apply vRPT on the neighborhood of that time; that is, for each time index $n$ and chosen $k\in\mathbb{N}$, we rather solve for
\[
\mathbf{X_f}(\cdot,n) \in \argmin\limits_{x\in \mathbb{R}^{\Phi(P_{max})}} \frac{1}{2}\norm{|\mathbf{\tilde{V}_f}(\cdot,n)|^{\gamma} - Bx\mathbf{1}^T }_{F}^2 + \lambda \norm{ x }_1, \quad \gamma>0,
\]
where $\mathbf{1}$ is a $(2k+1)$-dim vector with all entries $1$, $|\mathbf{\tilde{V}_f}(\cdot,n)|^{\gamma} \in \mathbb{R}^{(M+1)\times(2k+1)}$, and
\[
|\mathbf{\tilde{V}_f}(\cdot,n)|^{\gamma} := 
\begin{bmatrix}
|\mathbf{V_f}(\cdot,n-k)|^{\gamma} & \dots & |\mathbf{V_f}(\cdot,n)|^{\gamma} & \dots & |\mathbf{V_f}(\cdot,n+k)|^{\gamma}
\end{bmatrix},
\]
except that the smallest time index would be 1 if $n-k<1$ and the largest time index would be $N$ if $n+k>N$, and $\norm{\cdot}_{F}$ is the matrix Frobenius norm. 
With $\mathbf{X_f}$, we can follow the same line as those from \eqref{RDS-2} to \eqref{RDS-5} to construct the final TFR, denoted as $\mathbf{VR_f}$.
We call this proposed algorithm the vRDS, whose pseudo-code is shown in Algorithm \ref{Algorithm vRDS}. Note that when $k=0$, the  vRDS is just RDS.

\begin{algorithm}[hbt!]\label{Algorithm pseudocode for vRDS}
	\caption{Vectorized Ramanujan de-shape (vRDS)}
	\begin{flushleft}
		\textbf{INPUT:} sampled signal $\mathbf{f}(n)\in\mathbb{R}^N$, window function $\mathbf{h}(n)$, $P_{max}$, $f_{max}$, $\lambda$, $\gamma$, $\gamma'$, number of time slots $k$, and the periodicity penalization function $\zeta$; \\
		\textbf{OUTPUT:}  vRDS $\mathbf{VR_f}(m,n)$.
	\end{flushleft}
	\begin{algorithmic}[1]\label{Algorithm  vRDS}
		\STATE Compute the STFT $\mathbf{V_f}(m,n)$;
		\STATE Set $\mathbf{T_f}\in\mathbb{C}^{(\floor{\frac{f_{max}}{2M\Delta_t }}+1)\times N}$, where $\mathbf{T_f}(m,n) = \mathbf{V_f}(m,n)$ for $1\leq m\leq \floor{\frac{f_{max}}{2M\Delta_t }}+1$ and $1\leq n \leq N$, and zero otherwise;
		\STATE for $n$ from $1$ to $N$, solve for 
		\[
		\mathbf{X_f}(\cdot,n) \in \argmin\limits_{x\in \mathbb{R}^{\Phi(P_{max})}} \frac{1}{2}\norm{|\mathbf{\tilde{T}_f}(\cdot,n)|^{\gamma} - Bx*\mathbf{1}^T }_{F}^2 + \lambda \norm{ x }_1,
		\]
		where 
		\[
		|\mathbf{\tilde{T}_f}(\cdot,n)|^{\gamma} = 
		\begin{bmatrix}
		|\mathbf{T_f}(\cdot,n-k)|^{\gamma} & \dots & |\mathbf{T_f}(\cdot,n)|^{\gamma} & \dots & |\mathbf{T_f}(\cdot,n+k)|^{\gamma}
		\end{bmatrix};
		\]
		\STATE Obtain $\mathbf{E_f}(m,n)$ by $\mathbf{E_f}(j,n) = EOP_{\mathbf{X_f}(\cdot,n)}(j)$;
		\STATE Extend $\mathbf{E_f}(m,n)$ to $\mathbf{P_f}(m,n)$;
		\STATE $\mathbf{VR_f}(m,n) = |\mathbf{V_f}(m,n)|^{\gamma'} \mathbf{P_f}(m,n)$ or $\mathbf{VR_f}(m,n) = \mathbf{P_f}(m,n)$, depending on the prior knowledge.
	\end{algorithmic}
\end{algorithm}

\section{Numerical Results}\label{Section Numerical results}
In this section we illustrate the effectiveness of the proposed $l_1$ penalized PT and RDS on several examples. All the numerical examples (as well as those in previous sections) are generated in Matlab R2019a on a 2017 15-inch Macbook Pro with 16GB memory and 2.8 GHz Quad-Core Intel Core i7 processor. The $l^1$ penalized regression is based on Alternating Direction Method of Multipliers (ADMM) \cite{Boyd} in Matlab. 
The implementation of Small to Large algorithm, M-best algorithm, Best Correlation algorithm are downloaded from \url{https://sethares.engr.wisc.edu/downloadper.html} announced by the inventors of those algorithms. 
For the reproducibility purpose, the Matlab code is available upon requests.

\subsection{Simulated signal}

We start by considering two synthesized data. In the first example, we have two periodic signals $y_1(n)$ and $y_2(n)$ of length 500, where $y_1(n) = 5$ if $n$ is a multiple of 15, and 0 otherwise; $y_2(n) = 8$ if $n$ is a multiple of 21, and 0 otherwise. Take two white random processes $\xi_n, \eta_n$, $n=1,\ldots, 250$ that are independently sampled from the uniform distribution on $(-0.3,0.3)$. 
Generate $A_1,A_2\in\mathbb{R}^{250}$ so that $A_1(n) = \xi_n$ if $n$ is a multiple of 15, and 0 otherwise, and $A_2(n) = \eta_n$ if $n$ is a multiple of 21, and 0 otherwise. We also set $A_1(15) = A_2(21)=0$. 
Then $y_1(n)$ and $y_2(n)$ are pointwisely multiplied by $A_1(n)$ and $A_2(n)$ respectively, and set $y(n) = A_1(n)y_1(n)+A_2(n)y_2(n)$. 
To make the example more challenging, we further apply an envelope $E(n)$ to $y(n)$, where $E(n) = e^{\frac{-n^2}{250^2}}$. Then, we apply the proposed RPT to $\tilde{y}(n) = E(n)y(n)$, with $P_{max} = 50$, $\lambda = 1$ and the periodicity penalty function $\zeta(p) = p$. Results are shown in Figure \ref{fig:env1}. In the period estimation, we can find peaks at 15 and 21 as well as their divisors 3, 5 and 7. The arise of 3, 5, 7 is due to the fact that by Proposition \ref{prop4}, a $p$-periodic signal can contain components in Ramanujan subspaces $\mathcal{R}_{p_i,N}$, where $p_i$ are divisors of $p$.

\begin{figure}[!htbp]
	\includegraphics[trim=70 50 70 50, clip, width=1\textwidth]{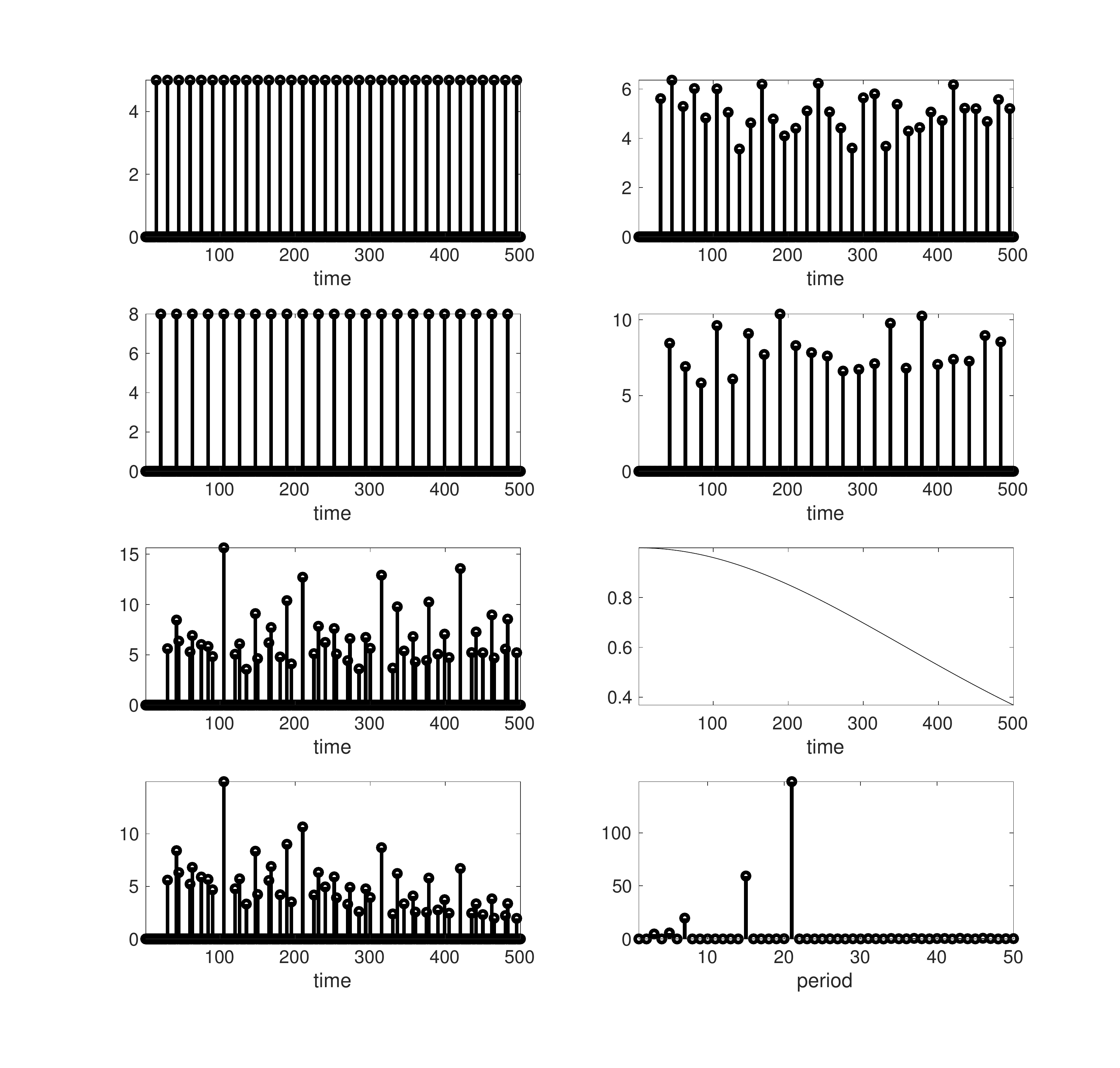}
	\caption{Top row: left panel $y_1(n)$; right panel $A_1(n) y_1(n)$. Second row: left panel $y_2(n)$; right panel $A_2(n) y_2(n)$. Third row: left panel $y(n)$; right panel envelope $E(n)$. Bottom row: left panel $\tilde{y}(n) = E(n)y(n)$; right panel period estimation of $\tilde{y}(n)$. Clearly, two periods $15$ and $21$ are captured, along with some unwanted periods that have small energies.}
	\label{fig:env1}
\end{figure}

The second example is a signal with a series of Gaussian peaks of length 301. $y_1(n)$ and $y_2(n)$ are two periodic signals, where $y_1(n)$ is 27-periodic with Gaussian peaks centered at every 27 points with width 9 points; $y_2(n)$ is 17-periodic with Gaussian peaks centered at every 17 points with width 13 points, and both have peak values 3. 
Set $\xi_n, \eta_n$ to be i.i.d. uniformly distributed on $(-0.2,0.2)$ and $A_1(n) = \xi_n$ and $A_2(n) = \eta_n$.
Define $E(n) := e^{\frac{-n^2}{(2\times 301)^2}}$ and $e(n)$ is i.i.d. Gaussian noise with standard deviation 1.11.
Finally, set $y(n) = A_1(n)y_1(n)+A_2(n)y_2(n)$ and $\tilde{y}(n) = E(n)y(n)+e(n)$. The signal to noise ratior (SNR) is 5.21, where SNR is defined as $20\log\frac{\text{std}(E(n)y(n))}{\text{std}(e(n))}$.
Figure \ref{fig:env2} shows the result of the proposed RPT on $\tilde{y}(n)$ with $P_{max} = 50$, $\lambda = 0.5$ and $\zeta(p)=p^2$. We observe peaks at periods 17 and 27.

\begin{figure}[h]
	\includegraphics[trim=70 50 70 50,clip,width=1\textwidth]{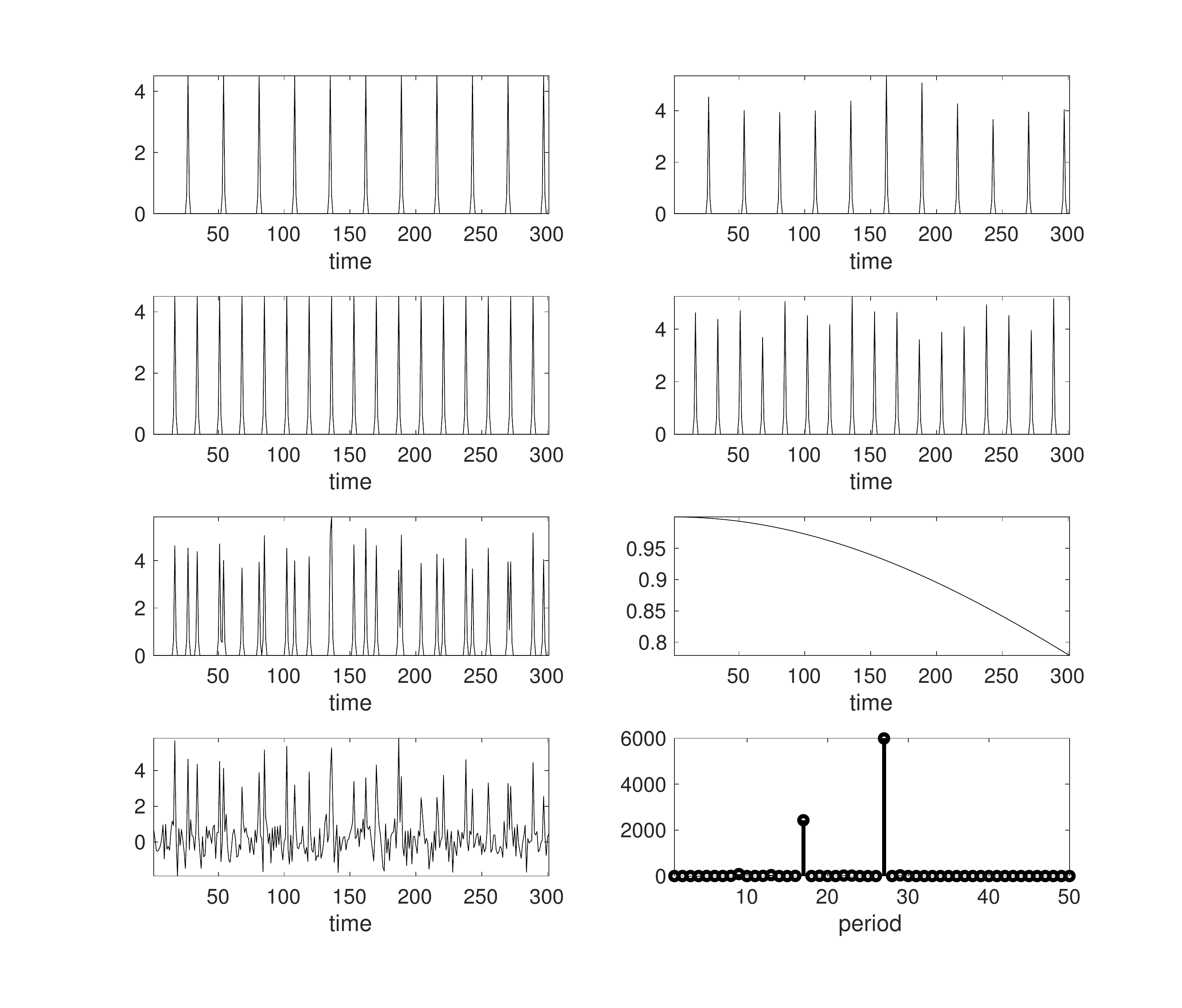}
	\caption{Top row: left panel $y_1(n)$; right panel $A_1(n) y_1(n)$. Second row: left panel $y_2(n)$; right panel $A_2(n) y_2(n)$. Third row: left panel $y(n)$; right panel envelope $E(n)$. Bottom row: left panel $\tilde{y}(n) = E(n)y(n)+e(n)$, SNR = 5.21; right panel period estimation of $\tilde{y}(n)$.}
	\label{fig:env2}
\end{figure}

%
%
%
%

\subsection{Trans-abdominal maternal ECG signal}

We first show the RDS and the  vRDS of the ta-mECG shown in Figure \ref{fig:1}. 
This signal is from the 4th channel of the 17th recording of the Noninvasive Fetal ECG database for the PhysioNet Computing in Cardiology Challenge 2013 (CinC2013) \cite{goldberger2000physiobank},\footnote{\url{https://physionet.org/content/challenge-2013/1.0.0/}}. Each recording in this database contains 4 ta-mECG channels, and each channel is recorded for 60 seconds and sampled at 1000Hz. 
In this case, the signal is downsampled to $250$Hz, that is, $\Delta_t = \frac{1}{250}$ second. We de-trend the signal by the standard median filter technique \cite{median} for the ECG signal. In the STFT, each frequency bin is $0.04$Hz, so there are $M=3126$ frequency bins. 
To run the RDS and  vRDS, we choose $P_{max} = 100$ (corresponds to 4Hz), $f_{max} = 60$Hz, $\zeta(p)=p^2$ as is suggested in \cite[(32)]{PPV3}, $\gamma = 0.1$, and $\gamma'=1$. We choose $\lambda = 0.01$ for the RDS and $k = 1$ and $\lambda = 0.03$ for the vRDS. The result is shown in Figure \ref{fig:3-2}. Note that in vRDS, each time we have 3 time slots in the minimization program, and hence the quadratic term in the convex program would be approximately 3 times larger than that of the RDS. Thus, the $\lambda$ should be 3 times as that of the RDS for a fair comparison.

\begin{figure}[!htbp]
	\includegraphics[trim=20 0 50 0,clip,width=0.49\textwidth]{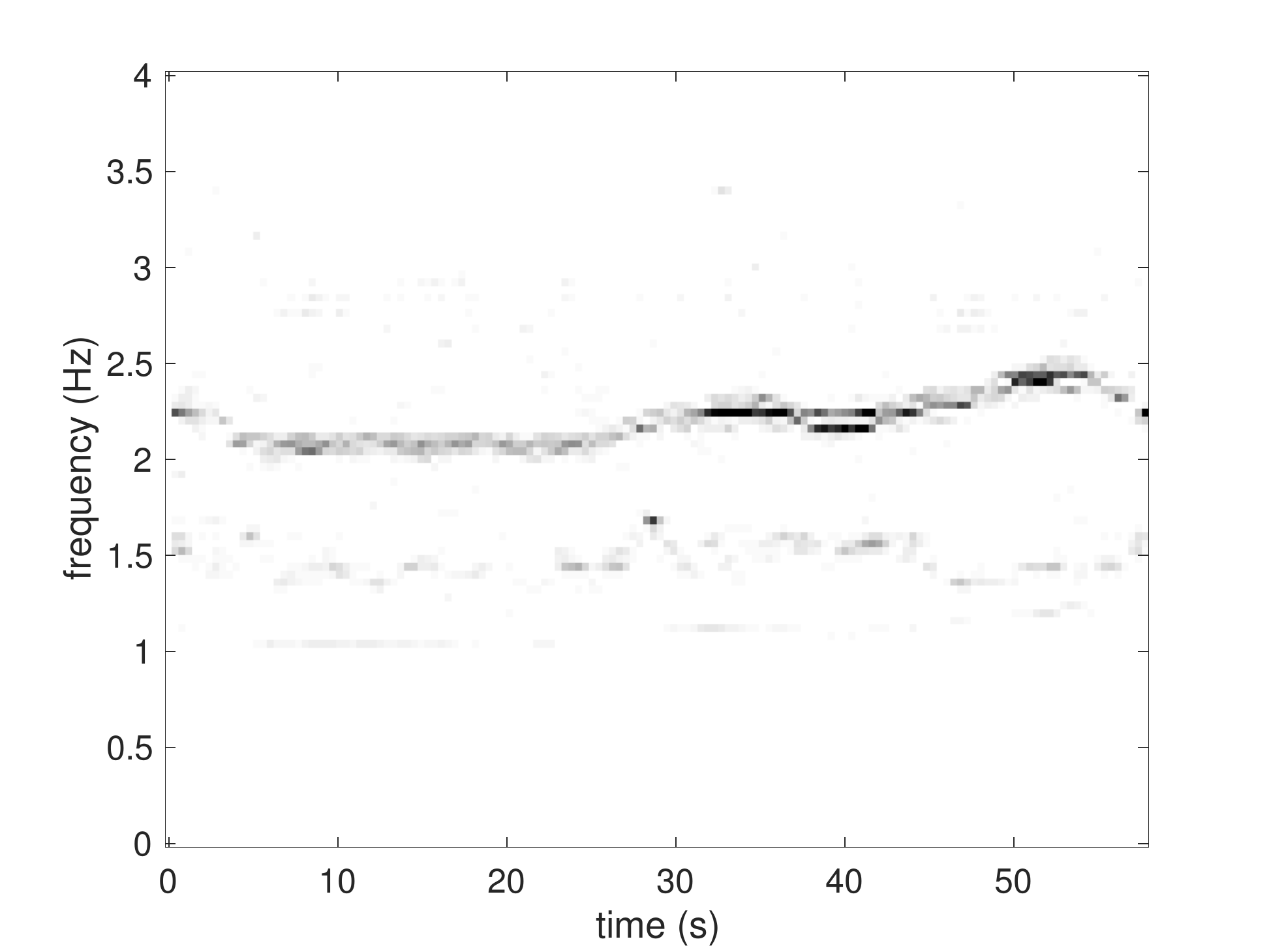}
	\includegraphics[trim=20 0 50 0,clip,width=0.49\textwidth]{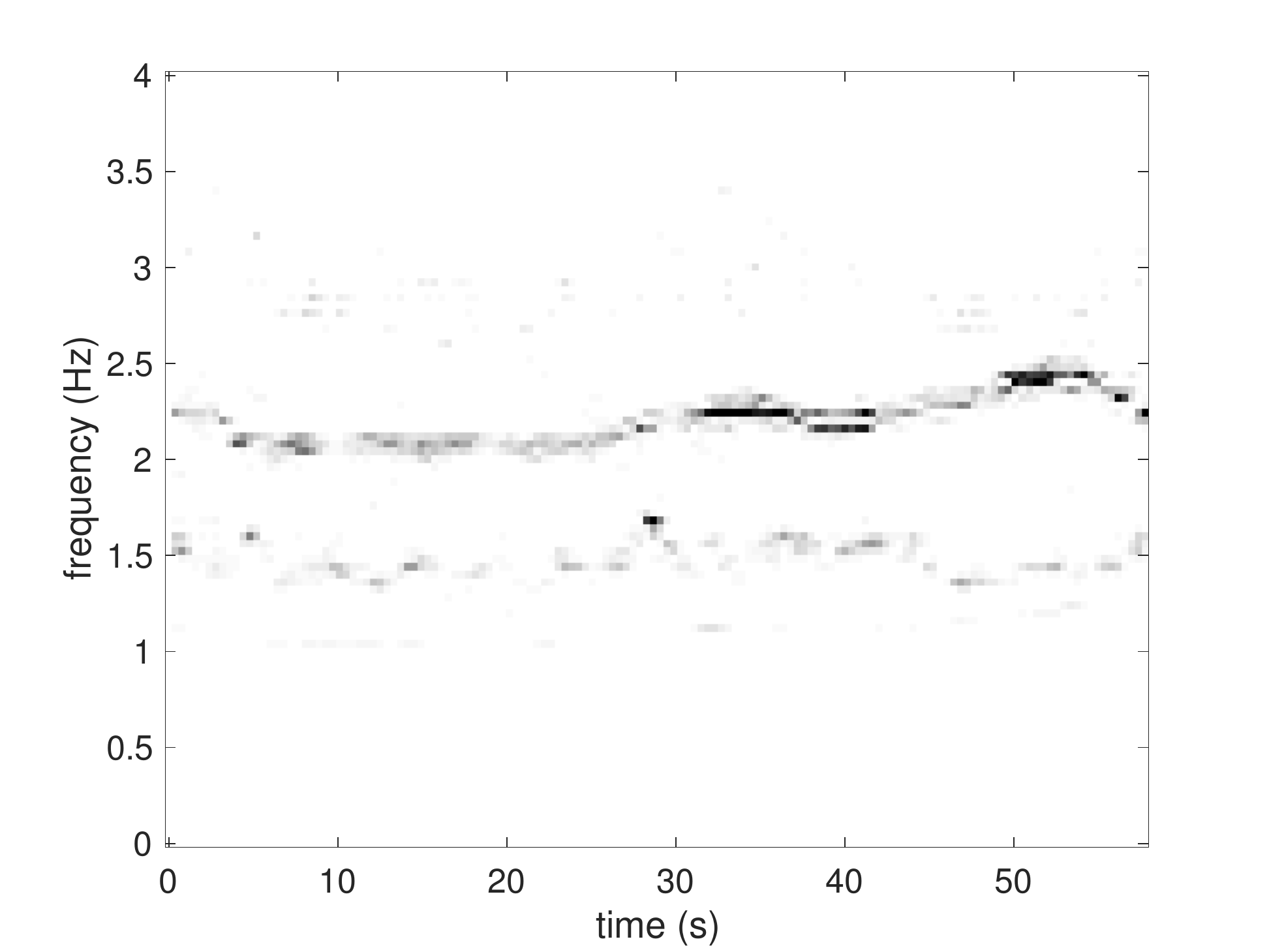}
	\caption{Left: the $\mathbf{P_f}$ of the ta-mECG signal $f$. Right: the Ramanujan de-shape (RDS) $\mathbf{R_f}$. Note that there is a curve around 1Hz in $\mathbf{P_f}$, which is masked out in $\mathbf{R_f}$. It is clear that there are two curves in the RDS, one is around 1.5Hz, and one is around 2Hz. The curve around 1.5Hz is the maternal IHR, and the curve around 2Hz is the fetal IHR. Compared with the de-shape STFT shown in Figure \ref{fig:3}, the TFR generated by RDS is cleaner. To enhance the visibility, we only show the frequency up to 4 Hz.}
	\label{fig:3-2}
\end{figure}

Second, we show that the PT is not suitable to analyze this kind of signal. In Figure \ref{fig:2pt}, the results of RPT with different periodicity penalty functions are shown. In this example, we know that the maternal IHR is higher than 1.4Hz, so we take $P_{max}=200$, which is related to $250/200=1.25$ Hz, to enhance the result. Note that due to the time-varying frequency and the global nature of the PT, there are several spikes in all results and it is not clear how to interpret the results.

\begin{figure}[!htbp]
	\begin{minipage}{19.5 cm}
	\hspace{-90pt}\includegraphics[trim=20 0 40 30,clip,width=0.32\textwidth]{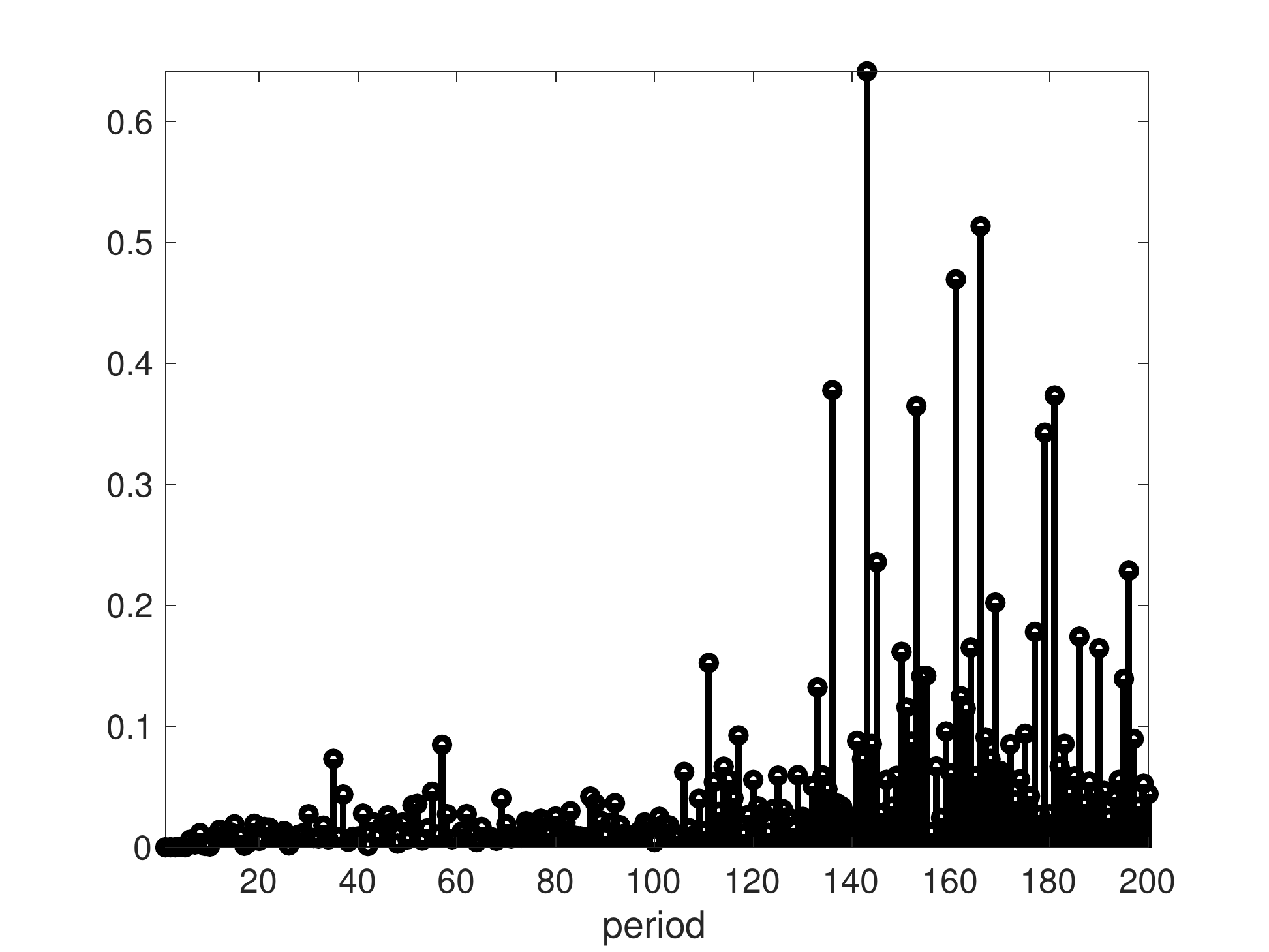}
	\includegraphics[trim=20 0 40 30,clip,width=0.32\textwidth]{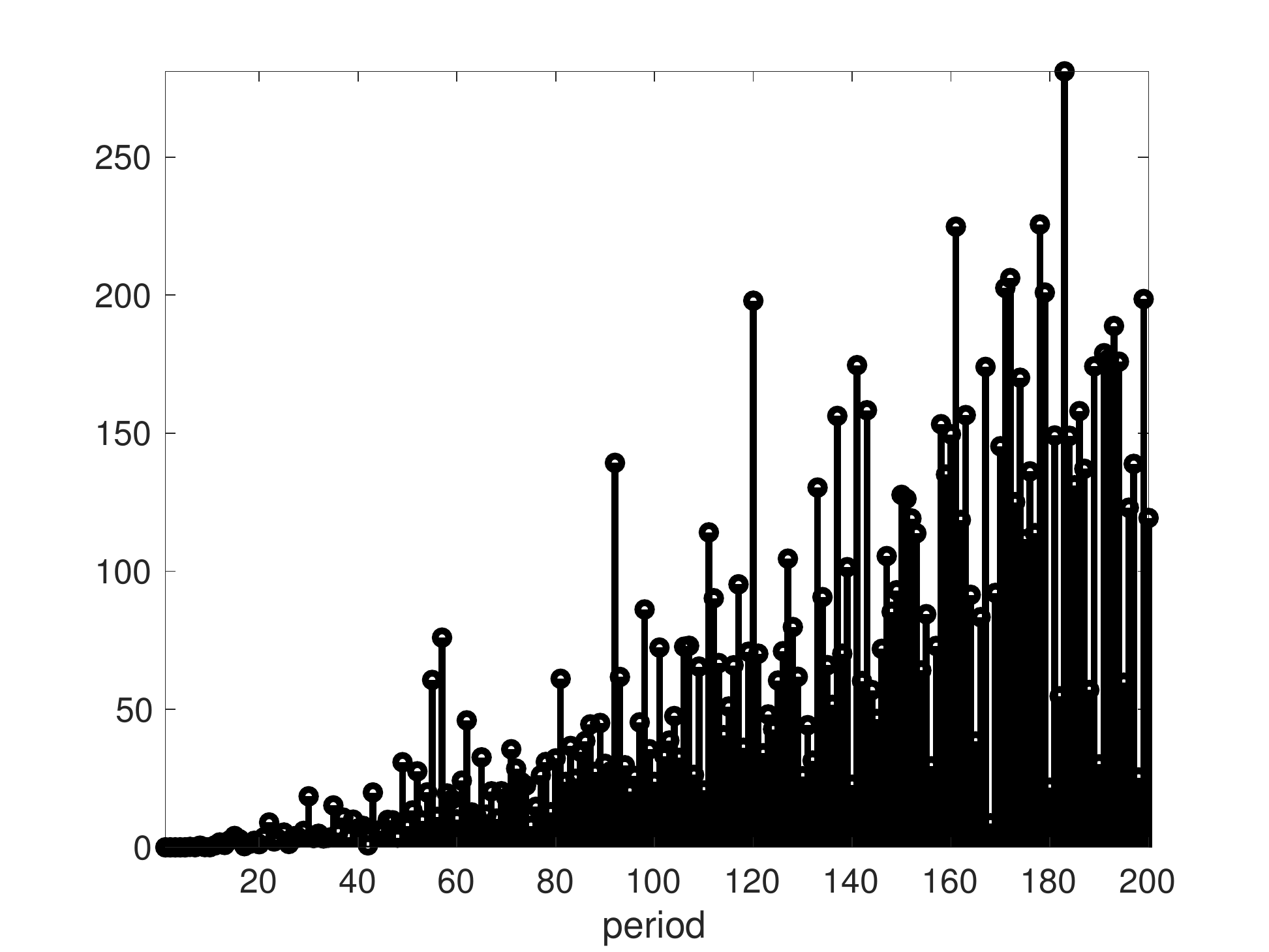}
	\includegraphics[trim=20 0 40 30,clip,width=0.32\textwidth]{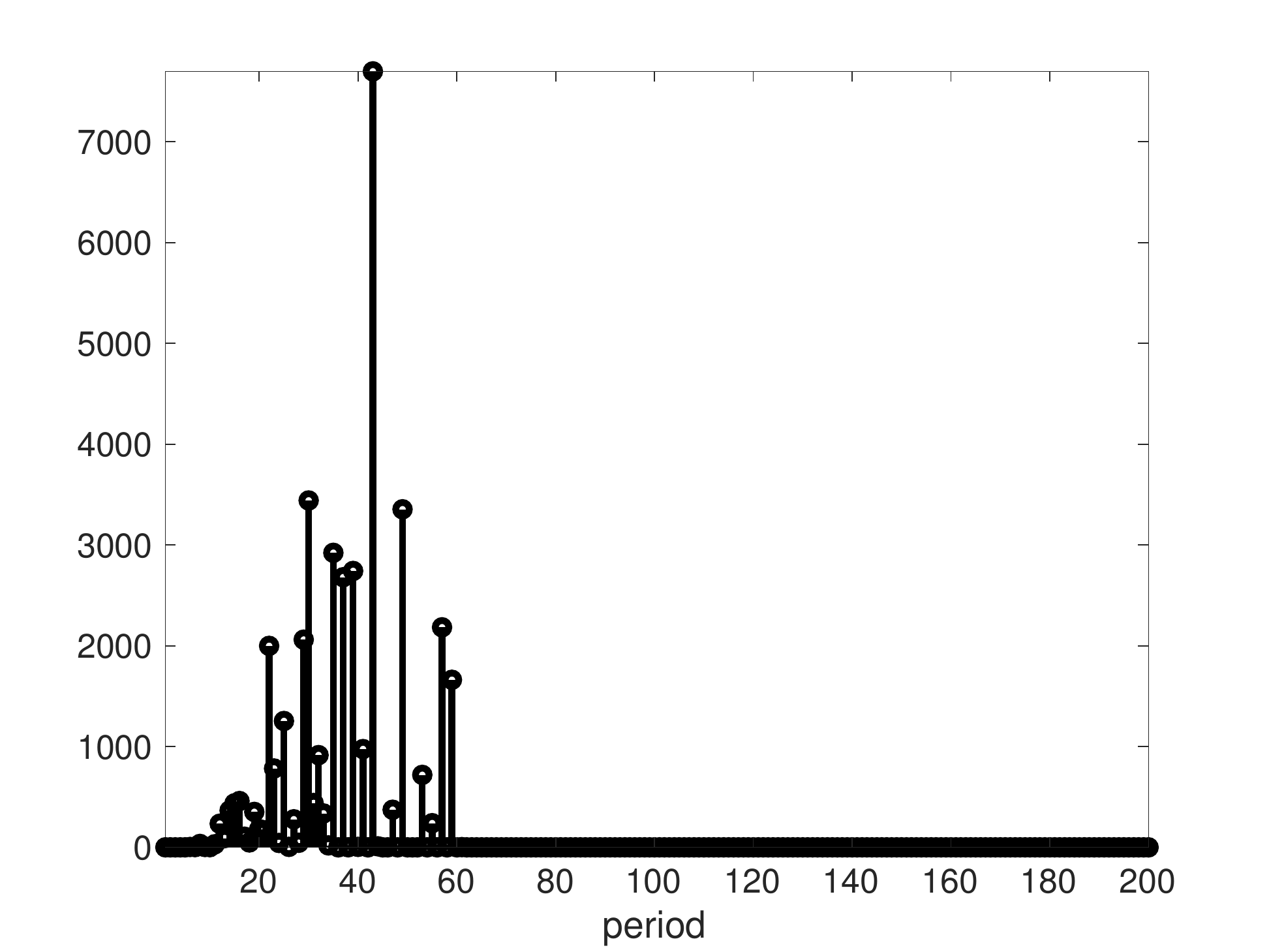}
	\end{minipage}
	\caption{Application of the RPT on the ta-mECG signal $f$ shown in Figure \ref{fig:1}.  
	Left: the application of RPT with $\zeta(p)=1$ and $\lambda=5$. Middle: the application of RPT with $\zeta(p)=p$ with $\lambda=20$. Right: the application of RPT with $\zeta(p)=p^2$ with $\lambda=5$. Since $\Delta_t = 1/250$ second, period $p$ corresponds to frequency $250/p$ Hz. As expected, due to the time-varying frequency, there are several spikes in all results.}
	\label{fig:2pt}
\end{figure}

Third, we show a comparison with other existing PT algorithms. We replace the RPT in Step 3 in Algorithm 1 by other existing PT algorithms, including the Small to Large algorithm, the M-Best algorithm and the Best correlation algorithm. We consider these algorithms since the benchmark code is available online in \url{https://sethares.engr.wisc.edu/downloadper.html}. See Figure \ref{fig:4} for the comparison. We set $P_{max} = 100$ in all cases. In the Small to Large algorithm, we observe that the threshold $T$ should be very small otherwise no periodic component would be extracted. Thus, we set $T = 0.01$ which is close to the threshold when there is at least one output period and we can observe some artifacts in the high-frequency range. In the M-Best algorithm and the Best correlation algorithm, we assume that we know the number of oscillatory components, that is $M=2$. Under this assumption, we can only observe one curve in the TFR. On the other hand, when $M=3$, these algorithms tend to show the maternal IHR but vaguely. We thus show the results of these algorithm for $M=3$. 

\begin{figure}[!htbp]
	\includegraphics[trim=20 0 30 0,clip,width=0.49\textwidth]{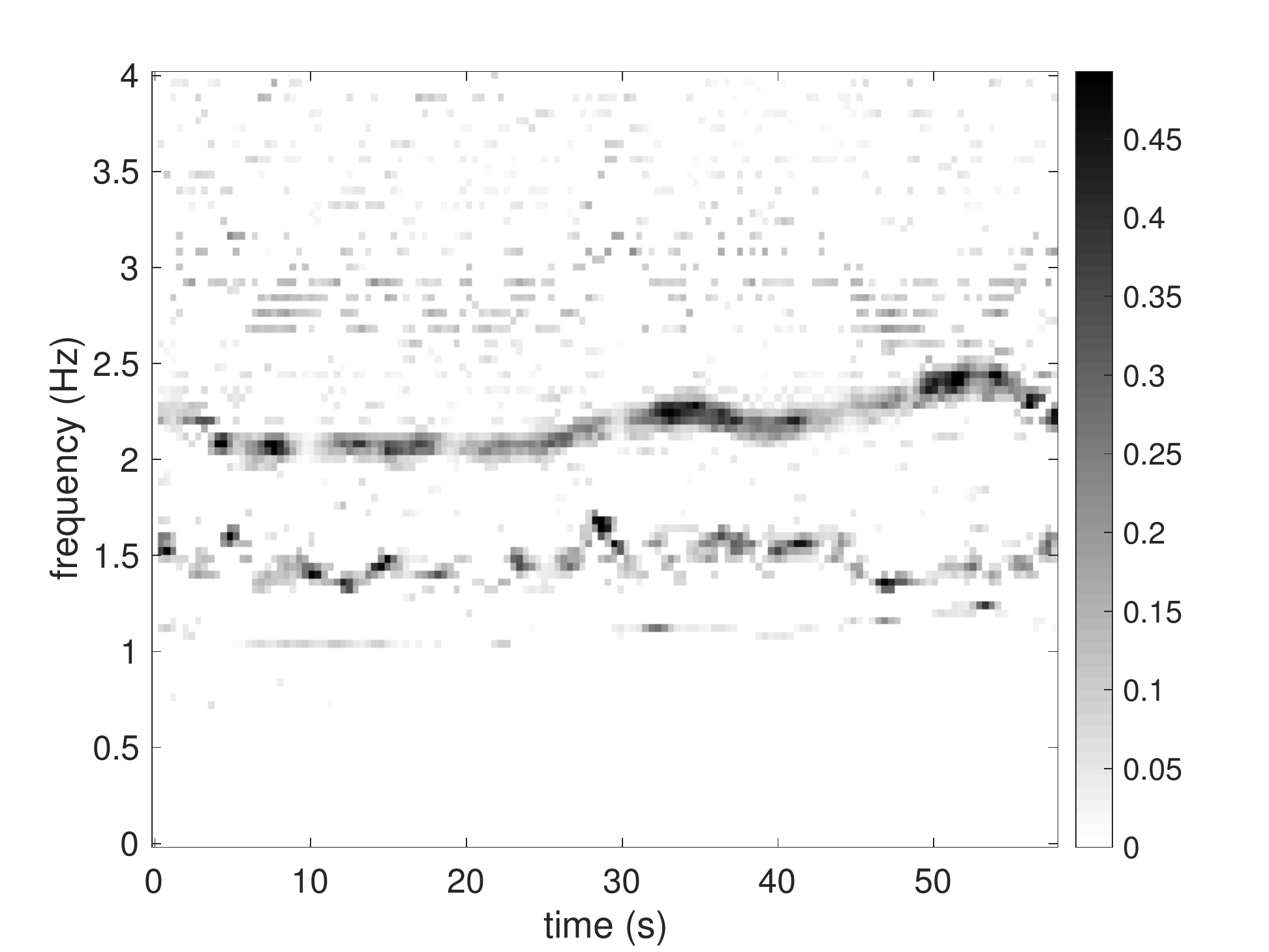}
	\includegraphics[trim=20 0 30 0,clip,width=0.49\textwidth]{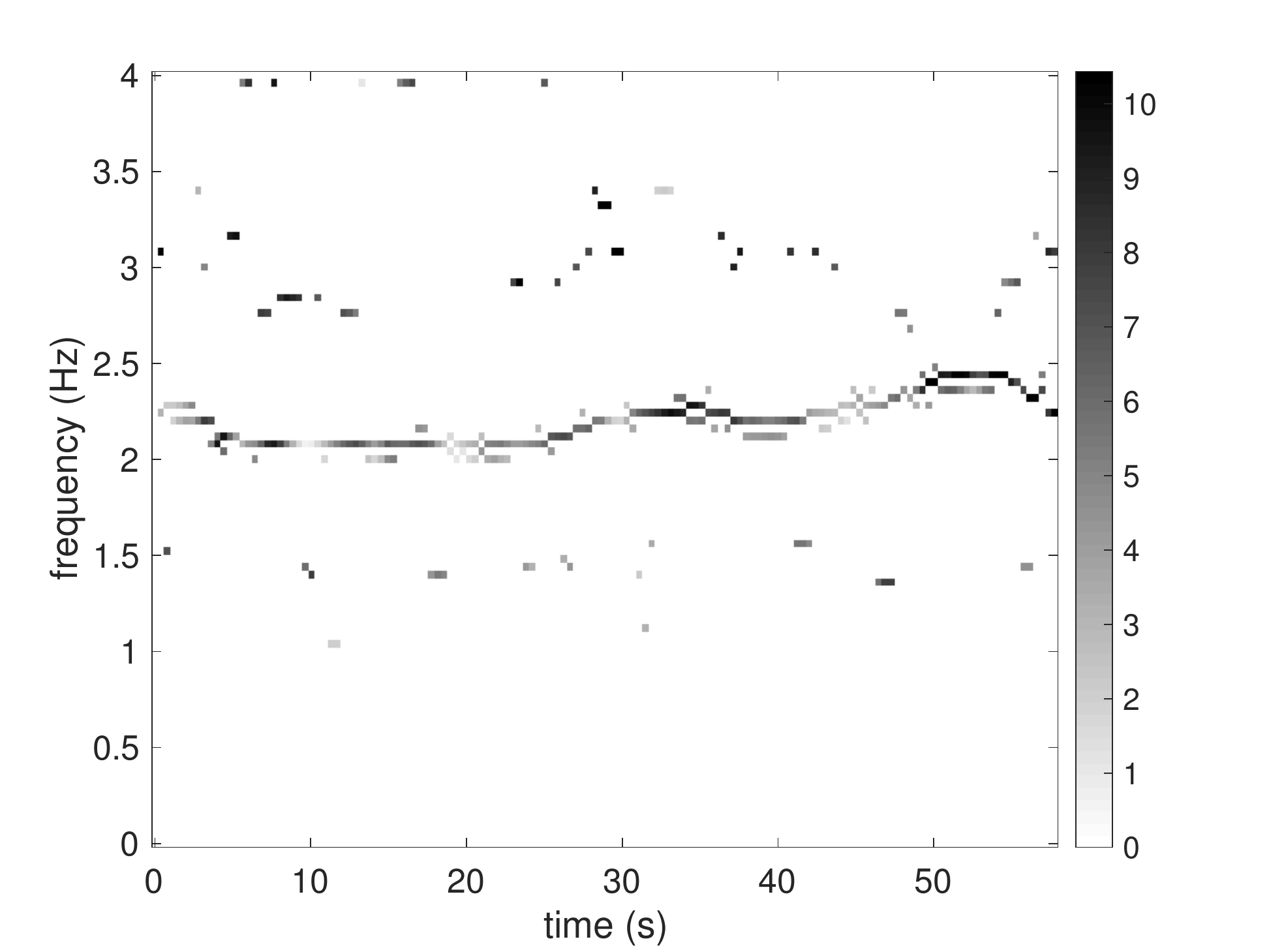}
	\begin{minipage}{19.5 cm}
	\hspace{-90pt}\includegraphics[trim=20 0 30 0,clip,width=0.32\textwidth]{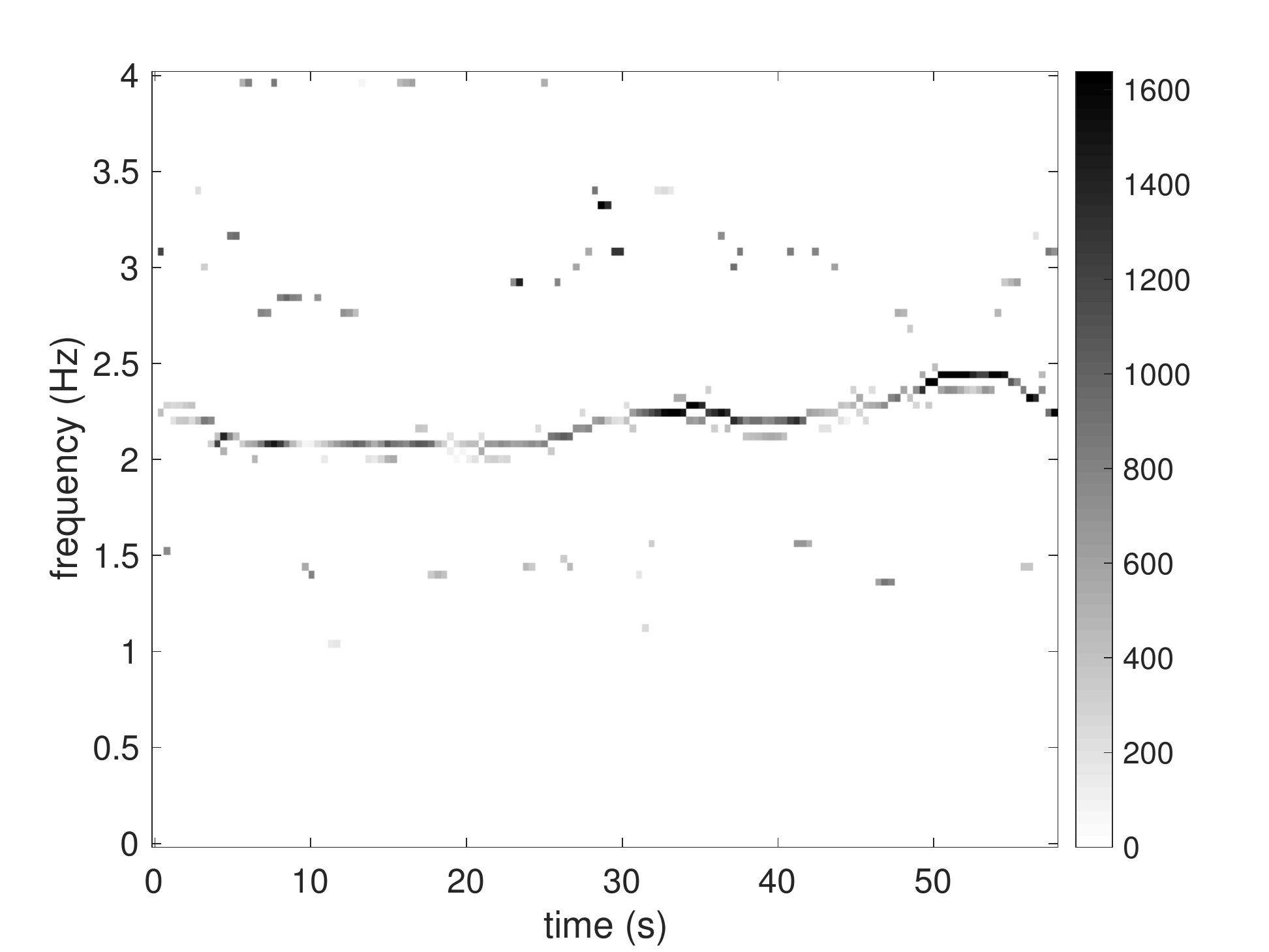}
	\includegraphics[trim=20 0 30 0,clip,width=0.32\textwidth]{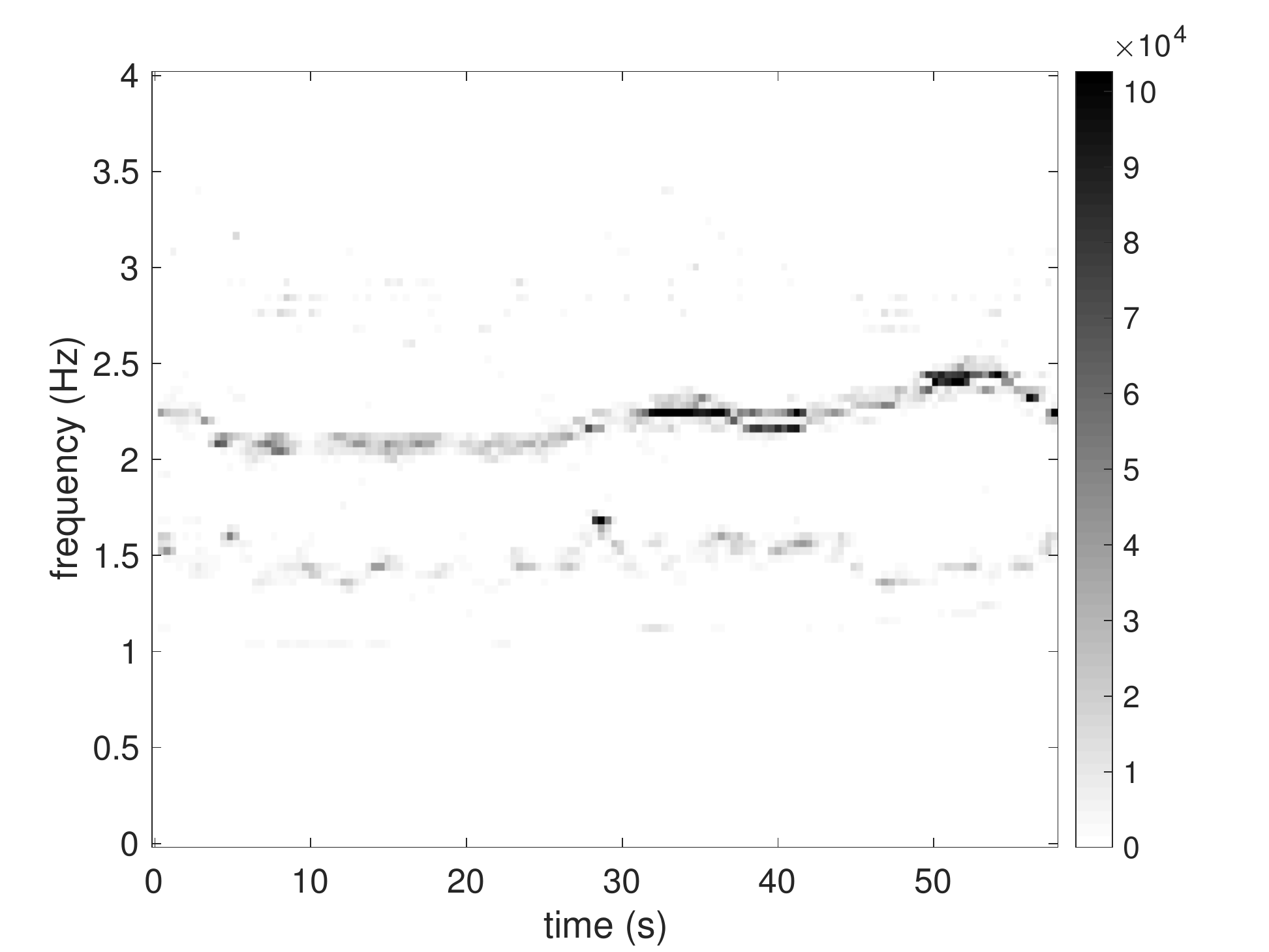}
	\includegraphics[trim=20 0 30 0,clip,width=0.32\textwidth]{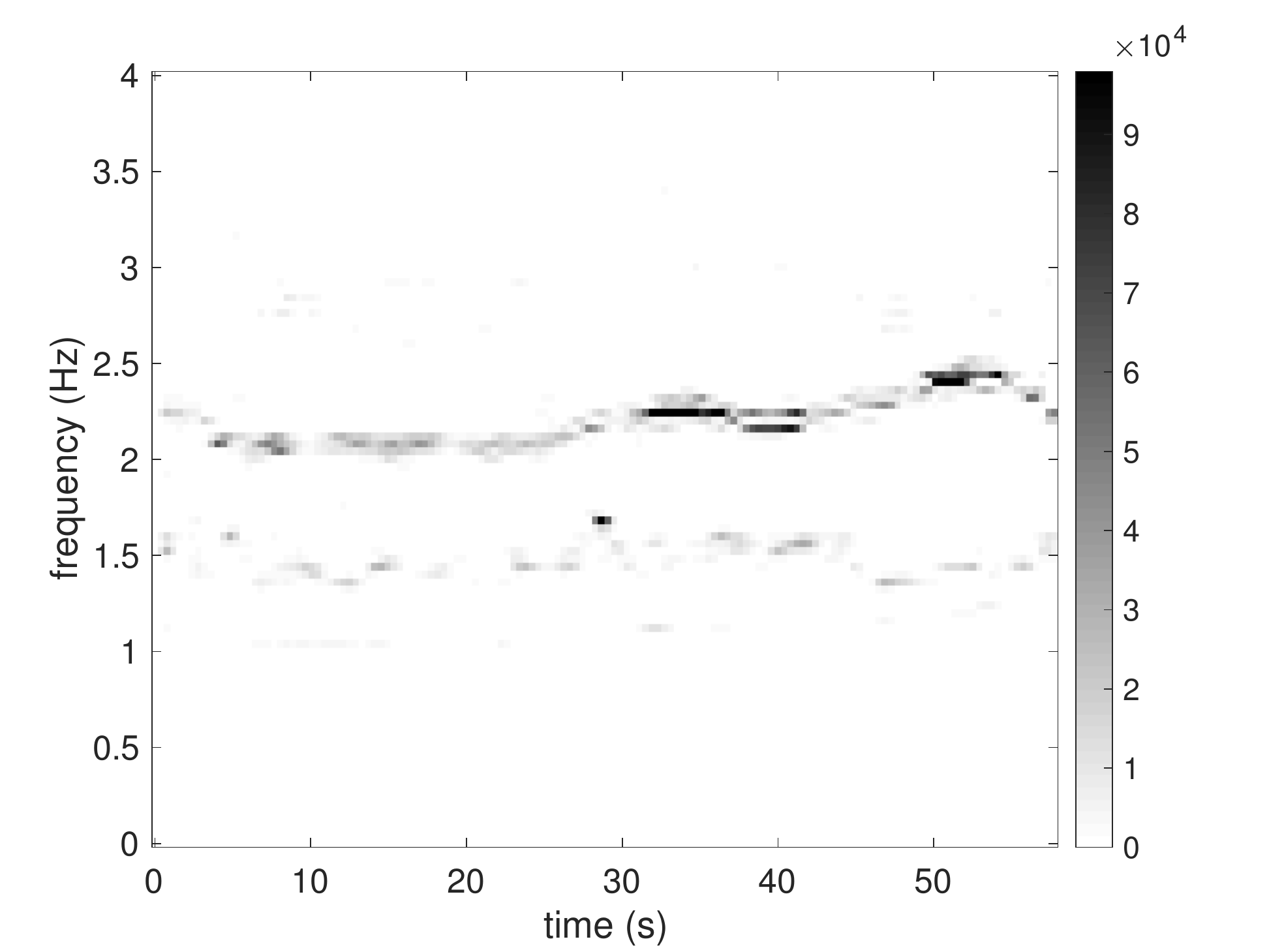}
	\end{minipage}
	\caption{Illustration of replacing the proposed Ramanujan periodic transform (RPT) by different periodicity transform algorithms in the Ramanujan de-shape. The resulting time-frequency representations are shown. Top left: Small to Large, where $T=0.01$. Top right: M-best, where $M=3$. Second row left: Best correlation, where $M=3$. Second row right: 
	our RDS approach. Bottom row: our vRDS approach.}
	\label{fig:4}
\end{figure}

Finally, we compare the RDS with the de-shape algorithm, and point out how the proposed RDS resolves the limitations of the de-shape STFT. We take the 2nd channel of the 31st recording of CinC2013 database. The signal is shown in Figure \ref{fig:fECG-1}.  
Again, we can see that the ta-mECG has non-sinusoidal oscillation pattern. After downsampling the signal to 250 Hz and de-trending it as in the previous example, we apply the RDS, vRDS and the de-shape STFT. The result of de-shape STFT is shown in Figure \ref{fig:fECG-1} and those of RDS and  vRDS are shown in Figure \ref{fig:fECG-2}. Like what is shown in Figure \ref{fig:3}, the de-shape STFT captures both the maternal IHR and fetal IHR, while the resulting TFR is corrupted by noise in the background. We then apply the RDS with $P_{max} = 100$ (corresponds to up to 4Hz), $f_{max} = 49$Hz and $\gamma = 0.1$. When $\lambda = 0.01$, we can successfully extract both the maternal IHR and the fetal IHR. Compared with the de-shape STFT, the denoising effect of the RDS is clear. While less dramatic compared with the improvement of the RDS over the de-shape STFT, the  vRDS can further stabilize the TFR. Therefore, the first and second limitations, (L1) and (L2), of de-shape STFT are resolved. 
When $\lambda = 0.03$, the fetal IHR is suppressed and the maternal IHR is kept. This is due to the property that the energy of the oscillatory signal is well preserved by the PT and the fact that the mECG has a larger energy than that of the fECG. Moreover, since the penalty matrix $D$ penalizes more on high frequency, when we set a larger $l_1$ penalty, the fetal ECG is removed. This is an important feature in that we can more adaptively determine which component is associated with the mECG. Therefore, the third limitation, (L3), of the de-shape STFT is alleviated. We will explore its clinical value in the future work. 
%

\begin{figure}[h]
	\includegraphics[trim=110 0 120 0,clip,width=1\textwidth]{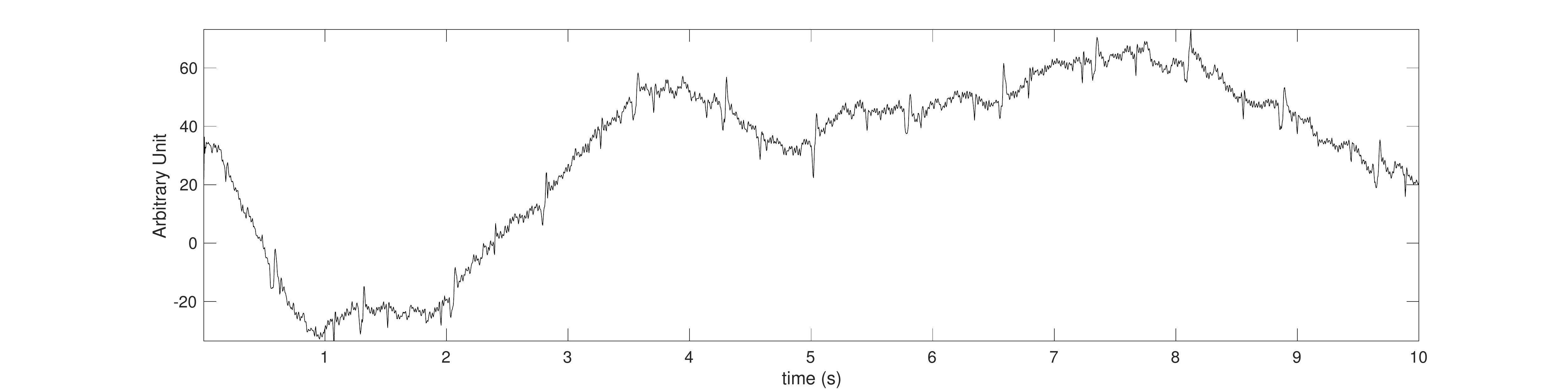}
	\includegraphics[trim=110 0 120 0,clip,width=1\textwidth]{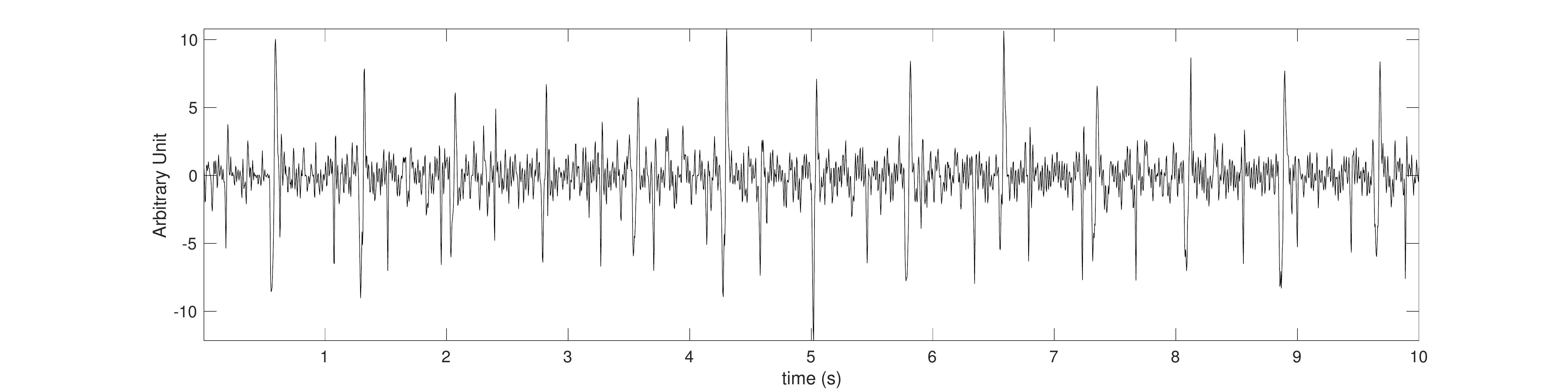}
	\includegraphics[trim=20 0 50 0,clip,width=0.49\textwidth]{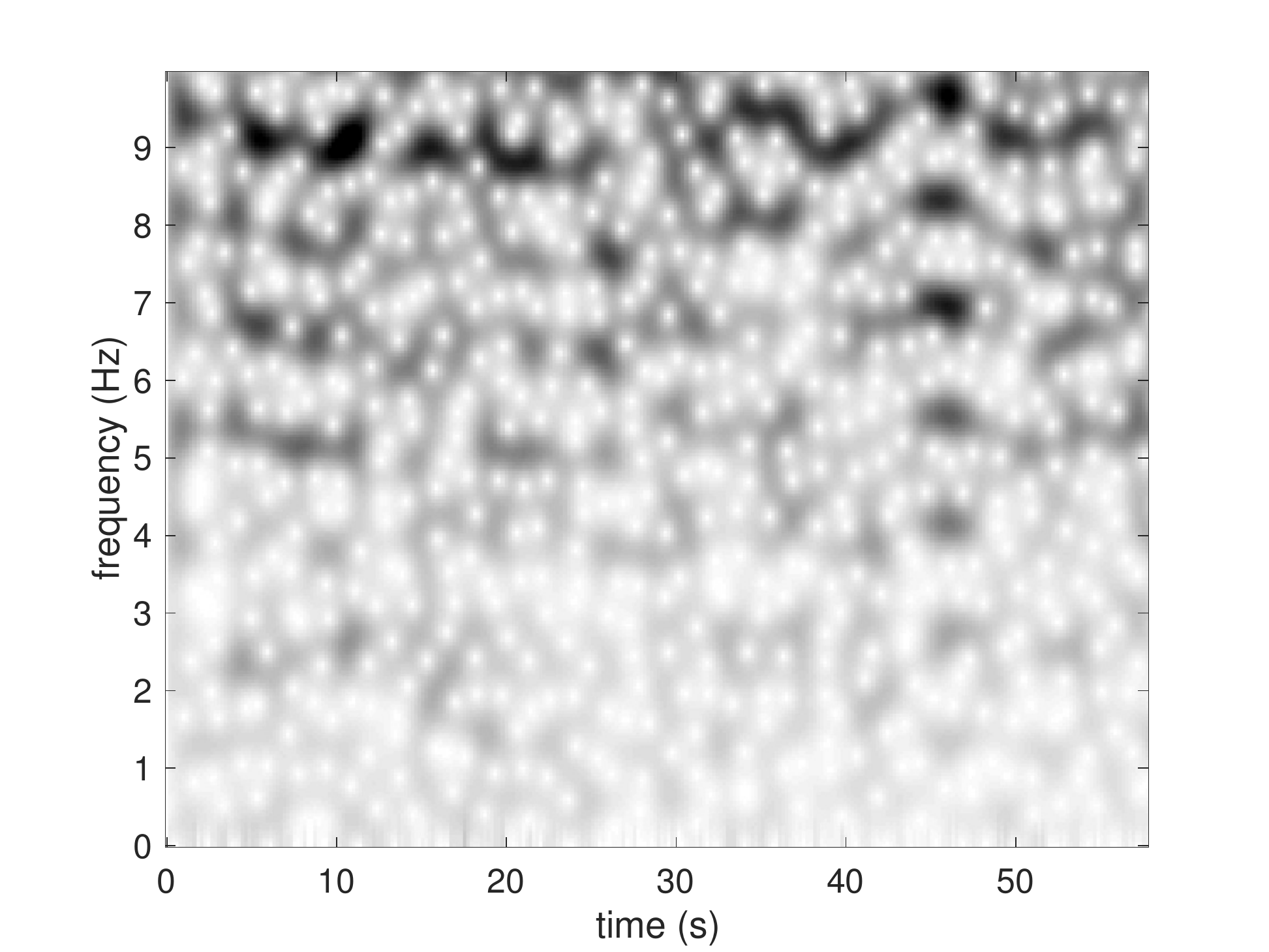}
	\includegraphics[trim=20 0 50 0,clip,width=0.49\textwidth]{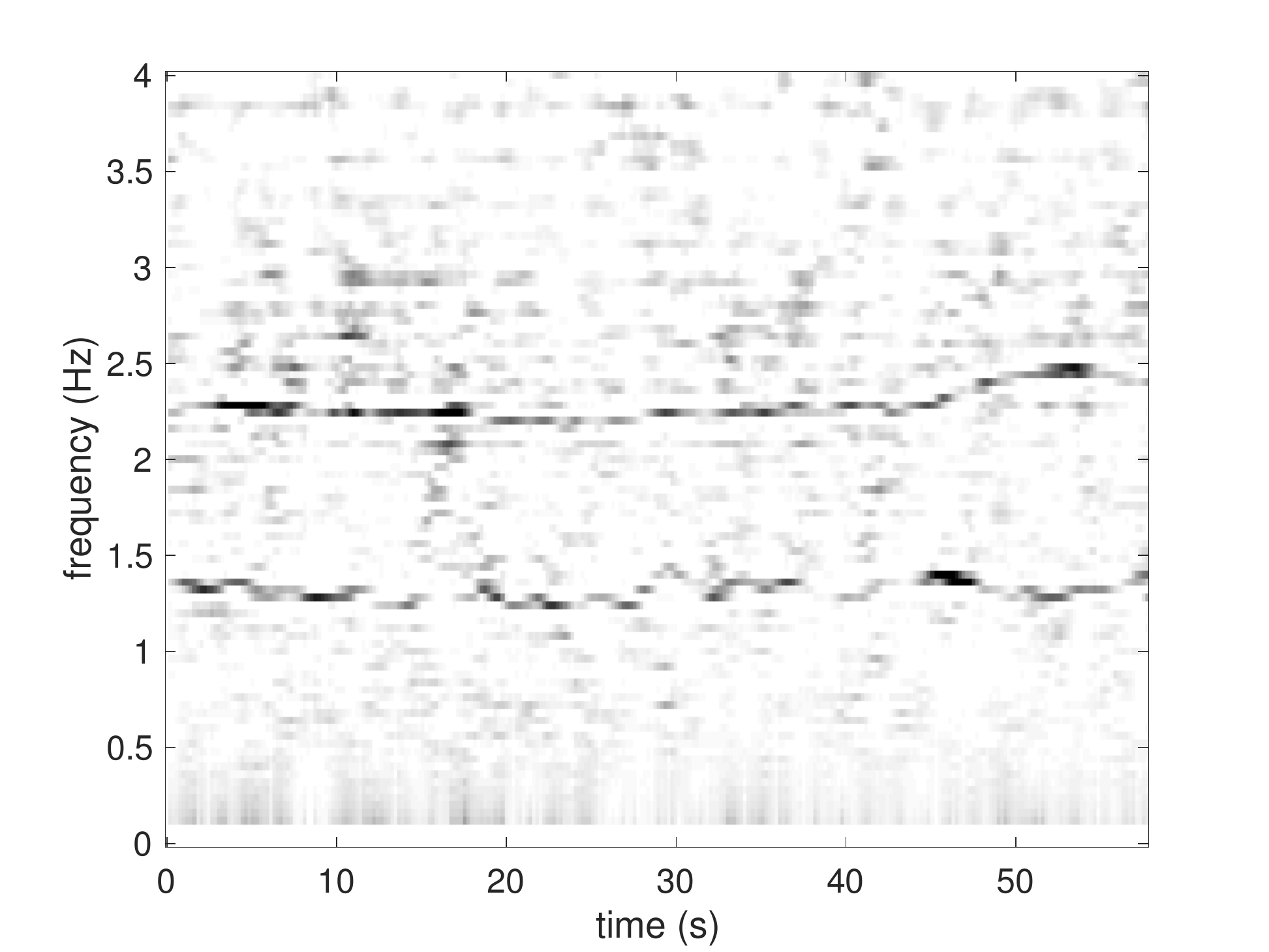}
	\caption{Top row: ta-mECG signal shown for 10 seconds. Second row: ta-mECG signal after removing the trend, denoted as $f$. Third row left: the spectrogram of $f$; third row right: the magnitude of the de-shape STFT of $f$. To enhance the visibility of de-shape STFT, we only show the frequency range up to 4Hz. We can see two clear curves around 1.5Hz and 2.5Hz, which are associated with the maternal instantaneous heart rate (IHR) and fetal IHR respectively.}
	\label{fig:fECG-1}
\end{figure}

\begin{figure}[h]
	\includegraphics[trim=20 0 50 0,clip,width=0.49\textwidth]{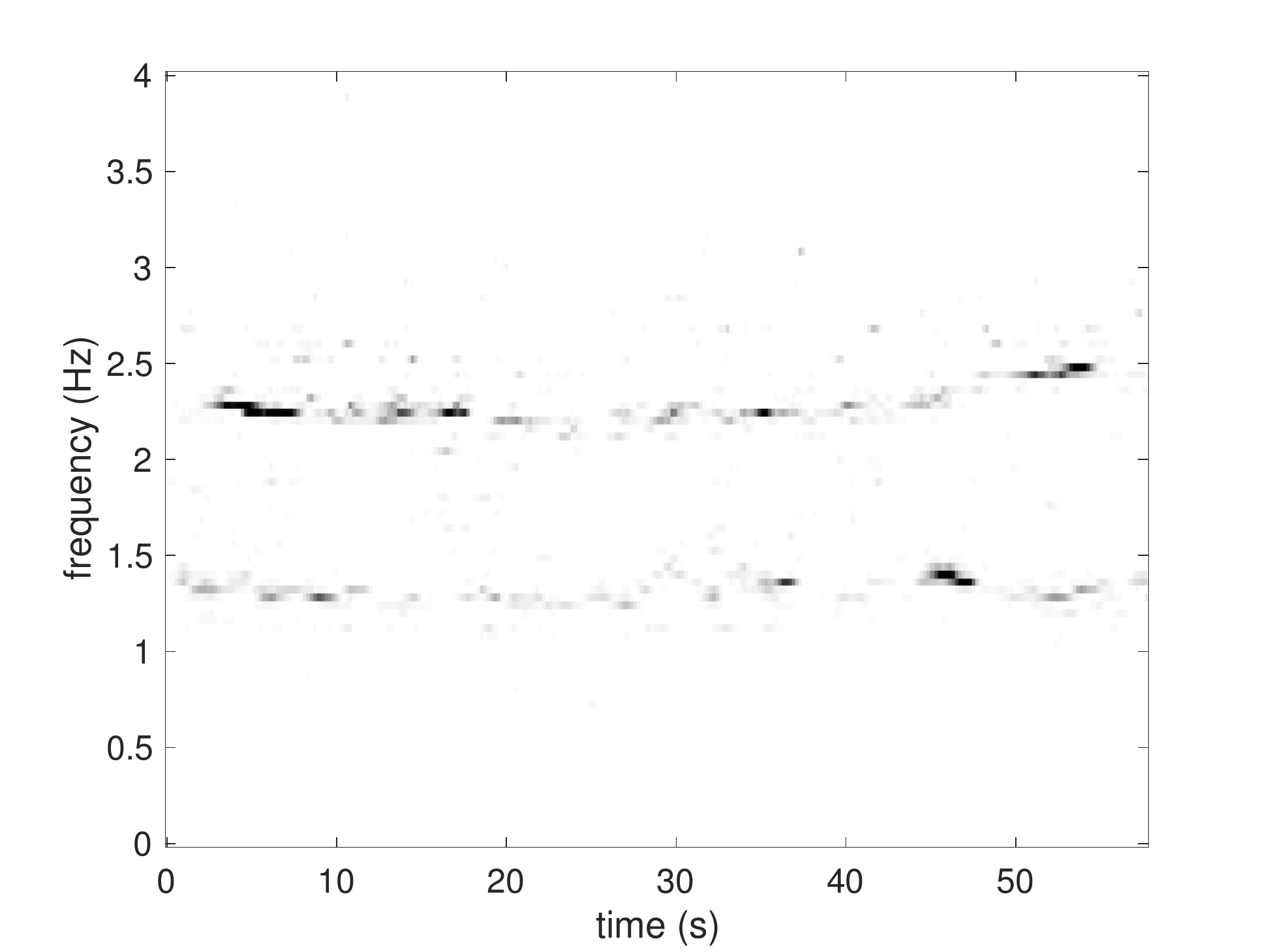}
	\includegraphics[trim=20 0 50 0,clip,width=0.49\textwidth]{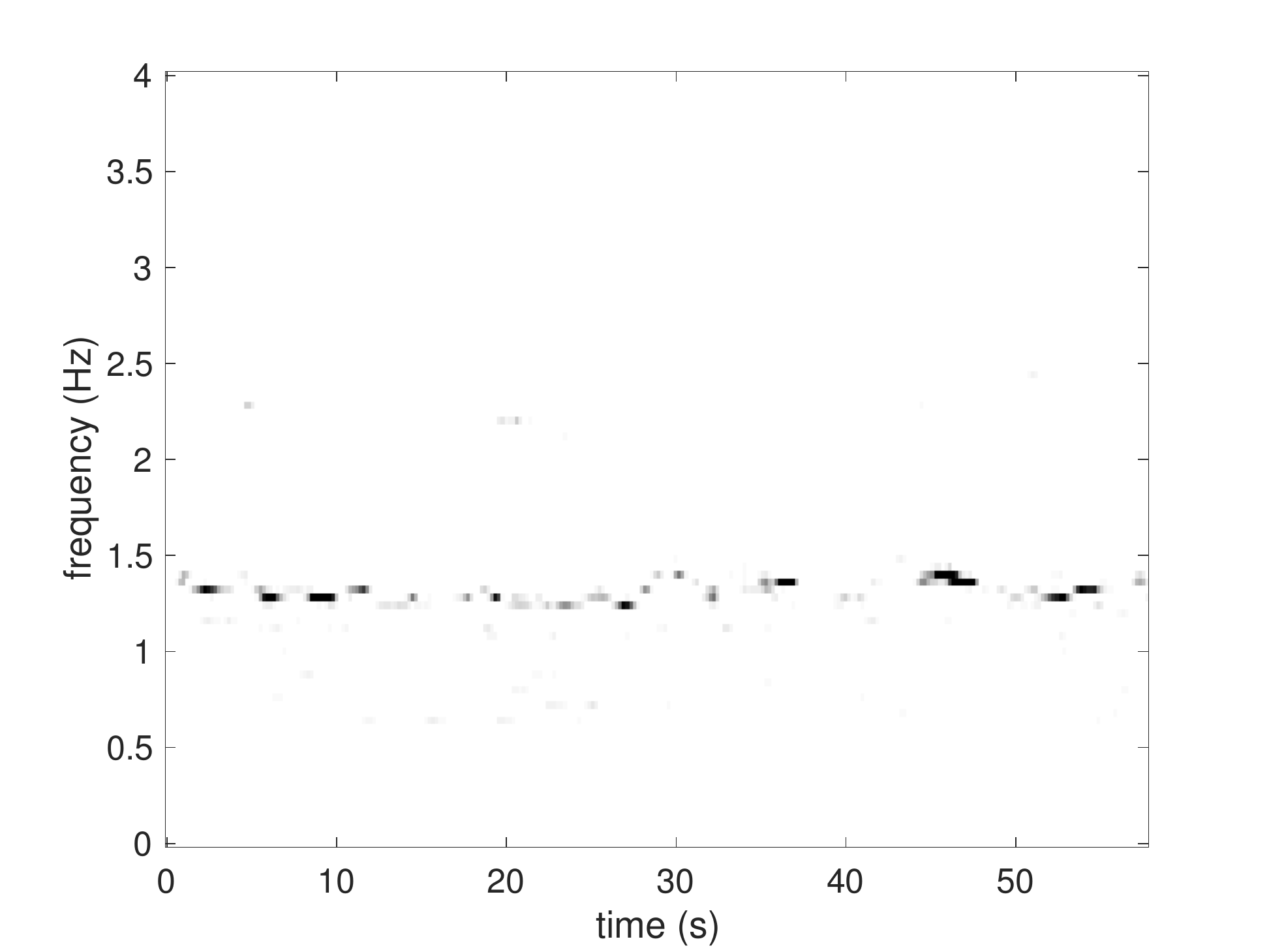}
	\includegraphics[trim=20 0 50 0,clip,width=0.49\textwidth]{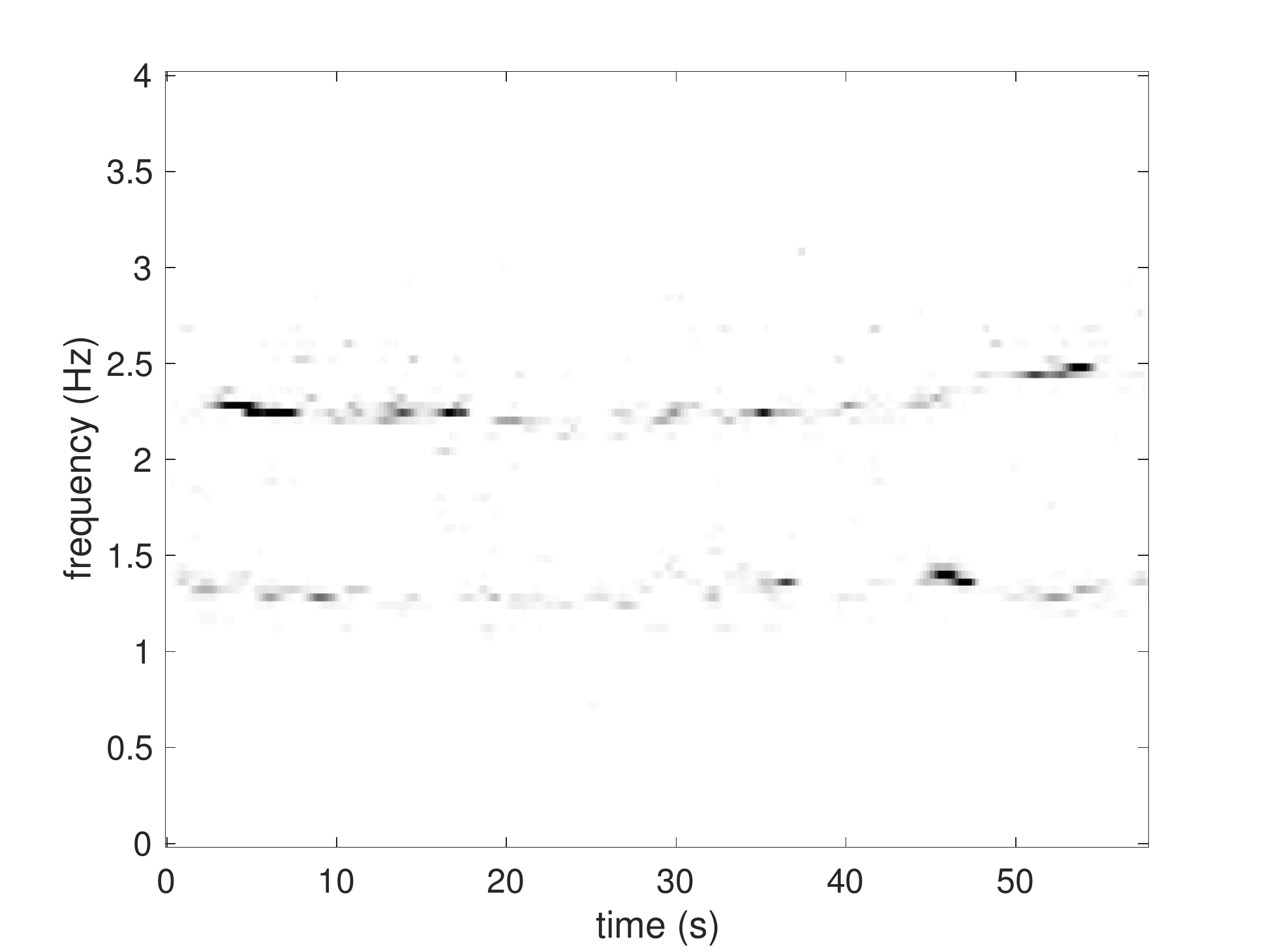}
	\includegraphics[trim=20 0 50 0,clip,width=0.49\textwidth]{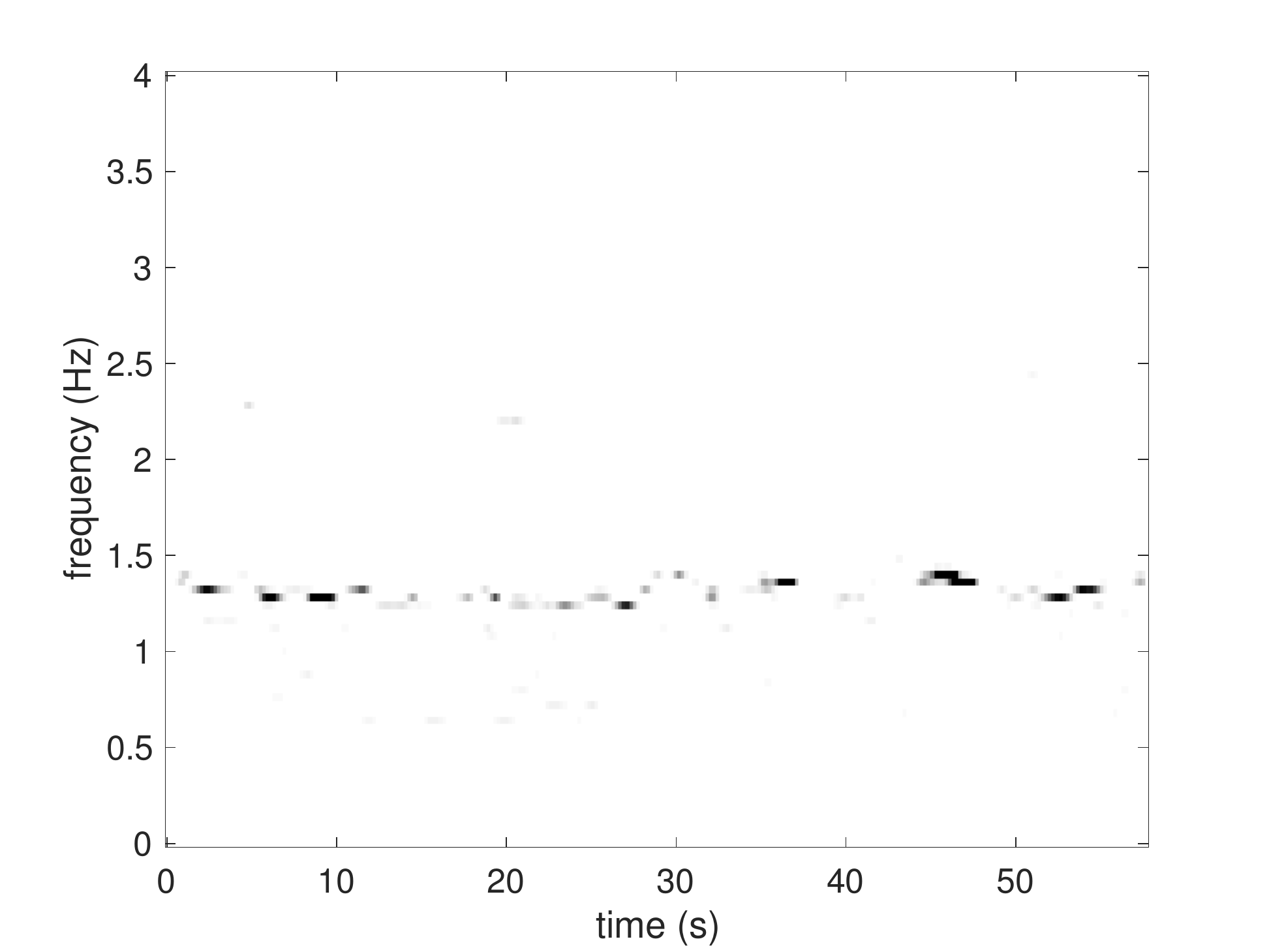}
	\caption{The RDS and  vRDS of the ta-mECG signal $f$ shown in Figure \ref{fig:fECG-1}. Top left: RDS of $f$ when $\lambda = 0.01$; top right: RDS of $f$ when $\lambda = 0.03$; Bottom left:  vRDS of $f$ when $k=1, \lambda = 0.03$; bottom right:  vRDS of $f$ when $k=1, \lambda = 0.09$. It is clear that when we have a larger penalty term, the maternal IHR information is further enhanced. To enhance the visibility, we only show the frequency range up to 4Hz. Note that compared with the de-shape STFT shown in Figure \ref{fig:fECG-1}, the TFR determined by RDS is cleaner; that is, there are less background artifact. Also note that the artifacts in RDS are less enhanced in  vRDS.}
	\label{fig:fECG-2}
\end{figure}

\section{Theoretical results}\label{Section Theoretical results}

In this section, we theoretically analyze several natural questions associated with the proposed RPT, and hence the RDS. The analysis for vRPT is direct, and we omit it. First, the solution to (\ref{lasso}) should be unique (the existence follows from \cite[Lemma 1]{tibshirani2013lasso}); otherwise the construction of $\mathbf{X_f}$ would be meaningless. Second, this approach can accurately reveal the underlying periods (and hence the fundamental frequency) if the input signal, $|V_f^{(h)}(t,\cdot)|^{\gamma}$, is oscillatory but modulated by the spectral envelope and corrupted by noise. Third, it is also important to show that the RPT is robust to the jitter; that is, the oscillation happens around but not precisely on multiples of the period. Last, we want to better know the role of $\lambda$ in the program. 

We start with preparing some notation. Recall the notation $A$, $B$ and $D$ used in \eqref{lasso}. Assume that a signal $y\in\mathbb{R}^{N}$ consists of $l$ oscillatory components coming from $\mathcal{R}_{p_1,N}, \mathcal{R}_{p_2,N}, \dots, \mathcal{R}_{p_l,N}$, where $1\leq p_1<p_2<\ldots<p_l$, so that
\[
y=Bx^*=AD^{-1}x^*\,.
\] 
Suppose 
\[
Bx^{*} = B_{S}x_{S}^{*}, 
\]
where $B_{S}$ is an $N\times (\sum\limits_{i=1}^{l}\phi(p_i))$ submatrix of $B$ with the column index $SOP(x^{*})$, and $x_{S}^{*}$ is a subvector of $x^{*}$ with the index $SOP(x^{*})$.  Denote 
\[
x_S^*=\begin{bmatrix}x^*_{p_1}\\\vdots\\x^*_{p_l}\end{bmatrix}, 
\]
where $x^*_{p_i}\in \mathbb{R}^{\phi(p_i)}$.
Now, for $i=1,\ldots,l$, set 
\[
b_{p_i,N}^{(k)} := \frac{1}{\zeta(p_i)} c_{p_i,N}^{(k)},
\]
where $0\leq k\leq \phi(p_{i})-1$. With the above setup, we thus have
\[
y = \sum\limits_{i=1}^{l}\sum\limits_{j=0}^{\phi(p_i)-1}b_{p_i,N}^{(j)} x_{p_i}^*(j+1), 
\] 
where at least one of the $x_{p_i}^*(j)$, $1\leq j \leq \phi(p_i)$, for each $p_i$ is nonzero. 
Throughout this paper, we further assume that $N$ is sufficiently large compared with $p_l$ so that $B_S$ has full column rank, according to Proposition \ref{prop4}. We mention that this is a reasonable and practical assumption since the signal should have enough length, compared to the underlying periods, so that we can be sure that there exists an oscillation. 
Under this assumption, we would have that $B_S^T B_S$ is invertible.

We use $IP_{x^*}^c$ to denote the complement set of $IP_{x^*}$ in $\{ 1, 2, \dots, P_{max}\}$ and $S^c$ to denote the complement of the index set $S$ in $\{ 1, 2, \dots \Phi(P_{max})\}$. As a result, $y$ can be written as 
\[
y = \begin{bmatrix} B_S & B_{S^c} \end{bmatrix} \begin{bmatrix} x_S^* \\ 0 \end{bmatrix}. 
\]
Define
\[
M_{S,S} = \max_{\substack{{i,j:} \\ {p_i, p_j\in IP_{x^*}}}}\max \left\{ \left|\frac{\sum\limits_{n = 1}^{m}c_{p_i}^{(s)}(n)c_{p_j}^{(t)}(n)}{\zeta(p_i)\zeta(p_j)}\right| \;
\begin{tabular}{|l}
$0\leq m\leq lcm(p_i,p_j)-1$; \\
$0\leq s \leq \phi(p_i)-1$; \\
$0\leq t\leq \phi(p_j)-1$.  
\end{tabular}
\right\},
\] 
and
\[
M_{S,S^c} = \max_{\substack{{i,j:} \\ {p_i\in IP_{x^*}} \\ {p_j\in IP_{x^*}^c}}}\max \left\{ \left|\frac{\sum\limits_{n = 1}^{m}c_{p_i}^{(s)}(n)c_{p_j}^{(t)}(n)}{\zeta(p_i)\zeta(p_j)}\right| \;
\begin{tabular}{|l}
$0\leq m\leq lcm(p_i,p_j)-1$; \\
$0\leq s \leq \phi(p_i)-1$; \\
$0\leq t\leq \phi(p_j)-1$.  
\end{tabular}
\right\}.
\] 
$M_{S,S}$ is the upper bound for the truncated correlation within $SOP(x^*)$, while $M_{S,S^c}$ is the upper bound for the truncated correlation between columns in the $SOP(x^{*})$ and those not in $SOP(x^{*})$.

In the following analysis, we assume $P_{max}\in \mathbb{N}^{+}$ is fixed. Suppose $\tilde{y}$ is some noisy version of $y$. Below, we will show the robustness of the proposed Ramanujan-based PT algorithm, and see that the solution to 
\begin{equation}\label{ptlasso}
\argmin\limits_{x\in \mathbb{R}^{\Phi(P_{max})}} \frac{1}{2}\norm{\tilde{y} - Bx }_2^2 + \lambda \norm{ x }_1
\end{equation}
could reveal those periods associated with $y$.
We prepare some technical lemmas. 

\begin{lemma}\label{lemma:1}
	Let $M = \max\{M_{S,S}, M_{S,S^c}\}$, $\lambda_i$ be the minimal eigenvalue of $C_{p_i}^T C_{p_i}$ and 
	\[
	K := \min\limits_i \floor*{\frac{N}{p_i}} \frac{\lambda_i}{\zeta^2(p_i)} - n_{S}M, 
   \]
	where $n_S := \sum\limits_{i=1}^{l}\phi(p_i)$. Then, the minimal eigenvalue of $B_{S}^{T}B_{S}$ has a lower bound $\lambda_{min}(B_{S}^{T}B_{S})\geq K$.
\end{lemma}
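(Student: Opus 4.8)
The plan is to decompose $B_S^T B_S$ into a "diagonal" part coming from the within-period blocks and an "off-diagonal" part coming from the cross-period blocks, bound each by hand, and then apply Weyl's inequality.

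First I would write $B_S^T B_S$ as a block matrix indexed by the periods $p_1,\dots,p_l$, where the $(i,i)$-block is $G_{i}:=\tfrac{1}{\zeta^2(p_i)} C_{p_i,N}^T C_{p_i,N}$ and the $(i,j)$-block for $i\ne j$ has entries $\tfrac{1}{\zeta(p_i)\zeta(p_j)}\langle c_{p_i,N}^{(s)}, c_{p_j,N}^{(t)}\rangle$. For the diagonal blocks, since $C_{p_i,N}$ is a vertical stack of $\floor*{N/p_i}$ copies of $C_{p_i}$ (plus a partial block $R$ which only helps), one has $C_{p_i,N}^T C_{p_i,N} \succeq \floor*{N/p_i}\, C_{p_i}^T C_{p_i}$, hence $\lambda_{\min}(G_i)\ge \floor*{N/p_i}\,\lambda_i/\zeta^2(p_i)$; taking the minimum over $i$ shows the block-diagonal part $G:=\mathrm{diag}(G_1,\dots,G_l)$ satisfies $\lambda_{\min}(G)\ge \min_i \floor*{N/p_i}\,\lambda_i/\zeta^2(p_i)$.

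Next I would control the perturbation $H := B_S^T B_S - G$, whose blocks are exactly the cross-period inner products (and whose diagonal blocks vanish). Each entry of $H$ is a truncated correlation of shifted Ramanujan sums; by the orthogonality property in Proposition \ref{Proposition properties RS}(4), $\sum_{n=0}^{lcm(p_i,p_j)-1} c_{p_i}(n)c_{p_j}(n-l)=0$, so each inner product $\langle c_{p_i,N}^{(s)}, c_{p_j,N}^{(t)}\rangle$ telescopes down to a partial sum over at most $lcm(p_i,p_j)-1$ terms, which is exactly what $M_{S,S^c}$ (or $M_{S,S}$ for the residual within-period shifts — note $M_{S,S}$ is needed because the $R$-tail in $C_{p_i,N}$ and the shifts within a single period block also leave partial sums) is defined to bound. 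Thus every entry of $H$ has absolute value at most $M=\max\{M_{S,S},M_{S,S^c}\}$. Since $H$ is a symmetric $n_S\times n_S$ matrix with entries bounded by $M$, we get $\|H\|_{\mathrm{op}} \le \|H\|_{\infty\to\infty}\le n_S M$ (row-sum bound). Finally, Weyl's inequality gives $\lambda_{\min}(B_S^T B_S)\ge \lambda_{\min}(G) - \|H\|_{\mathrm{op}} \ge \min_i \floor*{N/p_i}\,\lambda_i/\zeta^2(p_i) - n_S M = K$, as claimed.

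The main obstacle I anticipate is the bookkeeping in the second step: carefully verifying that every inner product of (possibly shifted) columns of the periodically-extended, $R$-padded matrices $C_{p_i,N}$ reduces to a partial sum of length at most $lcm(p_i,p_j)-1$ of products $c_{p_i}^{(s)}(n)c_{p_j}^{(t)}(n)$, so that it is genuinely dominated by $M_{S,S}$ or $M_{S,S^c}$. This requires using the $p$-periodicity $c_p(n)=c_p(n+p)$ together with the exact orthogonality over one $lcm$ period to cancel all the "full" periods, leaving only the truncated tail; the shifts $c_p^{(k)}(n)=c_p(n-k)$ must be tracked so the truncation index stays within $[0,lcm(p_i,p_j)-1]$, matching the definition of $M_{S,S}$ and $M_{S,S^c}$. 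The operator-norm bound $\|H\|\le n_S M$ via the maximum absolute row sum, and the Weyl step, are then routine.
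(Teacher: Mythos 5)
Your proposal is correct and follows essentially the same route as the paper: split $B_S^TB_S$ into a block-diagonal part controlled via $\lambda_{\min}(C_{p_i}^TC_{p_i})$ and a perturbation whose entries are truncated Ramanujan correlations bounded by $M$, then apply Weyl's inequality with the $n_SM$ operator-norm bound. The only (immaterial) difference is that you keep the $R$-tail inside the diagonal blocks and use $C_{p_i,N}^TC_{p_i,N}\succeq \floor*{\frac{N}{p_i}}C_{p_i}^TC_{p_i}$, whereas the paper truncates the tail into the perturbation term; both yield the stated bound.
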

\begin{proof}
	Write $B_S = \begin{bmatrix} B_1 & B_2 & \cdots & B_l \end{bmatrix}$, where $B_i = \frac{1}{\zeta(p_i)}C_{p_i,N}$. Then, $B_{S}^{T}B_{S}$ is an $l\times l$ block matrix whose $(i,j)$-th block is $B_{i}^{T}B_{j}$. Let $\tilde{B}_i$ be the upper $p_i\floor*{\frac{N}{p_i}}\times\phi(p_i)$ submatrix of $B_i$, and set $H$ to be 
	\[  
	H = 
	\begin{bmatrix}
	\tilde{B}_1^T \tilde{B}_1 & & \\
	& \tilde{B}_2^T \tilde{B}_2  &\\
	&  & \ddots \\
	& & & \tilde{B}_l^T \tilde{B}_l
	\end{bmatrix};
	\]
	that is, a block diagonal matrix of the same size as $B_{S}^{T}B_{S}$ whose diagonal entries are $\tilde{B}_i^T \tilde{B}_i$.
	By definition, the minimum eigenvalue $\lambda_{min}(H)$ of $H$ is 
	\[ 
	\lambda_{min}(H) = \min\limits_i \, \floor*{\frac{N}{p_i}} \frac{\lambda_i}{\zeta^2(p_i)}, 
	\]
	where $i = 1, 2, \dots, l$.
	Denote 
	\[
	P := B_{S}^{T}B_{S} - H. 
	\]
	By the orthogonality property of the Ramanujan sums, any entry of $P$ is either a truncated correlation within $SOP(x^*)$ (in diagonal blocks) or a truncated correlation between some column in the $SOP(x^{*})$ and the other not in $SOP(x^{*})$ (off diagonal block). Hence any entry of $P$ is bounded by $M$ so that $\norm{P}_2 \leq (\sum\limits_{i=1}^{l}\phi(p_i))M$. Therefore, by Weyl's inequality we have 
	\[ 
	|\lambda_{min}(B_S^T B_S) - \lambda_{min}(H)| \leq   n_{S}M. 
	\]
	As a result, we have 
	\[ 
	\lambda_{min}(B_S^T B_S)\geq \lambda_{min}(H) - (\sum\limits_{i=1}^{l}\phi(p_i))M = \min_i \floor*{\frac{N}{p_i}} \frac{\lambda_i}{\zeta^2(p_i)} - n_{S}M.
	\]
\end{proof}

Since the term $n_{S}M$ does not depend on $N$ and $p_1,\ldots,p_l$ are fixed, $\min_i \floor*{\frac{N}{p_i}} \frac{\lambda_i}{\zeta^2(p_i)} - n_{S}M$ grows asymptotically as $N$ increases.
Based on Lemma \ref{lemma:1}, we have 

\begin{lemma}\label{lemma:2}
	Suppose the signal length N is sufficiently large such that $K =  \min_i \floor*{\frac{N}{p_i}} \frac{\lambda_i}{\zeta^2(p_i)} - n_{S}M > 0$. Then we have
	\[ 
	\norm{B_{S^c}^T B_S(B_S^T B_S)^{-1}}_{\infty}\leq \frac{n_{S}^2 M_{S,S^c}}{K}. 
	\]
\end{lemma}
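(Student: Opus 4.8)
The plan is to factor the matrix product and control each piece separately through submultiplicativity of the induced $\ell_\infty$ operator norm:
\[
\norm{B_{S^c}^T B_S (B_S^T B_S)^{-1}}_{\infty}\le \norm{B_{S^c}^T B_S}_{\infty}\,\norm{(B_S^T B_S)^{-1}}_{\infty}\,,
\]
and then bound the two factors. Note that the hypothesis $K>0$ together with Lemma \ref{lemma:1} guarantees $\lambda_{\min}(B_S^TB_S)\ge K>0$, so $B_S^TB_S$ is symmetric positive definite and the inverse above is well defined.

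For the first factor, recall that $\norm{\cdot}_{\infty}$ of a matrix is its maximal absolute row sum. Each row of $B_{S^c}^T B_S$ is indexed by a column $b_{p_j,N}^{(t)}$ of $B_{S^c}$ (so $p_j\in IP_{x^*}^c$) and has $n_S=\sum_{i=1}^{l}\phi(p_i)$ entries, each of the form $\langle b_{p_j,N}^{(t)}, b_{p_i,N}^{(s)}\rangle=\frac{1}{\zeta(p_i)\zeta(p_j)}\sum_{n}c_{p_i}^{(s)}(n)c_{p_j}^{(t)}(n)$ with $p_i\in IP_{x^*}$, summed over the $N$ coordinates. Since $p_i\neq p_j$, the summand is periodic in $n$ with period $lcm(p_i,p_j)$ and, by the orthogonality property of Ramanujan sums (Proposition \ref{Proposition properties RS}(4)), sums to zero over any block of $lcm(p_i,p_j)$ consecutive indices; writing $N=q\cdot lcm(p_i,p_j)+r$ with $0\le r\le lcm(p_i,p_j)-1$ and using periodicity to shift the leftover block back to start at the first index, the full sum collapses to $\sum_{n=1}^{r}c_{p_i}^{(s)}(n)c_{p_j}^{(t)}(n)$, whose modulus, after dividing by $\zeta(p_i)\zeta(p_j)$, is at most $M_{S,S^c}$ by definition. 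Hence every entry of $B_{S^c}^T B_S$ is bounded by $M_{S,S^c}$ in absolute value, and summing over the $n_S$ entries of a row gives $\norm{B_{S^c}^T B_S}_{\infty}\le n_S M_{S,S^c}$.

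For the second factor, since $B_S^TB_S$ is symmetric positive definite with smallest eigenvalue at least $K$, we have $\norm{(B_S^TB_S)^{-1}}_2=1/\lambda_{\min}(B_S^TB_S)\le 1/K$. Passing from the spectral norm to the induced $\ell_\infty$ norm on this $n_S\times n_S$ matrix via the standard inequality $\norm{M}_{\infty}\le n_S\norm{M}_2$ (the sharper $\sqrt{n_S}$ bound is also available and still suffices) yields $\norm{(B_S^TB_S)^{-1}}_{\infty}\le n_S/K$. Multiplying the two estimates gives $\norm{B_{S^c}^T B_S (B_S^T B_S)^{-1}}_{\infty}\le n_S M_{S,S^c}\cdot n_S/K = n_S^2 M_{S,S^c}/K$.

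The only step that is not routine bookkeeping is the entrywise estimate of $B_{S^c}^T B_S$: one must recognize that the finite-length cross-correlation appearing in each entry, because of the exact vanishing of Ramanujan-sum correlations over complete periods, reduces to a truncated correlation of length strictly less than one period $lcm(p_i,p_j)$, which is precisely the object that $M_{S,S^c}$ was defined to dominate. Everything else — the reduction via submultiplicativity, the eigenvalue bound from Lemma \ref{lemma:1}, and the $\norm{M}_{\infty}\le n_S\norm{M}_2$ conversion — is standard.
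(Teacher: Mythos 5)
Your proof is correct and arrives at the paper's exact constant $n_S^2 M_{S,S^c}/K$. The essential content is the same as the paper's: the entrywise bound on $B_{S^c}^T B_S$ via the vanishing of Ramanujan cross-correlations over complete periods of length $lcm(p_i,p_j)$ (reducing each inner product to a truncated correlation dominated by $M_{S,S^c}$), together with the eigenvalue lower bound $\lambda_{\min}(B_S^TB_S)\geq K$ from Lemma \ref{lemma:1}. Where you differ is in the bookkeeping for the inverse: the paper writes out the eigendecomposition $B_S^TB_S=U\Lambda U^T$, expands the $(m,k)$ entry of the product explicitly, and applies Cauchy--Schwarz to the eigenvector entries $\sum_j|u_{j,i}||u_{j,k}|\leq 1$, summing to get $n_S^2$; you instead invoke submultiplicativity $\norm{XY}_{\infty}\leq\norm{X}_{\infty}\norm{Y}_{\infty}$ and the norm equivalence $\norm{M}_{\infty}\leq n_S\norm{M}_2$ applied to $(B_S^TB_S)^{-1}$. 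Your route is more modular and slightly shorter; it also makes visible that the sharper equivalence $\norm{M}_{\infty}\leq\sqrt{n_S}\,\norm{M}_2$ would improve the bound to $n_S^{3/2}M_{S,S^c}/K$, which the paper's computation does not directly expose. The paper's computation, by contrast, keeps everything in one explicit row-sum estimate and avoids appealing to norm equivalences. Both are valid; no gap.
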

\begin{proof}
	Since $B_S^T B_S$ is positive definite when $N$ is sufficient large, we have the eigen-decomposition as
	\[
	B_S^T B_S = U\Lambda U^T =
	\begin{bmatrix}
	u_1 & u_2 & \cdots & u_{n_S}
	\end{bmatrix}
	\begin{bmatrix}
	\lambda_1 & &  & \\
	& \lambda_2 & & \\
	& & \ddots & \\
	& & & \lambda_{n_S}
	\end{bmatrix}
	\begin{bmatrix}
	u_1^T \\
	u_2^T \\
	\vdots \\
	u_{n_S}^T 
	\end{bmatrix},
	\]
	where $U\in O(n_S)$, $u_i$ is the $i$-th eigenvector associated with the $i$-th eigenvalue $\lambda_i$, and $\lambda_1\geq \lambda_2\geq\ldots\geq \lambda_{n_S}\geq K>0$. Therefore,  
	\[
	(B_S^T B_S)^{-1} =
	\begin{bmatrix}
	u_1 & u_2 & \cdots & u_{n_S}
	\end{bmatrix}
	\begin{bmatrix}
	\frac{1}{\lambda_1} & &  & \\
	& \frac{1}{\lambda_2} & & \\
	& & \ddots & \\
	& & & \frac{1}{\lambda_{n_S}}
	\end{bmatrix}
	\begin{bmatrix}
	u_1^T \\
	u_2^T \\
	\vdots \\
	u_{n_S}^T 
	\end{bmatrix}.
	\]
	Moreover, let 
	\[
	B_S =
	\begin{bmatrix}
	\tilde{b}_1 & \tilde{b}_2 & \dots & \tilde{b}_{n_S},
	\end{bmatrix}
	\]
	and the $m$-th row of $B_{S^c}^T$ be denoted as $a^T$, where $a$ is the $m$-th column of $B_{S^c}$. Then, the $(m,k)$-th entry of $B_{S^c}^T B_S(B_S^T B_S)^{-1}$ is equal to
	\[
	\begin{bmatrix}
	a^T\tilde{b}_1 & \dots & a^T\tilde{b}_{n_S}
	\end{bmatrix}
	\begin{bmatrix}
	\frac{1}{\lambda_1}u_1 & \dots & \frac{1}{\lambda_{n_S}}u_{n_S} 
	\end{bmatrix}
	\begin{bmatrix}
	u_{1,k} \\
	\vdots \\
	u_{n_S,k}
	\end{bmatrix}
	=
	\sum\limits_{j=1}^{n_S}\sum\limits_{i=1}^{n_S}\frac{1}{\lambda_j}a^T\tilde{b}_i u_{j,i} u_{j,k},
	\]
	where $u_{i,j}$ is the $j$-th entry of $u_i$. Hence, the $l^1$ norm of the $m$-th row of $B_{S^c}^T B_S(B_S^T B_S)^{-1}$ is $\sum\limits_{k=1}^{n_S}\left| \sum\limits_{j=1}^{n_S}\sum\limits_{i=1}^{n_S}\frac{1}{\lambda_j}a^T\tilde{b}_i u_{j,i} u_{j,k}\right|$, and
	\begin{align*}
	\sum\limits_{k=1}^{n_S}\left| \sum\limits_{j=1}^{n_S}\sum\limits_{i=1}^{n_S}\frac{1}{\lambda_j}a^T\tilde{b}_i u_{j,i} u_{j,k}\right|
	&\leq \frac{1}{K}\sum\limits_{k=1}^{n_S}\sum\limits_{j=1}^{n_S}\sum\limits_{i=1}^{n_S} \left| a^T \tilde{b}_i \right| \left| u_{j,i}\right| \left| u_{j,k}\right| \quad\text{(by Lemma \ref{lemma:1})} \\
	&\leq \frac{M_{S,S^c}}{K} \sum\limits_{k=1}^{n_S}\sum\limits_{i=1}^{n_S}(\sum\limits_{j=1}^{n_S} \left| u_{j,i}\right| \left| u_{j,k}\right|) \\
	&\leq \frac{M_{S,S^c}}{K} \sum\limits_{k=1}^{n_S}\sum\limits_{i=1}^{n_S}1 \quad\text{(by Cauchy-Schwarz)} \\
	&= \frac{n_{S}^2 M_{S,S^c}}{K}.
	\end{align*}
	Therefore, 
	\[ 
	\norm{B_{S^c}^T B_S(B_S^T B_S)^{-1}}_{\infty}\leq\frac{n_{S}^2 M_{S,S^c}}{K}.
	\]
\end{proof}

Note that since $n_{S}$ and $M_{S,S^c}$ are fixed and $K=\Theta(N)$ when $N\to \infty$, $\norm{B_{S^c}^T B_S(B_S^T B_S)^{-1}}_{\infty}=O(N^{-1})$ when $N\to \infty$. We proceed to prove the main theorems using the technique similar to the Primal-Dual Witness construction in \cite{wainwright}, where $\norm{B_{S^c}^T B_S(B_S^T B_S)^{-1}}_{\infty} < 1$ is presumed and the probability of successful support recovery by Lasso under additive Gaussian noise is estimated.
\begin{theorem}[Robustness to envelope perturbations]\label{thm1}
	Suppose $N$ is sufficiently large such that $K>n_S^2 M_{S,S^c} $ and $\tilde{y} = Ey$, where $E = diag(e_1, e_2, \dots, e_N)$ is  the envelope. 
	Let 
	\begin{equation}\label{definition MSy1}
	M_{S,y}^1 := \norm{B_{S^c}^T(I-B_S (B_S^T B_S)^{-1}B_S^T)}_{\infty} \norm{y}_{\infty}
	\end{equation}
	and
	\begin{equation}\label{definition MSy2}
	M_{S,y}^2 := \norm{(B_S^T B_S)^{-1}B_S^T}_{\infty}\norm{y}_{\infty}.
	\end{equation}
	Then, if 
	\begin{equation}\label{condition MSy1}
	\max\limits_{i} |e_i -1|< \frac{\lambda}{M_{S,y}^1}\Big(1-\frac{n_S^2 M_{S,S^c}}{K}\Big)\,, 
	\end{equation}
	where $\lambda>0$ is the chosen penalty in (\ref{ptlasso}), the program (\ref{ptlasso}) has a unique solution $\hat{x}$ such that if $\hat{x}_i \neq 0$, then $i\in SOP(x^*)$. 
	
	Moreover, if for $i=1,2,\dots,l$ we denote 
	\begin{equation}\label{definition betai}
	\beta_i := \max\limits_j | x_{p_i}^*(j) |\,,
	\end{equation}
	then the solution $\hat{x}$ can successfully identify periods $p_i$ when $\beta_i$ satisfies 
	\begin{equation}\label{condition betai}
	\beta_i > \max\limits_{i} |e_i -1|\cdot M_{S,y}^2 + \lambda\norm{(B_S^T B_S)^{-1}}_{\infty}.
	\end{equation}
\end{theorem}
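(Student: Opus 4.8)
The plan is to run the Primal--Dual Witness (PDW) construction alluded to just before the statement, specialized to the deterministic, structured perturbation $w := \tilde y - y = (E-I)y$. Since $w_i = (e_i-1)y_i$, one has $\norm{w}_\infty \le \max_i|e_i-1|\,\norm{y}_\infty$. The structural fact that makes everything work is that $y = B_S x_S^*$ lies in the column space of $B_S$, so the oblique projector $I - B_S(B_S^TB_S)^{-1}B_S^T$ annihilates $y$ and ``sees'' only the envelope perturbation $w$; here $B_S^TB_S$ is invertible because $N$ is large enough (Proposition~\ref{prop4}, Lemma~\ref{lemma:1}). I would first solve the support-restricted program: set $\hat x_{S^c}:=0$ and $\hat x_S := \argmin_{x_S}\frac12\norm{\tilde y - B_Sx_S}_2^2 + \lambda\norm{x_S}_1$, which is strictly convex and therefore has the unique stationary point $\hat x_S = (B_S^TB_S)^{-1}(B_S^T\tilde y - \lambda\hat z_S)$ for some $\hat z_S\in\partial\norm{\hat x_S}_1$.

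Next I would define the candidate dual variable off the support by the remaining stationarity equations, $\hat z_{S^c} := \frac1\lambda B_{S^c}^T(\tilde y - B_S\hat x_S)$. Substituting the formula for $\hat x_S$ and writing $\tilde y = y+w$, and using that the projector kills $y$, this reduces to
\[
\hat z_{S^c} = \frac1\lambda B_{S^c}^T\big(I - B_S(B_S^TB_S)^{-1}B_S^T\big)w \;+\; B_{S^c}^TB_S(B_S^TB_S)^{-1}\hat z_S .
\]
The crux of the whole proof is then to verify strict dual feasibility $\norm{\hat z_{S^c}}_\infty<1$. I would bound the first summand by $\frac1\lambda\norm{B_{S^c}^T(I-B_S(B_S^TB_S)^{-1}B_S^T)}_\infty\norm{w}_\infty \le \max_i|e_i-1|\,M_{S,y}^1/\lambda$ using \eqref{definition MSy1}, and the second summand by $\norm{B_{S^c}^TB_S(B_S^TB_S)^{-1}}_\infty\le n_S^2M_{S,S^c}/K$ from Lemma~\ref{lemma:2} together with $\norm{\hat z_S}_\infty\le 1$; hypothesis \eqref{condition MSy1} is exactly what forces the sum of these two bounds below $1$.

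With strict feasibility established, the pair $(\hat x,\hat z)$ satisfies the KKT characterization \eqref{characterization} for the full program, so $\hat x$ is a solution; and since $|\hat z_i|<1$ for every $i\in S^c$, the uniqueness remark following \eqref{characterization} (which rests on \cite[Lemma~1]{tibshirani2013lasso}) forces every solution to vanish off $S$, while full column rank of $B_S$ pins down the values on $S$. Hence $\hat x$ is the unique solution and $\hat x_i\neq 0 \Rightarrow i\in SOP(x^*)$, which is the first conclusion. For the second conclusion I would estimate $\hat x_S - x_S^*$: since $y=B_Sx_S^*$ gives $(B_S^TB_S)^{-1}B_S^Ty = x_S^*$, the stationarity formula yields $\hat x_S - x_S^* = (B_S^TB_S)^{-1}B_S^Tw - \lambda(B_S^TB_S)^{-1}\hat z_S$, so by \eqref{definition MSy2} and $\norm{\hat z_S}_\infty\le1$,
\[
\norm{\hat x_S - x_S^*}_\infty \le \max_i|e_i-1|\,M_{S,y}^2 + \lambda\norm{(B_S^TB_S)^{-1}}_\infty .
\]
If $\beta_i$ strictly exceeds this bound, the coordinate of $x^*_{p_i}$ of magnitude $\beta_i$ cannot be driven to zero in $\hat x$, hence $EOP_{\hat x}(p_i)>0$ and $p_i\in IP_{\hat x}$, which is precisely condition \eqref{condition betai}.

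I expect the main obstacle to be not any single estimate but assembling the PDW correctly for this non-random, structured perturbation: one must recognize that $(E-I)y$ plays the role of ``noise,'' that the clean part lies in $\mathrm{col}(B_S)$ so the projector term vanishes, and that the operator norms that surface are exactly $M_{S,y}^1$, $M_{S,y}^2$, and the bound from Lemma~\ref{lemma:2}; after that, strict dual feasibility and the uniqueness lemma close the argument.
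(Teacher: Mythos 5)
Your proposal is correct and follows essentially the same primal--dual witness argument as the paper: the restricted problem on $S$, the explicit formula for $\hat z_{S^c}$, strict dual feasibility via Lemma~\ref{lemma:2} and condition \eqref{condition MSy1}, and the $\ell^\infty$ bound on $\hat x_S - x_S^*$ for period identification. Your reformulation via $w=(E-I)y$ and the observation that $I-B_S(B_S^TB_S)^{-1}B_S^T$ annihilates $y$ is exactly the algebra the paper carries out with $B^T(I-E)Bx^*$, just packaged more transparently (minor quibble: that projector is orthogonal, not oblique).
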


Before proving the theorem, let us have a closer look at this theorem. First, if $E=I$, that is, when there is no envelop variation, $\max\limits_{i} |e_i -1|=0$. Thus, for any $\lambda>0$, the program (\ref{ptlasso}) always has a unique solution $\hat{x}$ such that $SOP(\hat{x}) \subseteq SOP(x^*)$, so there is no spurious period detected. Moreover, if $\lambda$ is sufficiently small such that $\lambda < \frac{\min_i \beta_i}{\norm{(B_S^T B_S)^{-1}}_{\infty}}$, we can detect all the periods $p_i$, $1\leq i \leq l$.

Second, \eqref{definition MSy1} might rings the bell that $B_S (B_S^T B_S)^{-1}B_S^T\tilde {y}$ is the solution of minimizing $\min_{x_S} \|\tilde {y}-B_Sx_S\|_2$. Thus, $B_{S^c}^T(I-B_S (B_S^T B_S)^{-1}B_S^T)\tilde {y}$ is quantifying the deviation of $\tilde{y}$ from $y$ using the basis not indexed by $S$. However, as we will see in the proof, \eqref{definition MSy1} comes from controlling $\tilde y-y$ instead of $\tilde{y}$.

Third, for fixed coefficients $x_{p_i}^*(j)$, $\norm{y}_{\infty}$ is fixed. Hence by the definition of $B_S$ and $B_{S^c}$, it is clear that $M_{S,y}^{1}=\Theta(N)$, and $M_{S,y}^{2}=\Theta(1)$ when $N\to \infty$. So, if the strength of envelope modulation $\max_i |e_i -1|$ is fixed in \eqref{condition MSy1}, to ensure the uniqueness solution of (\ref{ptlasso}) and to avoid spurious periods, $\lambda$ should be $\Omega(N)$. This result provides a rule-of-thumb criterion to choose $\lambda$ when we solve the program. 

Finally, note that $\norm{(B_S^T B_S)^{-1}}_{\infty}=\Theta(\frac{1}{N})$, and a bound for $\norm{(B_S^T B_S)^{-1}}_{\infty}$ is $\norm{(B_S^T B_S)^{-1}}_{\infty} \leq \frac{1}{K}$, which can be proved in a way similar to that of Lemma \ref{lemma:2} by using its eigenvalue decomposition. As a result, by \eqref{condition betai}, the lower bound for $\beta_i$ in order to to successfully identify the period $p_i$ is $\Theta(\frac{\lambda}{N})$.

\begin{proof}
	To study \eqref{ptlasso}, we first look at the following restricted problem: 
	\begin{equation}\label{restrictlasso}
	\argmin\limits_{x\in \mathbb{R}^{n_S}} \frac{1}{2}\norm{\tilde{y} - B_{S}x }_2^2 + \lambda \norm{ x }_1 ,
	\end{equation}      
	where $\lambda>0$.
	By the assumption, $B_{S}$ is of full rank, and hence we have $B_{S}^{T}B_{S}$ is invertible. Therefore, the minimized function in (\ref{restrictlasso}) is strictly convex and the solution $\hat{x}_{S}$ to (\ref{restrictlasso}) is unique. Let 
	\[
	\hat{z}_{S} = \frac{1}{\lambda}(B_{S}^{T}\tilde{y} - B_{S}^{T}B_{S}\hat{x}_{S})
	\]
	be a subgradient of $\norm{x}_1$ at $\hat{x}_{S}$ by (\ref{characterization}). Set $\hat{x} = \begin{bmatrix} \hat{x}_{S} & 0 \end{bmatrix}^{T}$ and $\hat{z} = \begin{bmatrix} \hat{z}_{S} & \hat{z}_{S^c} \end{bmatrix}^{T}$, where $\hat{z}_{S^c}$ is to be determined so that $\hat{z}\in \partial\|\hat{x}\|_1$. We thus plug $\hat{x}$ and $\hat{z}$ into equation (\ref{characterization}), and get 
	\begin{equation}\label{EQ: BTBx-BTEBx+lambdaz=0}
	B^{T}B\hat{x}-B^{T}EBx^{*}+\lambda \hat{z} = 0,
	\end{equation}
	where we use the assumption that $y=Bx^{*}$.
	Let $I$ be the $N\times N$ identity matrix. Then \eqref{EQ: BTBx-BTEBx+lambdaz=0} can be rewritten as
	\[
	B^{T}B(\hat{x} - x^{*}) + B^{T}(I-E)Bx^{*} + \lambda\hat{z} = 0,
	\]
	which can also be written as
	\[
	\begin{bmatrix}
	B_{S}^{T} \\
	B_{S^{c}}^{T}
	\end{bmatrix}
	\begin{bmatrix}
	B_{S} & B_{S^{c}}
	\end{bmatrix}
	\begin{bmatrix}
	\hat{x}_{S} - x_{S}^{*}\\
	0
	\end{bmatrix}
	+
	\begin{bmatrix}
	B_{S}^{T} \\
	B_{S^{c}}^{T}
	\end{bmatrix}
	(I-E)
	\begin{bmatrix}
	B_{S} & B_{S^{c}}
	\end{bmatrix}
	\begin{bmatrix}
	x_{S}^{*} \\
	0
	\end{bmatrix}
	+
	\lambda
	\begin{bmatrix}
	\hat{z}_{S} \\
	\hat{z}_{S^c}
	\end{bmatrix}
	= 0,
	\]
	where we use the fact that $x^*_{S^c}=0$.
	We have from the top block that
	\[ 
	B_{S}^{T}B_{S}(\hat{x}_{S} - x_{S}^{*}) + B_{S}^{T}(I-E)B_{S}x_{S}^{*}+\lambda\hat{z}_{S} = 0, 
	\]
	which indicates
	\begin{equation}\label{thm1eq1}
	\hat{x}_{S} - x_{S}^{*} = -(B_{S}^{T}B_{S})^{-1}[B_{S}^{T}(I-E)B_{S}x_{S}^{*}+\lambda \hat{z}_{S}].
	\end{equation}
	On the other hand, we have from the bottom block that
	\begin{equation}\label{thm1eq1.1}
	B_{S^{c}}^{T}B_{S}(\hat{x}_{S}-x_{S}^{*}) + B_{S^{c}}^{T}(I-E)B_{S}x_{S}^{*}+\lambda \hat{z}_{S^{c}} = 0.  
	\end{equation}
	By plugging $\hat{x}_{S}-x_{S}^{*}$ from (\ref{thm1eq1}) to \eqref{thm1eq1.1}, since $B_{S}^{T}B_{S}$ is invertible and $\lambda>0$, we have solved $\hat{z}_{S^c}$ and get
	\begin{equation*}
	\hat{z}_{S^c} = B_{S^c}^{T}[B_{S}(B_{S}^{T}B_{S})^{-1}\hat{z}_{S}-\frac{1}{\lambda}(B_S(B_S^T B_S)^{-1}B_S^T-I)(E-I)B_S x_{S}^{*}].
	\end{equation*}
	We need to show that $\hat{z}$ is a subgradient of the $l^1$ norm at $\hat{x}$. Note that $\|\hat{z}_{S}\|_\infty\leq 1$. Thus, by a direct bound, we have
	\begin{align*}
	\norm{\hat{z}_{S^c}}_{\infty} 
	&\leq \norm{B_{S^c}^T B_S (B_S^T B_S)^{-1}}_{\infty} + \frac{1}{\lambda}\norm{B_{S^c}^T(I-B_S (B_S^T B_S)^{-1}B_S^T)}_{\infty}\norm{E-I}_{\infty} \norm{y}_{\infty} \\
	&< \frac{n_S^2 M_{S,S^c}}{K} + \frac{1}{\lambda}M_{S,y}^1\frac{\lambda}{M_{S,y}^1}(1-\frac{n_S^2 M_{S,S^c}}{K}) \quad\text{(by Lemma \ref{lemma:2})}\\
	&< 1.
	\end{align*}
	Hence $\hat{z}$ is a subgradient of the $l^1$ norm at $\hat{x}$ and $\hat{x}$ is a solution to (\ref{ptlasso}), and any solution $x$ to (\ref{ptlasso}) satisfies $x_i = 0$, if $i\in S^c$; that is, $x = \begin{bmatrix} x_S & 0 \end{bmatrix}^T$ where $x_S$ is a solution to program (\ref{restrictlasso}). Therefore, $\hat{x}$ is the unique solution to (\ref{ptlasso}) since $\hat{x}_S$ is the unique solution to (\ref{restrictlasso}).
	
	For the second part, we have from (\ref{thm1eq1}),
	\begin{align*}
	\norm{\hat{x} - x^*}_{\infty} = \norm{\hat{x}_S - x_S^*}_{\infty} 
	&\leq \norm{(B_S^T B_S)^{-1}B_S^T}_{\infty}\norm{E-I}_{\infty}\norm{y}_{\infty} + \lambda\norm{(B_S^T B_S)^{-1}}_{\infty} \\
	&\leq \max\limits_{i} |e_i -1|\cdot M_{S,y}^2 + \lambda\norm{(B_S^T B_S)^{-1}}_{\infty}
	\end{align*}
	Therefore, if for some period $p_i$, there exists a coefficient $| x_{p_i}^*(j)| > \max\limits_{i} |e_i -1|\cdot M_{S,y}^2 + \lambda\norm{(B_S^T B_S)^{-1}}_{\infty}$, then $EOP_{\hat{x}}(p_i)>0$.
\end{proof}

Next, we show the robustness of (\ref{ptlasso}) to jitter; that is, when the periods between two consecutive cycles are not fixed and deviated by some perturbation. For example, a $p$-periodic signal with peaks at $p, 2p, 3p, \dots$ is jittered (or its peak locations are perturbed) so that the peaks arise at $p+\epsilon_1,2p+\epsilon_2,3p+\epsilon_3,\dots$, where $\epsilon_i$ is bounded and $|\epsilon_i|<p/2$. 

\begin{theorem}[Robustness to jitter]\label{Theorem robustness to jitter}
	Suppose $N$ is sufficiently large such that $K>n_S^2 M_{S,S^c}$, and $\tilde{y} = \sum\limits_{i=1}^l V_i y_i$, where $y_i = \sum\limits_{j=0}^{\phi(p_i)-1}b_{p_i,N}^{(j)} x_{p_i}^*(j+1)$, and $V_i$ is a diagonal block matrix:
	\[ V_i = 
	\begin{bmatrix} 
	V_{i,1} & & & \\
	& V_{i,2} & & \\
	& & \ddots & \\
	& & & V_{i,k_i} \\
	\end{bmatrix},\]
	where each $V_{i,j}$ is either a identity matrix or a permutation matrix.  
	Then, if
	\begin{equation}\label{condition MSyi1} 
	\lambda > \frac{2K\sum_{i=1}^l M_{S,y_i}^1}{K-n_S^2 M_{S,S^c}},
	\end{equation}
	 the program (\ref{ptlasso}) has a unique solution $\hat{x}$ such that if $\hat{x}_i \neq 0$, then $i\in SOP(x^*)$. Moreover, if for $i=1,2,\dots,l$ we denote 
	 \begin{equation}
	 \beta_i = \max\limits_j | x_{p_i}^*(j)|,
	 \end{equation}
	 then the solution $\hat{x}$ can successfully identify periods $p_i$ if $\beta_i$ satisfies
	 \begin{equation}
	  \beta_i > \sum_{i=1}^l 2 M_{S,y_i}^2 + \lambda\norm{(B_S^T B_S)^{-1}}_{\infty}.
	  \end{equation}
\end{theorem}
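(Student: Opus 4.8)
The plan is to run the Primal--Dual Witness argument from the proof of Theorem \ref{thm1} essentially line by line, with the envelope operator $E$ replaced by the jitter operators $V_i$. Write $y_i = \sum_{j=0}^{\phi(p_i)-1} b_{p_i,N}^{(j)} x_{p_i}^*(j+1)$, so that $y = \sum_{i=1}^l y_i = B_S x_S^*$ and the perturbation decomposes as $\tilde y - y = \sum_{i=1}^l (V_i - I)y_i$. Denote by $M_{S,y_i}^1$ and $M_{S,y_i}^2$ the quantities in \eqref{definition MSy1} and \eqref{definition MSy2} with $y$ replaced by $y_i$. The new observation, relative to Theorem \ref{thm1}, is that each block $V_{i,j}$ of $V_i$ is the identity or a permutation, so $V_i$ merely rearranges the entries of $y_i$; hence $\norm{V_i y_i}_\infty = \norm{y_i}_\infty$ and therefore $\norm{(V_i - I)y_i}_\infty \le 2\norm{y_i}_\infty$. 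This, together with the fact that the perturbation is now a sum of $l$ terms, is what produces the factors of $2$ and the sums over $i$ in \eqref{condition MSyi1} and in the identification bound.

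Next I would set up the witness. Let $\hat x_S$ be the minimizer of the restricted program $\argmin_{x\in\mathbb{R}^{n_S}} \tfrac12\norm{\tilde y - B_S x}_2^2 + \lambda\norm{x}_1$, which is unique because $B_S^T B_S$ is invertible; set $\hat z_S = \tfrac1\lambda(B_S^T\tilde y - B_S^T B_S\hat x_S) \in \partial\norm{\hat x_S}_1$ and $\hat x = \begin{bmatrix}\hat x_S & 0\end{bmatrix}^T$. Inserting $\hat x$ into the optimality condition \eqref{characterization} and writing $\tilde y = B_S x_S^* + \sum_i(V_i - I)y_i$, the block corresponding to $S$ gives
\[
\hat x_S - x_S^* = (B_S^T B_S)^{-1}\Big[B_S^T\sum_{i=1}^l (V_i - I)y_i - \lambda\hat z_S\Big],
\]
and the $S^c$ block then forces
\[
\hat z_{S^c} = \tfrac1\lambda B_{S^c}^T\big(I - B_S(B_S^T B_S)^{-1}B_S^T\big)\sum_{i=1}^l (V_i - I)y_i + B_{S^c}^T B_S(B_S^T B_S)^{-1}\hat z_S.
\]

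The main estimate is then strict dual feasibility $\norm{\hat z_{S^c}}_\infty < 1$. Bounding the second term by Lemma \ref{lemma:2} and $\norm{\hat z_S}_\infty \le 1$, and the first term by the triangle inequality over $i$ together with $\norm{(V_i - I)y_i}_\infty \le 2\norm{y_i}_\infty$, one gets
\[
\norm{\hat z_{S^c}}_\infty \;\le\; \frac{n_S^2 M_{S,S^c}}{K} + \frac{2\sum_{i=1}^l M_{S,y_i}^1}{\lambda},
\]
which is $<1$ exactly when $\lambda > \dfrac{2K\sum_i M_{S,y_i}^1}{K - n_S^2 M_{S,S^c}}$, i.e. under \eqref{condition MSyi1} (using $K > n_S^2 M_{S,S^c}$). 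As in Theorem \ref{thm1}, strict dual feasibility together with \cite[Lemma 1]{tibshirani2013lasso} and the remark following \eqref{characterization} shows that every solution of \eqref{ptlasso} vanishes off $S$, hence equals the unique solution of the restricted program; so $\hat x$ is the unique solution and $\hat x_i \ne 0 \Rightarrow i \in SOP(x^*)$. For the identification part, from the displayed formula for $\hat x_S - x_S^*$ one bounds
\[
\norm{\hat x_S - x_S^*}_\infty \le \norm{(B_S^T B_S)^{-1}B_S^T}_\infty\sum_{i=1}^l\norm{(V_i - I)y_i}_\infty + \lambda\norm{(B_S^T B_S)^{-1}}_\infty \le \sum_{i=1}^l 2M_{S,y_i}^2 + \lambda\norm{(B_S^T B_S)^{-1}}_\infty,
\]
so any $p_i$ with $\beta_i$ exceeding this quantity satisfies $EOP_{\hat x}(p_i) > 0$.

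Since the skeleton is identical to that of Theorem \ref{thm1}, no step is genuinely hard; the only points requiring care are (i) recognizing that a block-wise permutation preserves the $\ell_\infty$ norm, which licenses $\norm{(V_i - I)y_i}_\infty \le 2\norm{y_i}_\infty$, and (ii) keeping the sum $\sum_i(V_i - I)y_i$ inside the norms throughout, so the constants come out in terms of the individual $M_{S,y_i}^1$ and $M_{S,y_i}^2$ rather than a cruder aggregate. I expect (ii)---the bookkeeping of the multi-component perturbation through the witness construction---to be the most error-prone part, though still routine.
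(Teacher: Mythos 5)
Your proposal is correct and follows essentially the same Primal--Dual Witness route as the paper: the same decomposition $\tilde y - y = \sum_i (V_i - I)y_i$, the same witness construction and block equations, and the same bounds via Lemma \ref{lemma:2} and the quantities $M_{S,y_i}^1$, $M_{S,y_i}^2$. The only cosmetic difference is that you obtain the factor $2$ from permutation-invariance of the vector $\ell_\infty$ norm, whereas the paper uses the operator bound $\norm{V_i - I}_\infty \le 2$; these are equivalent here.
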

\begin{proof}
	First we note that $\norm{V_i - I}_{\infty}\leq 2$ for any $i$. We can write each $y_i$ as $y_i = Bx_{S_i}^* = B_{S_i}x_{S_i}^*$, where $S_i$ are indices in $B$ that correspond to $\mathcal{R}_{p_i, N}$ so that $S = \bigcup_{i=1}^{l}S_i$. We follow the same procedure an notations as in the proof of previous theorem to obtain
	\[
	\begin{bmatrix}
	B_{S}^{T} \\
	B_{S^{c}}^{T}
	\end{bmatrix}
	\begin{bmatrix}
	B_{S} & B_{S^{c}}
	\end{bmatrix}
	\begin{bmatrix}
	\hat{x}_{S} - x_{S}^{*}\\
	0
	\end{bmatrix}
	+
	\sum_{i=1}^l
	\begin{bmatrix}
	B_{S}^{T} \\
	B_{S^{c}}^{T}
	\end{bmatrix}
	(I-V_i)
	\begin{bmatrix}
	B_{S_i} & B_{S_i^{c}}
	\end{bmatrix}
	\begin{bmatrix}
	x_{S_i}^{*} \\
	0
	\end{bmatrix}
	+
	\lambda
	\begin{bmatrix}
	\hat{z}_{S} \\
	\hat{z}_{S^c}
	\end{bmatrix}
	= 0.
	\]
	We have from the top block that
	\[ 
	B_{S}^{T}B_{S}(\hat{x}_{S} - x_{S}^{*}) + \sum_{i=1}^l B_{S}^{T}(I-V_i)B_{S_i}x_{S_i}^{*}+\lambda\hat{z}_{S} = 0, 
	\]
	which indicates
	\begin{equation}\label{thm2eq1}
	\hat{x}_{S} - x_{S}^{*} = -(B_{S}^{T}B_{S})^{-1}[\sum_{i=1}^lB_{S}^{T}(I-V_i)B_{S_i}x_{S_i}^{*}+\lambda \hat{z}_{S}].
	\end{equation}
	On the other hand, we have from the bottom block that
	\begin{equation}\label{thm2eq1.1}
	B_{S^{c}}^{T}B_{S}(\hat{x}_{S}-x_{S}^{*}) + \sum_{i=1}^l B_{S^{c}}^{T}(I-V_i)B_{S_i}x_{S_i}^{*}+\lambda \hat{z}_{S^{c}} = 0.  
	\end{equation}
	By plugging $\hat{x}_{S}-x_{S}^{*}$ from (\ref{thm2eq1}) to \eqref{thm2eq1.1}, we get
	\begin{equation*}
	\hat{z}_{S^c} = B_{S^c}^{T}[B_{S}(B_{S}^{T}B_{S})^{-1}\hat{z}_{S}+\sum_{i=1}^l\frac{1}{\lambda}(B_S(B_S^T B_S)^{-1}B_S^T-I)(I-V_i)B_{S_i} x_{S_i}^{*}].
	\end{equation*}
	Thus, by a direct bound, we have 
	\begin{align*}
	\norm{\hat{z}_{S^c}}_{\infty} 
	&\leq \norm{B_{S^c}^T B_S (B_S^T B_S)^{-1}}_{\infty} + \sum_{i=1}^l\frac{1}{\lambda}\norm{B_{S^c}^T(I-B_S (B_S^T B_S)^{-1}B_S^T)}_{\infty}\norm{I-V_i}_{\infty} \norm{y_i}_{\infty} \\
	&< \frac{n_S^2 M_{S,S^c}}{K} + 2\sum_{i=1}^l M_{S,y_i}^1 \frac{K-n_S^2 M_{S,S^c}}{2K\sum_{i=1}^l M_{S,y_i}^1} \\
	&= 1.
	\end{align*}
	On the other hand, we have from \eqref{thm2eq1}
	\begin{align*}
	\norm{\hat{x} - x^*}_{\infty} = \norm{\hat{x}_S - x_S^*}_{\infty} 
	&\leq \sum_{i=1}^l \norm{(B_S^T B_S)^{-1}B_S^T}_{\infty}\norm{V_i-I}_{\infty}\norm{y_i}_{\infty} + \lambda\norm{(B_S^T B_S)^{-1}}_{\infty} \\
	&\leq \sum_{i=1}^l 2 M_{S,y_i}^2 + \lambda\norm{(B_S^T B_S)^{-1}}_{\infty}.
	\end{align*}
	
\end{proof}

%
%

\begin{theorem}[Robustness to additive noise]
Suppose $N$ is sufficiently large such that $K>n_S^2 M_{S,S^c}$ and $\tilde{y} = y + e$, where the entries of $e$ are i.i.d. Gaussian with mean 0 and variance $\sigma^2$, i.e. $e \sim \mathcal{N}(0,\,\sigma^2 I)$. Let 
	\begin{equation}
	C := \max_{1\leq i \leq \Phi(P_{max})} \norm{b_i}_2^2.
	\end{equation}
	Denote
	\begin{equation}
	\beta_i = \max\limits_j | x_{p_i}^*(j) |
	\end{equation}
	for $i=1,2,\dots,l$. 
	Then, with probability at least 
	\begin{equation}
	1 - (\Phi(P_{max})-n_S)e^{-\frac{\lambda^2 (K-n_S^2 M_{S,S^c})^2}{2C\sigma^2 K^2}},
	\end{equation}
	we have the following:
	
	a) the program (\ref{ptlasso}) has a unique solution $\hat{x}$ such that if $\hat{x}_i \neq 0$, then $i\in SOP(x^*)$. 
	
	b) the solution $\hat{x}$ can successfully identify the period $p_i$ if $\beta_i$ satisfies
    \begin{equation}\label{thm3betabound}
    \beta_i > \sqrt{\frac{\lambda^2(K-n_S^2 M_{S,S^c})^2}{K^4} + \frac{2C\sigma^2(\log n_S - \log(\Phi(P_{max})-n_S))}{K^2}} + \lambda\norm{(B_S^T B_S)^{-1}}_{\infty}.
    \end{equation}

\end{theorem}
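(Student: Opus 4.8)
The plan is to run the Primal--Dual Witness construction exactly as in the proofs of Theorems~\ref{thm1} and~\ref{Theorem robustness to jitter}; the only genuinely new point is that the perturbation is now a Gaussian vector, so the deterministic norm bounds used there must be replaced by Gaussian tail estimates. First I would solve the restricted program $\argmin_{x\in\mathbb{R}^{n_S}}\tfrac12\|\tilde y-B_Sx\|_2^2+\lambda\|x\|_1$, which has a unique minimizer $\hat x_S$ since $B_S^TB_S$ is invertible (Lemma~\ref{lemma:1}, using the standing hypothesis $K>0$), define the associated subgradient $\hat z_S=\tfrac1\lambda(B_S^T\tilde y-B_S^TB_S\hat x_S)\in\partial\|\hat x_S\|_1$, and set $\hat x=[\hat x_S,0]^T$. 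Plugging $\tilde y=Bx^*+e$ into the optimality condition~\eqref{characterization} and separating the $S$ and $S^c$ blocks yields, after the same elimination as before,
\[
\hat x_S-x_S^*=(B_S^TB_S)^{-1}\bigl(B_S^Te-\lambda\hat z_S\bigr),
\]
\[
\hat z_{S^c}=B_{S^c}^TB_S(B_S^TB_S)^{-1}\hat z_S+\tfrac1\lambda\,B_{S^c}^T(I-P_S)e ,
\]
where $P_S:=B_S(B_S^TB_S)^{-1}B_S^T$ is the orthogonal projection onto the column space of $B_S$.

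For part (a) I would split $\|\hat z_{S^c}\|_\infty$ into the deterministic contribution $\|B_{S^c}^TB_S(B_S^TB_S)^{-1}\|_\infty\le n_S^2M_{S,S^c}/K$ (Lemma~\ref{lemma:2}) and the stochastic contribution $\tfrac1\lambda\|B_{S^c}^T(I-P_S)e\|_\infty$. Each coordinate of $B_{S^c}^T(I-P_S)e$ has the form $b_j^T(I-P_S)e$ with $b_j$ a column of $B_{S^c}$, hence is a centred Gaussian of variance $\sigma^2\|(I-P_S)b_j\|_2^2\le\sigma^2\|b_j\|_2^2\le C\sigma^2$. Applying the Gaussian tail bound $\mathbb{P}(|X|\ge t)\le e^{-t^2/(2\operatorname{Var}(X))}$ and a union bound over the $\Phi(P_{max})-n_S$ columns of $B_{S^c}$, with $t=\lambda(1-n_S^2M_{S,S^c}/K)$, shows that with probability at least $1-(\Phi(P_{max})-n_S)\exp\!\bigl(-\lambda^2(K-n_S^2M_{S,S^c})^2/(2C\sigma^2K^2)\bigr)$ one has $\|B_{S^c}^T(I-P_S)e\|_\infty<\lambda(1-n_S^2M_{S,S^c}/K)$ and therefore $\|\hat z_{S^c}\|_\infty<1$. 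As in the earlier proofs, $\hat z\in\partial\|\hat x\|_1$ then certifies that $\hat x$ solves~\eqref{ptlasso}, and the strict inequality $|\hat z_{S^c,j}|<1$ forces every solution to vanish off $S$ (by \cite[Lemma~1]{tibshirani2013lasso} and the remark after~\eqref{characterization}); hence $\hat x$ is the unique solution and $\operatorname{supp}(\hat x)\subseteq S=SOP(x^*)$.

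For part (b), on the same favourable event (intersected, if need be, with a second high-probability event controlling $\|B_S^Te\|_\infty$) I would use the first identity above to write
\[
\|\hat x-x^*\|_\infty=\|\hat x_S-x_S^*\|_\infty\le\|(B_S^TB_S)^{-1}B_S^Te\|_\infty+\lambda\|(B_S^TB_S)^{-1}\|_\infty\le\tfrac1K\|B_S^Te\|_\infty+\lambda\|(B_S^TB_S)^{-1}\|_\infty ,
\]
invoking the bound $\|(B_S^TB_S)^{-1}\|_\infty\le1/K$ recorded after Theorem~\ref{thm1}. Each coordinate of $B_S^Te$ is again a centred Gaussian of variance at most $C\sigma^2$, so a union bound over its $n_S$ coordinates --- with the threshold calibrated so that the on-support failure probability matches the off-support exponent of part (a) --- gives $\|B_S^Te\|_\infty<\sqrt{\lambda^2(K-n_S^2M_{S,S^c})^2/K^2+2C\sigma^2(\log n_S-\log(\Phi(P_{max})-n_S))}$ on that event. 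Dividing by $K$ and adding $\lambda\|(B_S^TB_S)^{-1}\|_\infty$ reproduces the right-hand side of~\eqref{thm3betabound}; consequently, whenever $\beta_i=\max_j|x_{p_i}^*(j)|$ exceeds that quantity, the coordinate of $\hat x_{p_i}$ attaining $\beta_i$ stays nonzero, so $EOP_{\hat x}(p_i)>0$ and the period $p_i$ is identified.

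The block algebra is routine (it is essentially the algebra of Theorems~\ref{thm1}--\ref{Theorem robustness to jitter} with $E-I$ or $I-V_i$ replaced by the projection against $\operatorname{col}(B_S)$), and the Gaussian tails are elementary. The step I expect to be fiddly is the bookkeeping that yields the precise statement: one must (i) bound the relevant variances crudely by $C$ and use $\|(B_S^TB_S)^{-1}\|_\infty\le1/K$ rather than the sharper diagonal entries $[(B_S^TB_S)^{-1}]_{kk}$, so that $C$ and the factor $1/K^2$ appear in~\eqref{thm3betabound}, and (ii) choose the on-support threshold in part (b) against the exact exponent $\lambda^2(K-n_S^2M_{S,S^c})^2/(2C\sigma^2K^2)$ so that the correction $\log n_S-\log(\Phi(P_{max})-n_S)$ surfaces and the probability carries only the prefactor $\Phi(P_{max})-n_S$; a less careful union bound would inflate this prefactor or corrupt the exponent.
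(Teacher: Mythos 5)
Your proposal is correct and follows essentially the same route as the paper: the identical primal--dual witness construction, the per-coordinate Gaussian variance bound $\sigma^2\|(I-P_S)b_j\|_2^2\le C\sigma^2$ with a union bound at threshold $\lambda(1-n_S^2M_{S,S^c}/K)$ for part (a), and the calibration of the on-support threshold so that $n_Se^{-K^2t^2/(2C\sigma^2)}$ equals the off-support failure probability for part (b). The only (immaterial) difference is that you bound $\|(B_S^TB_S)^{-1}B_S^Te\|_\infty$ by $\tfrac1K\|B_S^Te\|_\infty$ whereas the paper bounds the per-entry variance of $(B_S^TB_S)^{-1}B_S^Te$ directly by $C\sigma^2/K^2$; both give the same tail.
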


Before we prove the theorem, we mention that if we choose $\lambda = \Omega(\sqrt{N})$, then the above probability tends to 1 as $N \to \infty$ since $K = \Theta(N)$ and $C = \Theta(N)$. Moreover, with the choice of $\lambda = \Omega(\sqrt{N})$, the right-hand side of \eqref{thm3betabound} will be $\Omega(1/\sqrt{N})$. The proof below is similar to the proof of \cite[Theorem 1]{wainwright}.

\begin{proof}
	First we note that for any $i$, $1\leq i \leq N$ and $t>0$ we have
	\begin{align*}
	\mathbb P(|e_i|\geq t) 
	& = \frac{2}{\sqrt{2\pi}\sigma} \int_t^{\infty} e^{-\frac{x^2}{2\sigma^2}}\diff{x} = \frac{2}{\sqrt{2\pi}\sigma} \int_0^{\infty} e^{-\frac{(x+t)^2}{2\sigma^2}}\diff{x} \\
	& \leq \frac{2}{\sqrt{2\pi}\sigma} e^{-\frac{t^2}{2\sigma^2}} \int_0^{\infty} e^{-\frac{x^2}{2\sigma^2}}\diff{x}  = e^{-\frac{t^2}{2\sigma^2}}.
	\end{align*}
	As at the beginning of the proof of Theorem \ref{thm1}, we have from (\ref{characterization}) that 
	\[
	B^TB\hat{x} - B^T(Bx^* +e) + \lambda\hat{z} = 0, 
	\]
	so that
	\[
	B^T B(\hat{x}-x^*)-B^T e + \lambda\hat{z} = 0.
	\]
	Consequently, 
	\[
	\begin{bmatrix}
	B_{S}^{T} \\
	B_{S^{c}}^{T}
	\end{bmatrix}
	\begin{bmatrix}
	B_{S} & B_{S^{c}}
	\end{bmatrix}
	\begin{bmatrix}
	\hat{x}_{S} - x_{S}^{*}\\
	0
	\end{bmatrix}
	-
	\begin{bmatrix}
	B_{S}^{T} \\
	B_{S^{c}}^{T}
	\end{bmatrix}
	e
	+
	\lambda
	\begin{bmatrix}
	\hat{z}_{S} \\
	\hat{z}_{S^c}
	\end{bmatrix}
	= 0.
	\]
	For the first part of the theorem, it is sufficient to prove $\norm{\hat{z}_{S^c}}_{\infty} < 1$. 
	
	We have from the top block that
	\[ B_{S}^{T}B_{S}(\hat{x}_{S} - x_{S}^{*}) - B_{S}^{T}e+\lambda\hat{z}_{S} = 0, \]
	which indicates
	\begin{equation}\label{thm3eq1}
	\hat{x}_{S} - x_{S}^{*} = (B_{S}^{T}B_{S})^{-1}(B_{S}^{T}e-\lambda \hat{z}_{S}).
	\end{equation}
	On the other hand, we have from the bottom block that
	\[ B_{S^{c}}^{T}B_{S}(\hat{x}_{S}-x_{S}^{*}) - B_{S^{c}}^{T}e+\lambda \hat{z}_{S^{c}} = 0.  \]
	Substitute $\hat{x}_{S}-x_{S}^{*}$ to have 
	\begin{equation}\label{thm3eq2}
	\hat{z}_{S^c} = B_{S^c}^{T}B_{S}(B_{S}^{T}B_{S})^{-1}\hat{z}_{S} + \frac{1}{\lambda}B_{S^c}^{T}(I-B_S(B_S^T B_S)^{-1}B_S^T)e.
	\end{equation}
	Note that each entry of $\frac{1}{\lambda}B_{S^c}^{T}(I-B_S(B_S^T B_S)^{-1}B_S^T)e$ has mean zero and variance at most $\frac{C\sigma^2}{\lambda^2}$, since $I-B_S(B_S^T B_S)^{-1}B_S^T$ is an orthogonal projection an thus has spectral norm equal to 1. Therefore, by a direct union bound we have 
	\begin{equation}
	\mathbb P\left(\,\norm{\frac{1}{\lambda}B_{S^c}^{T}(I-B_S(B_S^T B_S)^{-1}B_S^T)e}_{\infty}\geq t\right) \leq (\Phi(P_{max})-n_S)e^{-\frac{\lambda^2 t^2}{2C\sigma^2}},
	\end{equation}
	since there are $\Phi(P_{max})-n_S$ entries in $\frac{1}{\lambda}B_{S^c}^{T}(I-B_S(B_S^T B_S)^{-1}B_S^T)e$. 
	Let $t = 1 - \frac{n_S^2 M_{S,S^c}}{K}$, and we have 
	\begin{equation}
	\mathbb P\left(\,\norm{\frac{1}{\lambda}B_{S^c}^{T}(I-B_S(B_S^T B_S)^{-1}B_S^T)e}_{\infty}\geq 1 - \frac{n_S^2 M_{S,S^c}}{K}\right) \leq (\Phi(P_{max})-n_S)e^{-\frac{\lambda^2 (K-n_S^2 M_{S,S^c})^2}{2C\sigma^2 K^2}}\,,\nonumber
	\end{equation}
	so that with probability at least $1-(\Phi(P_{max})-n_S)e^{-\frac{\lambda^2 (K-n_S^2 M_{S,S^c})^2}{2C\sigma^2 K^2}}$, we have
	\begin{align*}
	\norm{\hat{z}_{S^c}}_{\infty} 
	&\leq \norm{B_{S^c}^T B_S (B_S^T B_S)^{-1}}_{\infty} + \norm{\frac{1}{\lambda}B_{S^c}^{T}(I-B_S(B_S^T B_S)^{-1}B_S^T)e}_{\infty} \\
	&< \frac{n_S^2 M_{S,S^c}}{K} + 1 - \frac{n_S^2 M_{S,S^c}}{K}\\
	&= 1.
	\end{align*}
	To prove the second part of the theorem, we first note that by Lemma \ref{lemma:2}, each entry of $(B_S^T B_S)^{-1}B_S^T e$ has mean zero and variance at most $\frac{C\sigma^2}{K^2}$. Therefore, by a direct union bound we have
	\begin{equation}
	\mathbb P\Big(\,\norm{(B_S^T B_S)^{-1}B_S^T e}_{\infty}> t\Big) \leq n_S e^{-\frac{K^2 t^2}{2C\sigma^2}}.
	\end{equation}
	Let 
	\[t = \sqrt{\frac{\lambda^2(K-n_S^2 M_{S,S^c})^2}{K^4} + \frac{2C\sigma^2(\log n_S - \log(\Phi(P_{max})-n_S))}{K^2}},
	\] 
	then with probability at least $1-(\Phi(P_{max})-n_S)e^{-\frac{\lambda^2 (K-n_S^2 M_{S,S^c})^2}{2C\sigma^2 K^2}}$, we have
	\begin{align*}
	&\norm{\hat{x} - x^*}_{\infty} 
	= \norm{\hat{x}_S - x_S^*}_{\infty} \\
	\leq&\, \norm{(B_S^T B_S)^{-1}B_S^T e}_{\infty} + \lambda\norm{(B_S^T B_S)^{-1}}_{\infty} \\
	\leq&\, \sqrt{\frac{\lambda^2(K-n_S^2 M_{S,S^c})^2}{K^4} + \frac{2C\sigma^2(\log n_S - \log(\Phi(P_{max})-n_S))}{K^2}} + \lambda\norm{(B_S^T B_S)^{-1}}_{\infty}\,.
	\end{align*}
\end{proof}

\section{Discussion and Conclusion}\label{Section Discussion}

We have proposed a novel TF analysis algorithm, RDS, which incorporates the RPT and $l^1$ penalized linear regression, to analyze complicated time series with non-sinusoidal oscillatory patterns. Numerical experiments demonstrate its usefulness in the extraction of the fundamental IF of a non-stationary signal. It alleviates some limitations of the traditional cepstrum-based de-shape algorithm, and it is supported by the provided robustness analysis of RPT to different perturbations. 
Specifically, the $l^1$ approach helps stabilize the impact of inevitable noise; the robustness of the RPT to the spectral envelop fluctuation helps reduce the inevitable low quefrency artifact issue when we apply the STCT; the fact that the PT preserves the oscillatory component energy helps resolve the ``weak fundamental component'' issue. In addition to these benefits, the computational complexity of this algorithm is reasonably fast, and can be easily scaled to long time series.

We have a comment here regarding directly applying the PT to the high frequency time series, particularly the biomedical signals. In general, any oscillatory component in a biomedical signal does not oscillate with a fixed period. Instead, its period changes from time to time. Since the PT is a global method, directly applying the PT to extract periods from such time series might not be suitable. Note that this does not contradict the provided robustness property of RPT against jitter shown in Theorem \ref{Theorem robustness to jitter}. This is because the time-varying period cannot be well approximated by the jitter effect. 

It is natural to ask if we can generalize the PT to short periods, and come out with the idea of ``short-time PT''; that is, we run the PT over a short period around the time we want to study how fast it oscillates, and stitch all PT together and generate a two-dimensional matrix that we can call the {\em time-period representation}. See Figure \ref{fig:STPT} for a preliminary result of this idea when we apply it to the ta-mECG $\mathbf f$ shown in Figure \ref{fig:1}, where $\mathbf f(i)=f(i\Delta_t)$, $f$ is the ta-mECG, $\Delta_t=1/250$ second, and $i=1,\ldots, N$. Numerically, for the $i$-th sampling time, $[\mathbf f(i-500) \,\, \mathbf f(i-499) \,\, \mathbf f(i-500) \ldots \mathbf f(i) \ldots \mathbf f(i+499) \,\, \mathbf f(i+500)]$; that is, the window is chosen to be of 4 seconds long. Then, based on the knowledge of the maternal IHR, we choose $P_{max}=200$, which is associated with $250/200=1.25$Hz. After tuning parameters (although not optimized via a grid search), the final ``time-period representation'' results with different periodicity penalization functions are shown in Figure \ref{fig:STPT}. It is clear that we can get some useful information both for the fetal IHR, but the quality of the maternal IHR information is less ideal compared with the proposed RDS. We mention that we can choose other window lengths, like 10 seconds long, but the results are worse and we do not show them.  
%
We emphasize that we do not claim that the short-time PT does not work, since we have not extensively explored its possibility. This is the direction we would explore in our future work.

Another comment following the about discussion about the short-time PT, the RDS solution we propose can be viewed as a solution sitting between PT and short-time PT in the following sense. We first delegate the time-varying oscillatory periods information to the TF domain, where the oscillation is related to the spectrum of the non-sinusoidal oscillation, and this oscillation in the TF domain is more regular and can be well modeled by the jitter effect. Then, we can benefit from this nice periodicity property in the TF domain and apply the PT to extract the fundamental IF, and hence the instantaneous period.

\begin{figure}[!htbp]
	\begin{minipage}{19.5 cm}
	\hspace{-90pt}\includegraphics[trim=20 0 40 10,clip,width=0.32\textwidth]{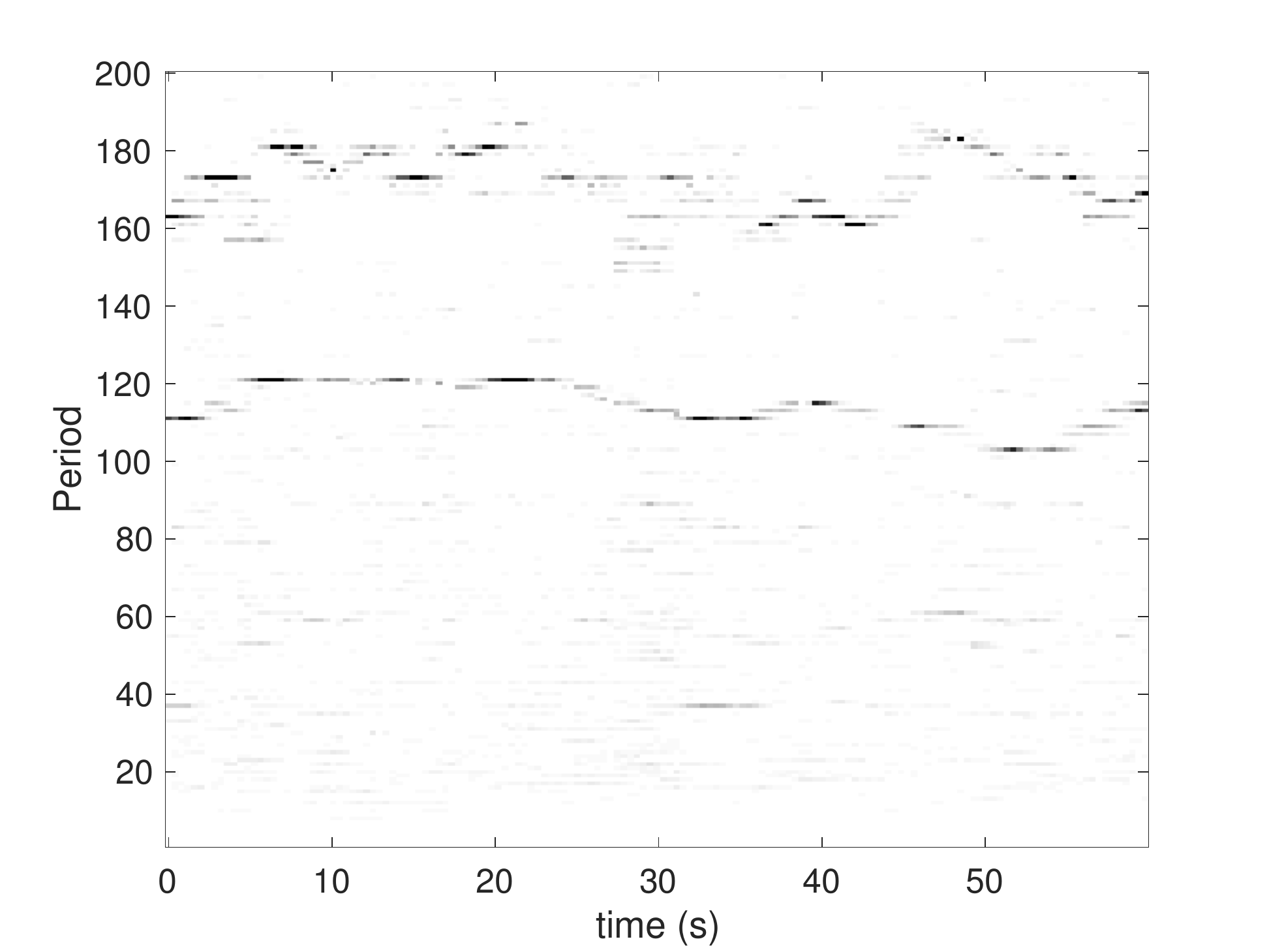}
	\includegraphics[trim=20 0 40 10,clip,width=0.32\textwidth]{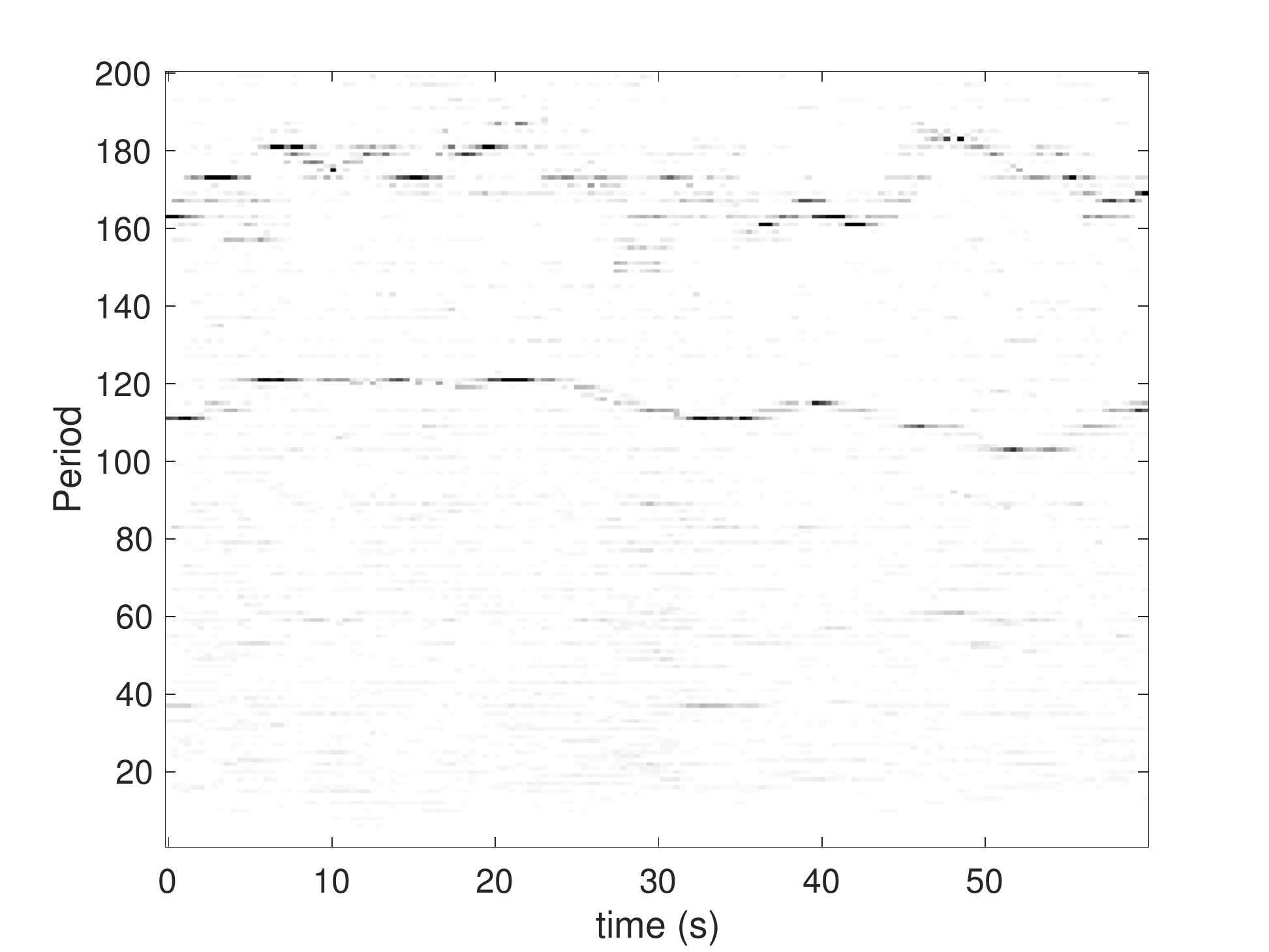}
	\includegraphics[trim=20 0 40 10,clip,width=0.32\textwidth]{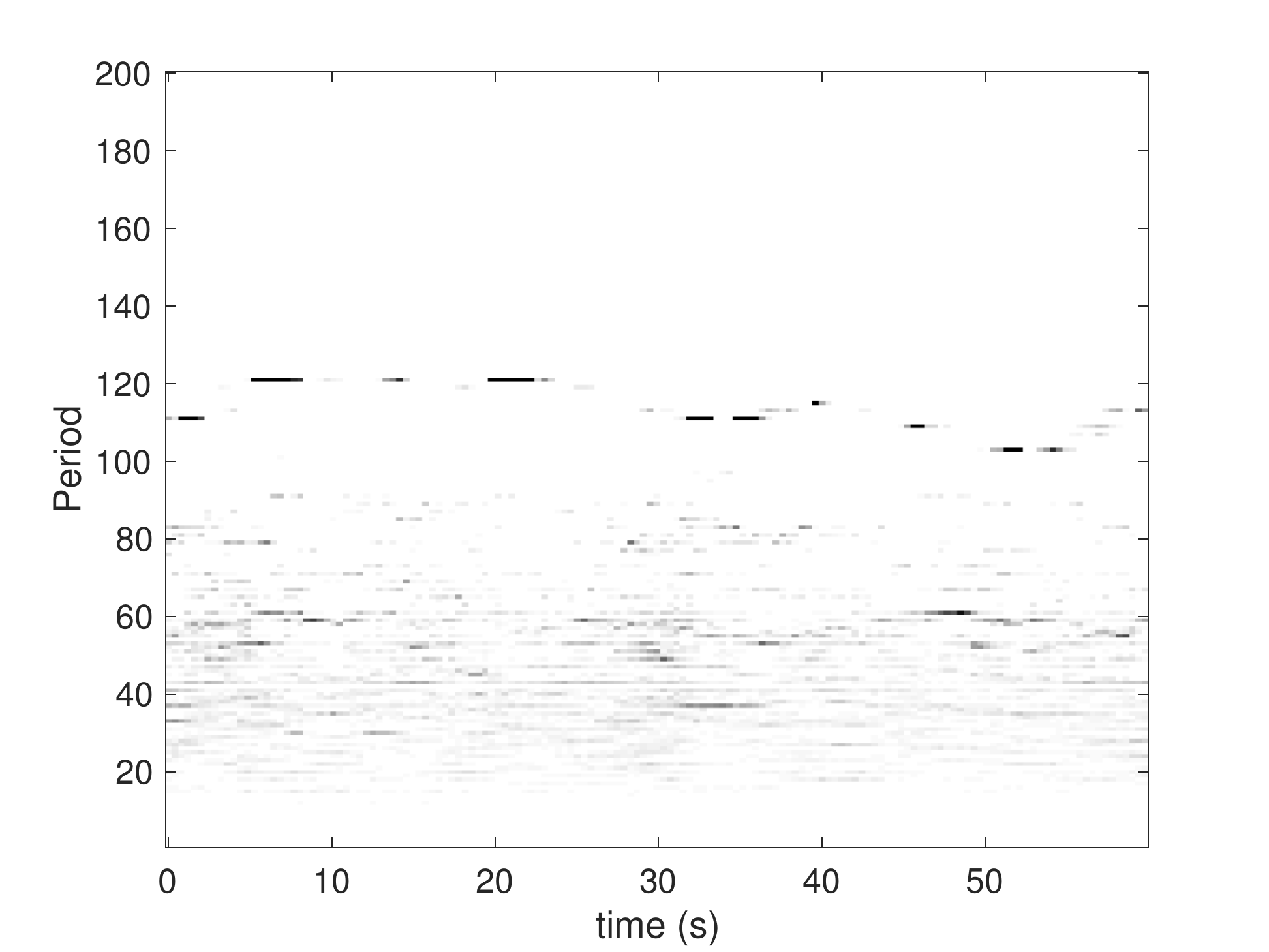}
	\end{minipage}
	\caption{A preliminary result of the short-time periodic transform (STPT) for the ta-mECG signal $f$ shown in Figure \ref{fig:1}. Here, for the $i$-th sampling time, we run the RPT on $[f((i-500)\Delta_t) \,\, f((i-499)\Delta_t) \,\, f((i-500)\Delta_t) \ldots f(i\Delta_t) \ldots f((i+499)\Delta_t) \,\, f((i+500)\Delta_t)]$ with $P_{max}=200$, which is associated with $250/200=1.25$Hz since $\Delta_t=250$. Left: $\zeta(p)=1$ with $\lambda=500$. Middle: $\zeta(p)=p$ with $\lambda=200$. Right: $\zeta(p)=p^2$ with $\lambda=10$. The curve around period 180 (about 1.39Hz) is associated with the maternal IHR, and the curve around period 120 (about 2.08Hz) is associated with the fetal IHR.}
	\label{fig:STPT}
\end{figure}

There are several future directions we will explore. From the theoretical perspective, we need to establish the statistical property of the RPT, and hence the RDS, so that we can carry out statistical inference. To the best of our knowledge, there is limited work in this direction, even for the basic PT. Note that the statistical property of de-shape algorithm is also missing.
Although we do not discuss extensively in the main article, we shall indicate the boundary problem, which is universal to all kernel-based algorithms. Around the boundary, the STFT would get distorted, and hence the RDS. How to handle this boundary effect is important and will be studied in the future work, particularly for the real-time implementation of the RDS. 
Although our theoretical results and preliminary numerical results support that the RDS resolves several limitations of the traditional de-shape approach, it is not clear how useful it is to real data. We will explore its performance in the real database and clinical problems in future work.

\section{Acknowledgement}
The authors would like to thank Shahar Kovalsky and John Malik for the discussion of numerical implementations.

\bibliographystyle{siam}
\bibliography{mybibfile.bib}

\appendix

\end{document}